\tolerance=2000

\documentclass[10pt,article,reqno]{amsart}
\usepackage{amssymb}
\usepackage{amsthm}
\usepackage{amsmath}
\usepackage{graphicx}
\usepackage{tikz}
\usepackage{tikz-3dplot}

\theoremstyle{plain}
\newtheorem{theorem}{Theorem}
\newtheorem{lem}{Lemma}[section]

\newcommand\im{\operatorname{Im}}
\newcommand\Tr{\operatorname{Tr}}
\newcommand\card{\#}
\newcommand\diam{\operatorname{diam}}

\def\irC{\mathcal{C}}

\def\R{\mathbb{R}}
\def\N{\mathbb{N}}
\def\C{\mathbb{C}}
\def\Q{\mathbb{Q}}

\def\K{\mathbb{K}}

\begin{document}

\title{Symmetries of projective spaces and spheres}
\author{Gy\"orgy P\'al Geh\'er}
\address{MTA-SZTE Analysis and Stochastics Research Group, Bolyai Institute, University of Szeged, H-6720 Szeged, Aradi v\'ertan\'uk tere 1., Hungary}
\address{MTA-DE "Lend\"ulet" Functional Analysis Research Group, Institute of Mathematics, University of Debrecen, H-4010 Debrecen, P.O. Box 12, Hungary}
\email{gehergy@math.u-szeged.hu or gehergyuri@gmail.com}
\urladdr{http://www.math.u-szeged.hu/$\sim$gehergy/}
\keywords{Projective space, quantum pure states, unit sphere, angle preserving map, Wigner symmetry, transition probability preserving map.}
\subjclass[2010]{Primary: 47B49, 47N50, 81P05, 51A05}

\begin{abstract}
Let $H$ be either a complex inner product space of dimension at least two, or a real inner product space of dimension at least three. 
Let us fix an $\alpha\in \left(0,\tfrac{\pi}{2}\right)$. 
The purpose of this paper is to characterize all bijective transformations on the projective space $P(H)$ obtained from $H$ which preserves the angle $\alpha$ between lines in both directions.
(We emphasize that we do not assume anything about other angles).
For real inner product spaces and when $H=\C^2$ we do this for every $\alpha$, and when $H$ is a complex inner product space of dimension at least three we describe the structure of these transformations for $\alpha\leq\tfrac{\pi}{4}$. 
As an application, we give an Uhlhorn-type generalization of a famous theorem of Wigner which is considered to be a cornerstone of the mathematical foundations of quantum mechanics. 
Namely, we show that under the above assumptions, every bijective map on the set of pure states of a quantum mechanical system that preserves the transition probability $\cos^2\alpha$ in both directions is a Wigner symmetry (i.e. it automatically preserves all transition probability), except for the case when $H=\C^2$ and $\alpha = \tfrac{\pi}{4}$ where an additional possibility occurs.
We note that the classical theorem of Uhlhorn is the solution for the $\alpha = \tfrac{\pi}{2}$ case.
Usually in the literature, results which are connected to Wigner's theorem are discussed under the assumption of completeness of $H$, however, here we shall remove this unnecessary hypothesis.
Our main tools are a characterization of bijective maps on unit spheres of real inner product spaces which preserve an angle in both directions, and an extension of Uhlhorn's theorem for non-complete inner product spaces.
\end{abstract}

\maketitle


\section{Introduction}

Characterizing bijective isometries between normed spaces is a classical and important area of functional anaylsis.
For instance, the Mazur--Ulam theorem asserts that every bijective isometry between real normed spaces is an affine map (i.e. a composition of a linear transformation and a translation by a vector).
A consequence of this result is that if two real normed spaces are isomorphic as metric spaces, then they are also isomorphic as vector spaces.
Another classical result in this area is the Banach--Stone theorem, which characterizes bijective linear isometries between Banach spaces of continuous functions on compact Hausdorff spaces.
In particular, the existence of such a map implies that the underlying compact Hausdorff spaces are topologically equivalent.
Since the appearance of these remarkable theorems, the structures of bijective isometries have been described for several important normed spaces, and even today this is an active area of funcitonal analysis.
The books \cite{FJ1,FJ2} give an excellent introduction to this topic.
We also mention a few important papers which are connected to this area: \cite{lin1,lin2,lin3,lin4,lin5,lin6,lin7,lin8,lin9}.

Describing the structure of bijective isometries of non-linear spaces is also very important in functional analysis.
One of the most classical example of this is the famous theorem of Wigner \cite{Ba,C2,LoMe,M,R,ShAl,W}.
Let us state the classical version of Wigner's theorem in three different ways.
First, let $H$ be a complex (or real) Hilbert space, and $P_1(H)$ be the set of all rank-one orthogonal projections acting on $H$.
In quantum mechanics rank-one orthogonal projections represent pure states of the system.
The transition probability between $P,Q \in P_1(H)$ is defined by $\Tr PQ \in [0,1]$, where $\Tr$ denotes the trace functional.
Motivated by some problems in quantum physics, Wigner studied bijective maps $\phi\colon P_1(H)\to P_1(H)$ which preserve the transition probability, i.e.
$$
\Tr \phi(P)\phi(Q) = \Tr PQ \qquad (P,Q \in P_1(H)),
$$
and he showed that these mappings are always induced by unitary or antiunitary operators (orthogonal operators in the real case) acting on $H$.
These mappings are usually called Wigner symmetries.

Second, the following formula can be easily verified:
$$
\|P-Q\| = \sqrt{1-\Tr PQ} \qquad (P,Q \in P_1(H)),
$$
where $\|\cdot\|$ denotes the usual operator norm.
Therefore Wigner's theorem can be re-phrased in the following way: it characterizes bijective isometries of $P_1(H)$ with respect to the so-called gap metric, the distance obtained from the operator norm.
The gap metric was introduced and investigated by  Sz\H{o}kefalvi-Nagy and independently by  Krein and  Krasnoselski.
This notion has a wide range of applications from pure mathematics to engineering, in particular in perturbation theory of linear operators, perturbation analysis of invariant subspaces, optimization, robust control, multi-variable control, system identification and signal processing.

Third, let $P(H)$ denote the projective space obtained from $H$, i.e. the set of all one-dimensional subspaces of $H$.
We will call a one-dimensional subspace a line, for short.
Of course the map $P_1(H)\to P(H)$, $P\mapsto \im P$ gives a natural bijective transformation.
If $0\neq v\in H$ is a vector, then $[v]$ will stand for the line generated by it.
The angle between two lines $[u], [v]\in P(H)$ is defined by
$$
\measuredangle([u],[v]) := \arccos\frac{|\langle u,v\rangle|}{\|u\|\cdot\|v\|} \in \left[0,\tfrac{\pi}{2}\right].
$$
In case when $\measuredangle([u],[v]) = \tfrac{\pi}{2}$, then instead of this we shall usually write $[u]\perp [v]$, moreover, given any subset $\mathcal{L} \subset P(H)$ we will use the notation $\mathcal{L}^\perp = \{[w] \in P(H) \colon [w]\perp [u], \; \forall \; [u] \in \mathcal{L}\}$.
An easy calculation gives that if $P,Q\in P_1(H)$, $[u] = \im P$ and $[v] = \im Q$, then we have
$$
\Tr PQ = \left(\frac{|\langle u,v\rangle|}{\|u\|\cdot\|v\|}\right)^2 = \cos^2 \measuredangle([u],[v]).
$$
Therefore, Wigner's theorem can be viewed as a characterization of bijective transformations $\phi\colon P(H)\to P(H)$ which preserves the angle between lines, i.e.
$$
\measuredangle(\phi([u]),\phi([v])) = \measuredangle([u],[v]) \qquad ([u],[v]\in P(H))
$$
We point out that the above defined angle provides a metric on $P(H)$.
This is a folklore however, since we were not able to find it in the literature, we will provide a proof in Section \ref{proj_sec}.
Therefore Wigner's theorem can be also considered as a characterization of bijective isometries of $P(H)$ with respect to the metric $\measuredangle$.

There is a huge literature of generalizing Wigner's theorem in several directions.
Perhaps the first of these was provided by Uhlhorn (\cite{U}).
He managed to show that if the dimension of the Hilbert space is at least three, then any bijective map $\phi$ on $P(H)$ (or on $P_1(H)$, repsectively) which preserves orthogonality in both directions, i.e.
$$
\phi([u])\perp \phi([v]) \;\iff\; [u]\perp [v] \qquad ([u],[v]\in P(H))
$$
(or zero transition probability in both directions, resp.), is automatically a Wigner symmetry.
This is a serious improvement, since in Uhlhorn's theorem only the preservation of the quantum logical structure is assumed, while in Wigner's theorem its complete probabilistic structure is preserved.
We point out that it is a folklore that Uhlhorn's theorem can be stated also for inner product spaces.
However, as far as we know this general version has not been published anywhere, therefore we will provide this missing version in Section \ref{sphere_sec}.
For further generalizations of Wigner's theorem on projective spaces we mention the references \cite{BrMoSt,Br,C1,C2,M0,M2,RS1,RS2,S1,S2}. 
For generalizations where transformations on Grassmann spaces or on certain classes of idempotent operators are considered we mention \cite{BJM,GS,Gy,HH,M1,S3,S4}.

In this paper, following Uhlhorn's direction, we are interested in giving a natural generalization of Wigner's theorem.
Instead of preserving orthogonality of lines (or equivalently zero transition probability between pure states) in both directions, we will assume that a fixed angle $\alpha\in\left(0,\tfrac{\pi}{2}\right)$ (or equivalently a fixed transition probability $\cos^2\alpha$) is preserved in both directions.
We point out here that this problem has been partially answered in \cite{LPS} by Li, Plevnik and \v Semrl in the special case when $0 < \alpha \leq \tfrac{\pi}{4}$ and $H$ is a real Hilbert space with $4 < \dim H < \infty$.
Their method depends heavily on these rather restrictive assumptions.
Here, by developing a novel technique, we are able to solve this problem in a much more general setting.
Namely, in the real case we will answer the question completely, i.e. for every (not necessarily complete) inner product space with dimension at least three and for all angles $\alpha\in\left(0,\tfrac{\pi}{2}\right)$.
We will also solve the problem for every angle $\alpha\in\left(0,\tfrac{\pi}{2}\right)$ in the pure qubit case, i.e. when $H = \C^2$.
This could be quite a surprise since the general form of bijective transformations on $P(\C^2)$ preserving orthogonality in both directions is irregular due to the fact that every line has a unique orthogonal complement.
For complex inner product spaces of dimension at least three we will provide a characterization for angles $\alpha \in\left(0,\tfrac{\pi}{4}\right]$.

The outline of the paper is the following: in the next section we state four theorems which are the main results of this paper.
Then in Section \ref{sphere_sec} we will prove our result about transformations on spheres preserving an angle between unit vectors in both directions. 
This will enable us to prove our Uhlhorn-type generalization of Wigner's theorem in the real case which will be given in Section \ref{proj_sec}.
In the same section we will also prove our generalization in the complex case.

\section{Statements of the main results}

Let $H$ be an inner product space. 
If $x, y\in H$ are orthogonal, then we will write $x\perp y$.
If $H$ is a real space, then we will call a map $\phi\colon P(H) \to P(H)$ a Wigner symmetry, if there exists a bijective linear isometry $O\colon H\to H$ such that we have
\begin{equation}\label{WSR}
\phi([v]) = [Ov] \quad ([v] \in P(H)).
\end{equation}
In the complex case we call a transformation $\phi\colon P(H) \to P(H)$ a Wigner symmetry, if there exists a bijective linear or conjugatelinear isometry $U\colon H\to H$ such that we have
\begin{equation}\label{WSC}
\phi([v]) = [Uv] \quad ([v] \in P(H)).
\end{equation}
(If $H$ is a complex Hilbert space, then every bijective linear or conjugatelinear isometry is a unitary or an antiunitary operator.)
Clearly, Wigner symmetries preserve the angle between lines.
Conversely, R\"atz's version of Wigner's theorem in inner product spaces states that every bijective map $\phi\colon P(H) \to P(H)$ preserving the angle between lines is a Wigner symmetry (see \cite{R}).

Now, we state our generalization of Wigner's theorem in real inner product spaces.

\begin{theorem}\label{Umain-real}
Let $H$ be a real inner product space with $3\leq \dim H$. 
Suppose that $\alpha\in\left(0,\tfrac{\pi}{2}\right)$ and that $\phi\colon P(H) \to P(H)$ is a bijective map which preserves the angle $\alpha$ between lines in both directions, i.e. $\phi$ satisfies
$$
\measuredangle([u],[v]) = \alpha \;\iff\; \measuredangle(\phi([u]),\phi([v])) = \alpha \quad ([u],[v] \in P(H)).
$$
Then $\phi$ is a Wigner symmetry.
\end{theorem}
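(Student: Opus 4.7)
My plan is to lift $\phi$ from the projective space to the unit sphere $S(H) = \{v \in H : \|v\| = 1\}$ and then invoke the sphere-level characterization of angle-preserving bijections promised in Section \ref{sphere_sec}, together with the non-complete version of Uhlhorn's theorem also developed there. Since the canonical projection $S(H) \to P(H)$, $v \mapsto [v]$, is $2$-to-$1$, I first choose one unit representative per line to produce an antipodally-equivariant bijection $\psi \colon S(H) \to S(H)$ satisfying $[\psi(v)] = \phi([v])$ and $\psi(-v) = -\psi(v)$ for every $v \in S(H)$.

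Next, using the identity $\measuredangle([u],[v]) = \min\{\measuredangle(u,v),\,\pi-\measuredangle(u,v)\}$, the hypothesis on $\phi$ translates immediately to
\[
\measuredangle(\psi(u),\psi(v)) \in \{\alpha,\pi-\alpha\} \iff \measuredangle(u,v) \in \{\alpha,\pi-\alpha\}
\]
for all $u,v \in S(H)$. So $\psi$ preserves in both directions the symmetric sphere relation ``angle $\alpha$ or $\pi-\alpha$''. I then feed $\psi$ into the sphere theorem of Section \ref{sphere_sec}, which is tailored to precisely this type of hypothesis under the assumption $\dim H \geq 3$; its conclusion will be that $\psi$ is the restriction of a linear orthogonal operator $O\colon H \to H$, possibly composed with the antipodal map (which is invisible at the projective level). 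Descending back through the projection then gives $\phi([v]) = [Ov]$ for every $[v] \in P(H)$, which is the definition \eqref{WSR} of a Wigner symmetry.

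The main difficulty, as expected, sits entirely in the sphere step. Preservation of a single angle $\alpha$ (or its supplement) is substantially weaker than preservation of all angles, and the hard part is propagating this very local rigidity into global linearity. The natural route is to build sufficiently many configurations of unit vectors at mutual angle $\alpha$ or $\pi-\alpha$ in order to detect the orthogonality relation between lines, and then to apply the extension of Uhlhorn's theorem to non-complete inner product spaces (the second main tool from Section \ref{sphere_sec}), followed by R\"atz's version of Wigner's theorem to upgrade orthogonality preservation to full angle preservation and thereby to linearity. The borderline value $\alpha = \tfrac{\pi}{4}$, where the relation ``$\alpha$ or $\pi-\alpha$'' collapses to a single angle, and the low-dimensional sections spanned by a few test vectors, will both require extra care; the assumption $\dim H \geq 3$ is what makes the construction robust enough to succeed uniformly for every $\alpha \in \left(0,\tfrac{\pi}{2}\right)$.
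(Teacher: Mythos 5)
Your overall strategy --- lift $\phi$ to an antipodally equivariant bijection $\psi$ of $S_H$ and then quote the sphere theorem --- has a genuine gap at its central step. Theorem \ref{Emain} assumes that $\psi$ preserves a \emph{single} angle $\alpha$ in both directions, whereas your lift only preserves the symmetric two-angle relation $\sphericalangle(x,y)\in\{\alpha,\pi-\alpha\}$ in both directions; for $\alpha\in\left(0,\tfrac{\pi}{2}\right)$ these two angles never coincide (they coincide only at $\alpha=\tfrac{\pi}{2}$, not at $\tfrac{\pi}{4}$ as you suggest), so the hypothesis of Theorem \ref{Emain} is simply not met. Worse, this cannot be arranged by a cleverer choice of lift: any two antipodally equivariant lifts of the same $\phi$ differ by an even sign function $\varepsilon\colon S_H\to\{-1,1\}$, and multiplying by such an $\varepsilon$ destroys preservation of the single angle $\alpha$ while leaving the set relation intact, so no lift of a general $\phi$ can be assumed to preserve $\alpha$ alone without already knowing the conclusion. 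To make your route work you would need a new sphere theorem for bijections preserving the relation ``$\alpha$ or $\pi-\alpha$'' in both directions; disentangling a pair of angles of this kind is exactly the delicate part of the problem (the paper has to perform an analogous separation for a pair $\{\beta_1,\beta_2\}$ by counting elements of sets such as $x^{(\beta_1)}\cap y^{(\beta_2)}$, and the supplementary case $\beta_1+\beta_2=\pi$ is precisely the degenerate case of Lemma \ref{sphere-0-1-element_lem}), and your proposal offers no mechanism for it beyond the unspecified plan of ``building configurations that detect orthogonality.''

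For comparison, the paper avoids the global lift altogether: it fixes a line $[v]$, identifies the cone $[v]^\alpha=\{[\cos\alpha\cdot v+\sin\alpha\cdot u]\colon u\in S_{H\ominus[v]}\}$ with the unit sphere of the codimension-one subspace $H\ominus[v]$, and checks that the angle-$\alpha$ relation between lines of the cone becomes a genuine single-angle relation $\sphericalangle(u_1,u_2)=\arccos\bigl(\tfrac{\cos\alpha}{1+\cos\alpha}\bigr)$ on that sphere when $\alpha<\tfrac{\pi}{3}$ (and a two-angle relation requiring a separate counting argument when $\alpha\geq\tfrac{\pi}{3}$); Theorem \ref{Emain} is then applied to $S_{H\ominus[v]}$, which in turn forces the further case split $\dim H=3$ versus $\dim H\geq 4$. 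Any repair of your argument will have to confront the same two-angle difficulty somewhere.
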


The proof of Theorem \ref{Umain-real} will require the study of bijective transformations of the unit sphere $S_H = \{h\in \|h\| = 1\}$ of a real inner product space $H$ which preserve an angle in both directions.
The angle between $x,y\in S_H$ is defined by 
$$
\sphericalangle(x,y) = \arccos\langle x,y\rangle \in [0,\pi].
$$
One could define the angle between any two non-zero vectors, but we will not need it here.
We note that writing $\sphericalangle(x,y) = \tfrac{\pi}{2}$ is the same as $x\perp y$.
The study of such maps was initiated by U. Everling in \cite{E}, where he proved a result for finite and at least three dimensional Hilbert spaces and for angles less than or equal to $\tfrac{\pi}{2}$.
Apparently, he was motivated by the surprising Beckman--Quarles theorem \cite{BQ} which states that every map on a finite dimensional real Hilbert space of dimension at least two preserving the unit distance is automatically an isometry.
However, we mention that in the statment of Everling's theorem (and also in the Beckman--Quarles theorem) the preservation property is only assumed in one direction, and the bijectivity property is also relaxed.
Though, usually in infinite dimensions, we cannot expect a regular structure for such general maps (see \cite[Section 1]{BQ} for an example).

Now, we state our improvement of Everling's theorem.

\begin{theorem}\label{Emain}
Let $H$ be a real inner product space with $3\leq \dim H$, and $0 < \alpha < \pi$.
Assume that $\psi \colon S_H \to S_H$ is a bijective transformation which preserves the angle $\alpha$ in both directions, i.e. we have
$$
\sphericalangle(x,y) = \alpha \;\iff\; \sphericalangle(\psi(x),\psi(y)) = \alpha \quad (x,y\in S_H).
$$
Then there exists a bijective linear isometry $R\colon H\to H$ such that 
\begin{itemize}
\item[(i)] if $\alpha \neq \tfrac{\pi}{2}$, then we have
\begin{equation}\label{sphere_nice-eq}
\psi(x) = Rx \quad (x\in S_H),
\end{equation}
\item[(ii)] if $\alpha = \tfrac{\pi}{2}$, then we have
$$
\psi(x) \in \{-Rx, Rx\} \quad (x\in S_H).
$$
\end{itemize}
\end{theorem}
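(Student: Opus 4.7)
My approach is to split the argument according to whether $\alpha = \tfrac{\pi}{2}$, and to reduce both cases to orthogonality preservation on $S_H$ combined with the inner-product-space Uhlhorn theorem proved elsewhere in Section~\ref{sphere_sec}. For part (ii), the hypothesis says $\psi$ preserves $\perp$ in both directions, so for each $x \in S_H$ the sets $\{y \in S_H : y \perp x\}$ and $\{y \in S_H : y \perp -x\}$ coincide and map under $\psi$ onto both $\{z \in S_H : z \perp \psi(x)\}$ and $\{z \in S_H : z \perp \psi(-x)\}$. Hence $\psi(x)$ and $\psi(-x)$ span the same line, and bijectivity forces $\psi(-x) = -\psi(x)$. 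Thus $\psi$ descends to an orthogonality-preserving bijection of $P(H)$, and Uhlhorn produces a linear isometry $R$ with $\psi(x) \in \{Rx, -Rx\}$, which is part (ii).

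For part (i), the central geometric tool is the $\alpha$-cap $C_\alpha(x) := \{y \in S_H : \langle y, x\rangle = \cos\alpha\}$: the hypothesis says precisely $\psi(C_\alpha(x)) = C_\alpha(\psi(x))$. A direct computation shows that if $\sphericalangle(x, y) = \beta$ and $\alpha \leq \tfrac{\pi}{2}$, then $C_\alpha(x) \cap C_\alpha(y)$ is empty when $\beta > 2\alpha$, contains more than one element when $\beta < 2\alpha$, and equals the single point $\tfrac{x+y}{2\cos\alpha}$ when $\beta = 2\alpha$ (the case $\alpha > \tfrac{\pi}{2}$ is analogous with $2\alpha$ replaced by $2(\pi-\alpha)$). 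Since $\psi$ bijects $C_\alpha(x) \cap C_\alpha(y)$ onto $C_\alpha(\psi(x)) \cap C_\alpha(\psi(y))$, comparing cardinalities forces preservation of $\sphericalangle = 2\alpha$ in both directions, and in fact preservation of the ``$\alpha$-midpoint'' map $(x, y) \mapsto \tfrac{x+y}{2\cos\alpha}$ on pairs at angle $2\alpha$. Iterating the same cap-intersection analysis with caps of various radii---for instance intersecting $C_{(k-1)\alpha}(x)$ with $C_\alpha(y)$ and using that angle $(k-2)\alpha$ is already preserved by induction to distinguish the two singleton cases---one obtains preservation of $\sphericalangle = k\alpha$ for every positive integer $k$ with $k\alpha \leq \pi$; a further, more intricate geometric argument is then used to promote this to full $\perp$-preservation.

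Once $\perp$-preservation is in hand, part (ii) applied to $\psi$ produces a linear isometry $R$ with $\psi(x) = \varepsilon(x) R x$ for some sign function $\varepsilon \colon S_H \to \{+1, -1\}$. Taking inner products on pairs at angle $\alpha$ forces $\varepsilon(x) = \varepsilon(y)$ whenever $\sphericalangle(x, y) = \alpha$; the graph on $S_H$ whose edges are the angle-$\alpha$ pairs is connected for $\dim H \geq 3$ and $\alpha \in (0, \pi)$ (any two vectors can be joined by a short chain of $\alpha$-steps by working inside a three-dimensional subspace containing them both), so $\varepsilon$ is globally constant and can be absorbed into $R$. The main obstacle will be the iteration step producing $\perp$-preservation for every $\alpha \in \left(0, \tfrac{\pi}{2}\right) \setminus \{\tfrac{\pi}{4}\}$: at $\alpha = \tfrac{\pi}{4}$ the cap analysis already yields it directly, but in general $\tfrac{\pi}{2}$ is not an integer multiple of $\alpha$, so a more delicate Beckman--Quarles-style construction---exploiting the $\alpha$-midpoint structure and the rigidity of $S_H$ in dimension at least three, without any continuity assumption on $\psi$---will be required.
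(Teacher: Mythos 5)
Your overall architecture is reasonable and your endgame (prove $\perp$-preservation, invoke part (ii) to get $\psi(x)=\varepsilon(x)Rx$, then kill the sign function $\varepsilon$ using connectivity of the angle-$\alpha$ graph and $\cos\alpha\neq 0$) is a valid and slightly different route from the paper, which instead shows that $\psi$ preserves \emph{every} angle and then extends $\psi$ positively homogeneously to $H$ and applies Mazur--Ulam. The $\alpha=\tfrac{\pi}{2}$ case and the cap-intersection analysis giving preservation of $2\alpha$ and of $k\alpha$ are also sound and match the paper's Lemmas \ref{sphere-0-1-element_lem} and \ref{multiple-sphericalangle-eq-leq_lem}.

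However, there is a genuine gap, and you have flagged it yourself: the passage from ``$\psi$ preserves $k\alpha$ for all admissible integers $k$'' to ``$\psi$ preserves $\perp$'' is the entire difficulty of the theorem, and your proposal contains no argument for it --- only the remark that a ``more delicate Beckman--Quarles-style construction will be required.'' The set $\{k\alpha\colon k\in\N\}\cap(0,\pi]$ is a finite discrete set that contains $\tfrac{\pi}{2}$ only when $\alpha$ divides $\tfrac{\pi}{2}$, so no amount of iterating integer multiples (or of combining them via Lemma \ref{sphere-diff-sum_lem}, which only yields sums and differences of already-preserved angles, hence again integer multiples of $\alpha$) can reach orthogonality in general. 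The paper's key device, which your proposal is missing, is Lemma \ref{main-sphere_lem}: the angle $\gamma_0=\arccos\bigl(\tfrac{4\cos^2\alpha}{\cos\alpha+1}-1\bigr)$ is characterized intrinsically by the condition $\card\bigl(x^{(\alpha)}\cap y^{(\alpha)}\cap u^{(\alpha)}\bigr)=1$ for all $u\in x^{(\alpha)}\cap y^{(\alpha)}$ (equivalently $\diam_\sphericalangle\bigl(x^{(\alpha)}\cap y^{(\alpha)}\bigr)=\alpha$), hence is preserved; crucially $\gamma_0$ lies \emph{strictly between} $\alpha$ and $2\alpha$, so the preserved angles $\gamma_0-\alpha$ and $2\alpha-\gamma_0$ are positive and one of them is at most $\tfrac{\alpha}{2}$. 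Iterating produces preserved angles tending to $0$, and Lemma \ref{small-sphericalangles_lem} then gives the full isometry. Without a mechanism for manufacturing a preserved angle that is not an integer multiple of $\alpha$, your induction stalls. A second, smaller omission: the obtuse range $\alpha\in\bigl(\tfrac{\pi}{2},\pi\bigr)$ is not addressed at all in your sketch; the paper needs several separate reductions there (Cases 2--6 of its proof, including a dedicated argument for $\alpha=\tfrac{2\pi}{3}$, where $2\pi-2\alpha=\alpha$ and the reduction to an acute angle degenerates).
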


We mention that in Theorems \ref{Umain-real} and \ref{Emain} one could provide characterizations also in the case when $\dim H = 2$.
However, the structures of such maps in this low dimensional case are not regular and therefore they would be useless for our purposes.

Next, let us consider a complex inner product space $H$ and an angle $\alpha\in\left(0,\tfrac{\pi}{2}\right)$.
Since Uhlhorn's theorem does not hold for two-dimensional spaces, it could be quite surprising that if $H = \C^2$, then every bijection preserving the angle $\alpha$ in both directions is actually a Wigner symmetry, except when $\alpha = \tfrac{\pi}{4}$.
In other words, bijections defined on the set of pure qubit states which preserve the transition probability $\cos^2\alpha$ have a regular structure.

For any $[u]\in P(\C^2)$ the symbol $[u]^\perp$ denotes the unique line which is orthogonal to $[u]$.
We state our theorem about pure qubit states.

\begin{theorem}\label{Umain-complex-2d}
Suppose that $\alpha\in\left(0,\tfrac{\pi}{2}\right)$ and that $\phi\colon P(\C^2) \to P(\C^2)$ is a bijective map such that
$$
\measuredangle([u],[v]) = \alpha \;\iff\; \measuredangle(\phi([u]),\phi([v])) = \alpha \quad ([u],[v] \in P(\C^2))
$$
holds.
Then 
\begin{itemize}
\item[(i)] in the $\alpha \neq \tfrac{\pi}{4}$ case $\phi$ is a Wigner symmetry,
\item[(ii)] in the $\alpha = \tfrac{\pi}{4}$ case there exists a unitary or an antiunitary operator $U\colon \C^2\to \C^2$ such that
$$
\phi([v]) \in \left\{[Uv],[Uv]^\perp\right\} \quad ([v] \in P(\C^2)).
$$
\end{itemize}
\end{theorem}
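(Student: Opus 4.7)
The plan is to exploit the Bloch sphere realization of $P(\C^2)$. Identify each $[v]\in P(\C^2)$ with the rank-one orthogonal projection $P_v = vv^*/\|v\|^2$, and write $P_v = \tfrac{1}{2}(I + \mathbf{x}\cdot\boldsymbol{\sigma})$, where $\boldsymbol{\sigma}=(\sigma_1,\sigma_2,\sigma_3)$ is the vector of Pauli matrices and $\mathbf{x}\in S^2\subset\R^3$. This gives a bijection $\Phi\colon P(\C^2)\to S^2$, and a short trace computation using $\Tr(\sigma_i\sigma_j)=2\delta_{ij}$ yields
\begin{equation*}
\cos^2 \measuredangle([u],[v]) = \Tr(P_uP_v) = \tfrac{1}{2}\bigl(1 + \Phi([u])\cdot \Phi([v])\bigr),
\end{equation*}
so that $\measuredangle([u],[v]) = \alpha$ holds in $P(\C^2)$ if and only if $\sphericalangle(\Phi([u]),\Phi([v])) = 2\alpha$ holds in $S^2$. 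In particular, the antipodal map $\mathbf{x}\mapsto -\mathbf{x}$ on $S^2$ corresponds (taking $\alpha=\tfrac{\pi}{2}$) exactly to $[v]\mapsto[v]^\perp$.

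Next, transport $\phi$ to $\psi:=\Phi\circ\phi\circ\Phi^{-1}$, a bijection of $S^2$ which preserves the angle $2\alpha\in(0,\pi)$ in both directions. Since $\dim\R^3=3$, Theorem \ref{Emain} applies: there exists a bijective linear isometry $R\in O(3)$ with either $\psi(x)=Rx$ for every $x\in S^2$ (when $2\alpha\ne\tfrac{\pi}{2}$, i.e.\ $\alpha\ne\tfrac{\pi}{4}$), or $\psi(x)\in\{Rx,-Rx\}$ for every $x\in S^2$ (when $2\alpha=\tfrac{\pi}{2}$).

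The next step is to lift $R$ back to the complex side. Because $SU(2)\to SO(3)$ is a (two-to-one) surjection, every proper rotation of the Bloch sphere is induced by a unitary operator on $\C^2$. On the other hand, complex conjugation is antiunitary and one checks (using $\overline{\sigma_1}=\sigma_1$, $\overline{\sigma_2}=-\sigma_2$, $\overline{\sigma_3}=\sigma_3$) that it induces the reflection $(x_1,x_2,x_3)\mapsto(x_1,-x_2,x_3)$ on the Bloch sphere; composing with this reflection shows that every element of $O(3)\setminus SO(3)$ is induced by an antiunitary operator. Hence there exists a unitary or antiunitary $U\colon\C^2\to\C^2$ whose action on $S^2$ is precisely $R$.

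Combining these ingredients finishes the argument. In case (i), $\psi(\Phi([v]))=R\Phi([v])=\Phi([Uv])$ for every $[v]$, so $\phi([v])=[Uv]$ and $\phi$ is a Wigner symmetry. In case (ii), $\psi(\Phi([v]))\in\{R\Phi([v]),-R\Phi([v])\}=\{\Phi([Uv]),-\Phi([Uv])\}$, and invoking the antipodal-orthogonal correspondence established above yields $\phi([v])\in\{[Uv],[Uv]^\perp\}$, which is exactly the claim. The main technical obstacle I anticipate is the bookkeeping that realises every element of $O(3)$ as the Bloch action of a unitary or antiunitary operator on $\C^2$, together with the verification that antipodes on $S^2$ correspond to orthogonal lines in $P(\C^2)$; once these are in place, Theorem \ref{Umain-complex-2d} is just a translation, via $\Phi$, of Theorem \ref{Emain} applied in dimension three.
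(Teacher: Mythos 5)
Your proposal is correct and follows essentially the same route as the paper: the Bloch sphere identification with the angle-doubling formula, an application of Theorem \ref{Emain} to the transported map on $S^2$, and the observation that antipodal points correspond to orthogonal lines. The only cosmetic difference is in the final step, where you lift $R\in O(3)$ explicitly via the $SU(2)\to SO(3)$ covering (plus conjugation for the orientation-reversing part), whereas the paper simply notes that the conjugated map $\rho^{-1}\circ\psi\circ\rho$ preserves all angles between lines and is therefore a Wigner symmetry by R\"atz's version of Wigner's theorem; both justifications are valid.
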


Finally, our last result is a generalization of Wigner's theorem for complex inner product spaces, which we state below.

\begin{theorem}\label{Umain-complex}
Suppose that $H$ is a complex inner product space, $3\leq \dim H$ and $\alpha\in\left(0,\tfrac{\pi}{4}\right]$.
We assume that $\phi\colon P(H) \to P(H)$ is a bijective map such that
$$
\measuredangle([u],[v]) = \alpha \;\iff\; \measuredangle(\phi([u]),\phi([v])) = \alpha \quad ([u],[v] \in P(H))
$$
holds.
Then $\phi$ is a Wigner symmetry.
\end{theorem}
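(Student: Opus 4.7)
The plan is to reduce Theorem \ref{Umain-complex} to the Uhlhorn-type generalization of Wigner's theorem for complex inner product spaces which will be established in Section \ref{sphere_sec}. That result states that when $\dim_\C H\geq 3$, every bijection $\phi \colon P(H)\to P(H)$ preserving orthogonality of lines in both directions is automatically a Wigner symmetry. Thus it suffices to prove that our angle-$\alpha$-preserving bijection $\phi$ also preserves the relation $\perp$ in both directions.

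The key step is therefore to deduce
\[
\measuredangle([u],[v]) = \tfrac{\pi}{2} \;\iff\; \measuredangle(\phi([u]),\phi([v])) = \tfrac{\pi}{2}\qquad ([u],[v]\in P(H))
\]
from the analogous property for the angle $\alpha$. My intended approach is combinatorial. For each $[w] \in P(H)$ write $C_\alpha([w]) := \{[x]\in P(H) : \measuredangle([w],[x]) = \alpha\}$ for the $\alpha$-cone at $[w]$, and note that the hypothesis on $\phi$ gives $\phi(C_\alpha([w])) = C_\alpha(\phi([w]))$. The goal is to characterize the relation $[u]\perp [v]$ purely in terms of the incidence pattern of such cones, so that it is automatically transferred across $\phi$. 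The assumption $\alpha \leq \tfrac{\pi}{4}$ is crucial here, because it allows chains of angle-$\alpha$ steps whose cumulative angle can reach $\tfrac{\pi}{2}$; an extremal configuration in the triangle inequality for the angle metric on $P(H)$ (which is indeed a metric, cf.\ Section \ref{proj_sec}) should single out orthogonality.

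A complementary route, perhaps more robust, would pass through the real case Theorem \ref{Umain-real}: isolate suitable three-dimensional real inner product subspaces of $H$ on which $\phi$ induces, via an appropriate lift or selection, a map satisfying the hypotheses of the real theorem, and patch the resulting local conclusions to obtain orthogonality preservation on all of $P(H)$. I expect the main obstacle of the proof to be precisely this passage from angle-$\alpha$ data to angle-$\tfrac{\pi}{2}$ data; however it is carried out, it is the step where the hypotheses $\alpha\leq \tfrac{\pi}{4}$ and $\dim_\C H\geq 3$ must interact decisively. Once $\phi$ is shown to preserve orthogonality, the Uhlhorn-type theorem for complex inner product spaces finishes the argument immediately, producing a bijective linear or conjugate-linear isometry $U \colon H\to H$ with $\phi([v]) = [Uv]$, i.e.\ a Wigner symmetry.
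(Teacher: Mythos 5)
There is a genuine gap: the entire content of the theorem is the step you defer, namely recovering the relation $\perp$ (or some other exactly-preserved quantity) from the single preserved angle $\alpha$, and the mechanism you suggest for it does not work. Chains of angle-$\alpha$ steps only control the relation $\measuredangle([u],[v])\leq j\alpha$ via the triangle inequality (this is exactly Lemma \ref{multiple-measuredangle-leq_lem}), and such chains can single out the value $\tfrac{\pi}{2}$ only when $\tfrac{\pi}{2}$ is an integer multiple of $\alpha$; for a generic $\alpha\in\left(0,\tfrac{\pi}{4}\right)$ the ``extremal configuration'' you invoke does not exist, and no incidence pattern of the cones $C_\alpha([w])$ built from chains alone will isolate orthogonality. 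The alternative route through Theorem \ref{Umain-real} is likewise only named, not constructed: there is no canonical real three-dimensional subspace attached to the complex data, and the paper does not proceed this way.

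For comparison, the paper's proof for $0<\alpha<\tfrac{\pi}{4}$ does \emph{not} reduce to Uhlhorn at all. Its key tool is Lemma \ref{proj-crucial_lem}, an explicit computation of $\diam_\measuredangle\left([v]^\alpha\cap[w]^\alpha\right)$ as a strictly decreasing function of $\gamma=\measuredangle([v],[w])$; the unique $\gamma_0\in(\alpha,2\alpha)$ with diameter equal to $\alpha$ is then characterized by the condition $\card\left([u]^\alpha\cap[v]^\alpha\cap[w]^\alpha\right)\leq 1$ for all $[u]\in[v]^\alpha\cap[w]^\alpha$, hence is preserved by $\phi$. Combining with the preserved angle $2\alpha$ (Lemma \ref{multiple-measuredangle_lem}) and the sum/difference Lemma \ref{proj-diff-sum_lem}, one of the preserved angles $\gamma_0-\alpha$, $2\alpha-\gamma_0$ is at most $\tfrac{\alpha}{2}$, and iterating produces preserved angles tending to $0$; Lemma \ref{small-measuredangles_lem} (i.e.\ R\"atz's form of Wigner's theorem) then concludes. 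Only in the boundary case $\alpha=\tfrac{\pi}{4}$ does the paper argue as you propose, reducing to Uhlhorn via the characterization that $[v]\perp[w]$ holds exactly when $\card\left([u]^{\pi/4}\cap[v]^{\pi/4}\cap[w]^{\pi/4}\right)=2$ for every $[u]\in[v]^{\pi/4}\cap[w]^{\pi/4}$ --- a specific cardinality criterion, not a triangle-inequality extremal one. To make your plan into a proof you would need to supply, at minimum, an analogue of these quantitative cone-intersection lemmas.
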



\section{Symmetries of speheres}\label{sphere_sec}

The aim of this section is to verify Theorem \ref{Emain}, but before that we need to prove several lemmas.
We note that our method partially follows that of Everling's, however at several points it differs essentially.
Let $H$ be a real inner product space of dimension at least three.
It is known that the angle $\sphericalangle$ defines a metric on $S_H$.
A map $\psi\colon S_H \to S_H$ is said to be a $\sphericalangle$-isometry if we have 
$$
\sphericalangle(x,y) = \sphericalangle(\psi(x),\psi(y)) \quad (x,y\in S_H),
$$
or equivalently
$$
\langle x,y\rangle = \langle \psi(x),\psi(y)\rangle \quad (x,y\in S_H).
$$

We begin with the following characterization of bijective $\sphericalangle$-isometries.

\begin{lem}
Let $H$ be a real inner product space with $\dim H \geq 3$.
If $\psi \colon S_H \to S_H$ is a bijective $\sphericalangle$-isometry, then \eqref{sphere_nice-eq} is satisfied with some bijective linear isometry $R\colon H\to H$.
\end{lem}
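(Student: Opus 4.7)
The plan is to extend $\psi$ radially to all of $H$ and verify that the resulting map is a bijective linear isometry. Concretely, I define $R\colon H\to H$ by $R(0)=0$ and $R(h)=\|h\|\,\psi(h/\|h\|)$ for $h\neq 0$. Then $R$ manifestly extends $\psi$, preserves the norm, and is surjective because $\psi$ is a bijection of $S_H$. The real content lies in establishing additivity; scalar homogeneity will then follow from a single sign identity.

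A preliminary observation is that $\psi(-x)=-\psi(x)$ for every $x\in S_H$. Indeed, $\langle\psi(-x),\psi(x)\rangle=\langle -x,x\rangle=-1$, and the equality case of Cauchy--Schwarz forces $\psi(-x)$ to be a negative scalar multiple of $\psi(x)$; since both vectors are of unit norm, $\psi(-x)=-\psi(x)$.

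The central claim is the following: for linearly independent $x,y\in S_H$ and scalars $\alpha,\beta\in\R$ with $\alpha x+\beta y\in S_H$, one has $\psi(\alpha x+\beta y)=\alpha\psi(x)+\beta\psi(y)$. Set $w:=\alpha\psi(x)+\beta\psi(y)$. Since $\psi$ preserves inner products on $S_H$, the Gram matrix of $\{\psi(x),\psi(y)\}$ coincides with that of $\{x,y\}$, whence $\|w\|^2=\alpha^2+2\alpha\beta\langle x,y\rangle+\beta^2=\|\alpha x+\beta y\|^2=1$. Taking inner products against $\psi(x)$ and $\psi(y)$, both $w$ and $\psi(\alpha x+\beta y)$ produce the same two values $\alpha+\beta\langle x,y\rangle$ and $\alpha\langle x,y\rangle+\beta$. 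Hence $\psi(\alpha x+\beta y)-w$ is orthogonal to $\mathrm{span}\{\psi(x),\psi(y)\}$, and in particular to $w$. Pythagoras then yields
\[
1=\|\psi(\alpha x+\beta y)\|^2=\|w\|^2+\|\psi(\alpha x+\beta y)-w\|^2=1+\|\psi(\alpha x+\beta y)-w\|^2,
\]
which forces $\psi(\alpha x+\beta y)=w$.

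Additivity of $R$ is now immediate: for linearly independent $u,v\in H$ (so $u+v\neq 0$), apply the central claim with $x=u/\|u\|$, $y=v/\|v\|$, $\alpha=\|u\|/\|u+v\|$, $\beta=\|v\|/\|u+v\|$ and multiply through by $\|u+v\|$; the linearly dependent case reduces to the sign identity. Positive scalar homogeneity is built into the definition of $R$, and negative scalar homogeneity uses the sign identity once more. The main obstacle in this plan is the orthogonal decomposition step inside the central claim: once one identifies the candidate $w$, notices that $\psi(\alpha x+\beta y)-w$ must lie orthogonal to $\mathrm{span}\{\psi(x),\psi(y)\}$, and recalls that $\psi(\alpha x+\beta y)$ is already a unit vector, Pythagoras closes the argument and the remaining verifications are routine.
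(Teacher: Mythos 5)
Your proof is correct, but it takes a genuinely different route from the paper. Both arguments start with the same radial extension $R(h)=\|h\|\,\psi(h/\|h\|)$, $R(0)=0$, and both note that $R$ preserves inner products on all of $H$. The paper then deduces from $\langle Rx,Ry\rangle=\langle x,y\rangle$ that $\|Rx-Ry\|=\|x-y\|$, i.e.\ that $R$ is a bijective isometry of the normed space $H$, and invokes the Mazur--Ulam theorem to obtain linearity. You instead prove linearity directly and elementarily: for $\alpha x+\beta y\in S_H$ you compare $\psi(\alpha x+\beta y)$ with the candidate $w=\alpha\psi(x)+\beta\psi(y)$, check that both are unit vectors with identical inner products against $\psi(x)$ and $\psi(y)$, and conclude $\psi(\alpha x+\beta y)=w$ by the Pythagorean identity; the sign identity $\psi(-x)=-\psi(x)$ handles the degenerate collinear cases. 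This is essentially the standard proof that any inner-product-preserving map between real inner product spaces is automatically linear, and it exploits the full strength of the hypothesis (preservation of $\langle\cdot,\cdot\rangle$, not merely of distances), whereas the paper deliberately passes to the weaker isometry property so as to quote a black-box theorem. Your version is longer on the page but self-contained and does not need Mazur--Ulam (nor, in fact, the bijectivity of $\psi$ for the linearity step); the paper's version is shorter at the cost of a heavier external ingredient. All the individual steps you give --- the norm computation for $w$, the orthogonality of $\psi(\alpha x+\beta y)-w$ to $\mathrm{span}\{\psi(x),\psi(y)\}$, the reduction of additivity of $R$ to the central claim, and surjectivity of $R$ from surjectivity of $\psi$ --- check out.
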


\begin{proof}
We define the following bijective map:
$$
R\colon H\to H, \quad Rx = \left\{ \begin{matrix}
\|x\|\cdot \psi\left(\tfrac{1}{\|x\|}x\right) & \text{if } x\neq 0, \\
0 & \text{if } x = 0.
\end{matrix} \right.
$$
We would like to show that $R$ is a linear isometry. 
We obviously have $\langle Rx, Ry \rangle = \langle x, y \rangle$ $(x,y \in H)$. 
Therefore we obtain $\|Rx - Ry\|^2 = \langle Rx, Rx \rangle + \langle Ry, Ry \rangle - 2 \langle Rx, Ry \rangle = \langle x, x \rangle + \langle y, y \rangle - 2 \langle x, y \rangle = \|x - y\|^2$, hence $R$ is an isometry.
By the famous Mazur--Ulam theorem we immediately obtain the linearity of $R$.
Finally, it is easy to see that \eqref{sphere_nice-eq} is satisfied with $R$.
\end{proof}

For some vectors $z_1,\dots z_n\in H$ the symbol $[z_1,\dots z_n]$ will stand for the generated subspace in $H$.
If $\mathcal{A} \subseteq S_H$, then we will use the notation 
$$
\mathcal{A}^{(\alpha)} := \{y \in S_H \colon \sphericalangle(x,y) = \alpha \;\, \forall\, x\in \mathcal{A}\}. 
$$
In case when $\mathcal{A} = \{x\}$, we will simply write $x^{(\alpha)}$ instead of $\{x\}^{(\alpha)}$.
Next, we investigate the set $x^{(\alpha)} \cap y^{(\beta)}$ which will be later utilized in order to obtain other angles which are preserved by $\psi$.

\begin{lem}\label{sphere-0-1-element_lem}
Let $H$ be a real inner product space with $\dim H \geq 3$.
Let $x,y\in S_H$, $0 < \alpha < \beta < \pi$, and assume that $\gamma := \sphericalangle(x,y) \in (0,\pi]$. 
Then the following equivalences are satisfied:
\begin{equation}\label{a}
x^{(\alpha)} \cap y^{(\beta)} = \emptyset \;\iff\; 
\left\{ 
\begin{matrix}
\gamma > \alpha+\beta \text{ or } \gamma < \beta-\alpha, & \text{if } \alpha+\beta < \pi,\\
\gamma < \beta-\alpha, & \text{if } \alpha+\beta = \pi,\\
\gamma > 2\pi - \alpha-\beta \text{ or } \gamma < \beta-\alpha, & \text{if } \alpha+\beta > \pi,
\end{matrix} 
\right.
\end{equation}
and
\begin{equation}\label{b}
\card(x^{(\alpha)} \cap y^{(\beta)}) = 1 \;\iff\; 
\left\{ 
\begin{matrix}
\gamma \in \{\alpha+\beta, \beta-\alpha\}, & \text{if } \alpha+\beta < \pi,\\
\gamma = \beta-\alpha, & \text{if } \alpha+\beta = \pi,\\
\gamma \in \{2\pi - \alpha-\beta, \beta-\alpha\}, & \text{if } \alpha+\beta > \pi.
\end{matrix} 
\right.
\end{equation}
Moreover, if $\alpha \neq \tfrac{\pi}{2}$, then we have
$$
x^{(\alpha)} \cap y^{(\alpha)} = \emptyset \;\iff\; 
\left\{ 
\begin{matrix}
\gamma > 2\alpha , & \text{if } \alpha < \tfrac{\pi}{2},\\
\gamma > 2\pi - 2\alpha, & \text{if } \alpha > \tfrac{\pi}{2},
\end{matrix} 
\right.
$$
and
$$
\card(x^{(\alpha)} \cap y^{(\alpha)}) = 1 \;\iff\; 
\left\{ 
\begin{matrix}
\gamma = 2\alpha , & \text{if } \alpha < \tfrac{\pi}{2},\\
\gamma = 2\pi - 2\alpha, & \text{if } \alpha > \tfrac{\pi}{2}.
\end{matrix} 
\right.
$$
\end{lem}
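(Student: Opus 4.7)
The plan is to parametrize the intersection $x^{(\alpha)}\cap y^{(\beta)}$ by decomposing a candidate $z\in S_H$ along the plane $[x,y]$ and its orthogonal complement. Writing $z = ax + by + w$ with $w\perp x$ and $w\perp y$, the two angle conditions become the linear system
\begin{align*}
a + b\cos\gamma &= \cos\alpha,\\
a\cos\gamma + b &= \cos\beta.
\end{align*}
When $0 < \gamma < \pi$ the determinant $\sin^2\gamma$ is non-zero, so $a$ and $b$ are uniquely determined, and the unit-length requirement then pins down $\|w\|^2$. A short calculation (using $a^2 + 2ab\cos\gamma + b^2 = a\cos\alpha + b\cos\beta$) yields
$$
\|w\|^2 \;=\; \frac{1 + 2\cos\alpha\cos\beta\cos\gamma - \cos^2\alpha - \cos^2\beta - \cos^2\gamma}{\sin^2\gamma},
$$
which is the Gram determinant of the three unit vectors, divided by $\sin^2\gamma$.

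The next step is to read off the cardinality from this formula. If $\|w\|^2 < 0$ the intersection is empty; if $\|w\|^2 = 0$ then $z = ax + by$ is uniquely determined; and if $\|w\|^2 > 0$ then $w$ can be any vector of the prescribed length in $[x,y]^\perp$, which has dimension $\dim H - 2 \geq 1$, giving infinitely many solutions. Thus the claim reduces to translating the condition on the sign of the numerator
$$
D(\alpha,\beta,\gamma) \;:=\; 1 + 2\cos\alpha\cos\beta\cos\gamma - \cos^2\alpha - \cos^2\beta - \cos^2\gamma
$$
into elementary inequalities on $\gamma$. I would use the well-known factorization
$$
D(\alpha,\beta,\gamma) = 4\sin\tfrac{\alpha+\beta+\gamma}{2}\sin\tfrac{-\alpha+\beta+\gamma}{2}\sin\tfrac{\alpha-\beta+\gamma}{2}\sin\tfrac{\alpha+\beta-\gamma}{2},
$$
so that $D\geq 0$ is equivalent to the spherical triangle inequalities $|\beta-\alpha|\leq \gamma$, $\gamma\leq \alpha+\beta$, and $\alpha+\beta+\gamma\leq 2\pi$. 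Combining the last two into $\gamma\leq \min(\alpha+\beta,\,2\pi-\alpha-\beta)$ and splitting according to whether $\alpha+\beta$ is less than, equal to, or greater than $\pi$, one obtains exactly the three cases listed in \eqref{a}, with equality giving \eqref{b}.

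The ``moreover'' part for $\alpha=\beta$ follows from the same discussion: the lower bound $\beta-\alpha=0$ is trivially satisfied for any $\gamma\in(0,\pi]$, so only the upper bound matters, and it becomes $\gamma \leq 2\alpha$ when $\alpha < \tfrac{\pi}{2}$ and $\gamma\leq 2\pi-2\alpha$ when $\alpha > \tfrac{\pi}{2}$. The only remaining point is the degenerate case $\gamma=\pi$ (i.e.\ $y=-x$), which the linear-system argument excludes: here the two conditions reduce to $\cos\alpha = -\cos\beta$, so the intersection is empty unless $\alpha+\beta=\pi$ (in which case there are infinitely many $z$, giving neither emptiness nor a singleton), and this is consistent with the formulas in \eqref{a} and \eqref{b}. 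The main technical obstacle, I expect, is organizing the three sub-cases of \eqref{a} and \eqref{b} cleanly, since one must check that the formula $\gamma\leq\min(\alpha+\beta,2\pi-\alpha-\beta)$ really corresponds, case by case, to the disjunctions stated in the lemma.
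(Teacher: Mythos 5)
Your proposal is correct, but it reaches the conclusion by a genuinely different route from the paper. The paper first reduces to $\dim H=3$ (observing that $\card(x^{(\alpha)}\cap y^{(\beta)}\cap K)$ is the same for every three-dimensional $K\supseteq[x,y]$, and that a unique intersection point must lie in $[x,y]$ by symmetry), and then invokes the first spherical law of cosines: nonemptiness amounts to solvability of $\cos\gamma=\cos\alpha\cos\beta+\sin\alpha\sin\beta\cos C$ for some $C\in[0,\pi]$, i.e.\ $\cos\gamma\in[\cos(\alpha+\beta),\cos(\beta-\alpha)]$, and the singleton case corresponds to $C\in\{0,\pi\}$. You instead work directly in $H$ with the decomposition $z=ax+by+w$, solve the $2\times2$ system for $a,b$, and reduce everything to the sign of the Gram determinant $D$ via its four-sine factorization; this makes the dimension-reduction and the ``unique point lies in the plane'' step of the paper completely explicit, at the cost of having to argue that at most one of the three sign-varying factors of $D$ can be negative in the admissible range $0<\alpha\le\beta<\pi$, $\gamma\in(0,\pi)$ (this is exactly the case-checking you flag at the end, and it does go through: e.g.\ $\gamma<\beta-\alpha$ and $\gamma>2\pi-\alpha-\beta$ are incompatible because $\beta<\pi$). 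Your handling of the degenerate case $\gamma=\pi$, which the paper dispatches with the same symmetry remark, is also correct. Two small quibbles that do not affect the argument: when $D>0$ and $\dim H=3$ the sphere of admissible $w$ has exactly two points rather than ``infinitely many'' (all that matters is that the cardinality is neither $0$ nor $1$); and in the ``moreover'' part you should note explicitly that the double zero of $D$ at $\gamma=\beta-\alpha=0$ is excluded by the hypothesis $\gamma>0$, so that only the upper-bound equalities can produce a singleton, which is what you implicitly use.
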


\begin{proof}
We will consider only the case when $\alpha$ and $\beta$ are different, since the other one can be verified in a very similar way.
Let us observe that the quantity $\card(x^{(\alpha)} \cap y^{(\beta)} \cap K)$ is the same for every three-dimensional subspace $K$ of $H$ which contains $x$ and $y$.
Thus we have $x^{(\alpha)} \cap y^{(\beta)} = \emptyset$ if and only if we have $x^{(\alpha)} \cap y^{(\beta)} \cap K = \emptyset$ for some (or equivalently, for all) $K$.
Moreover, if $\card(x^{(\alpha)} \cap y^{(\beta)} \cap K) = 1$ with some $K$, and this unique element is denoted by $u$, then the symmetry of the sphere implies $u \in [x,y]$.
Therefore we conclude that we have $\card(x^{(\alpha)} \cap y^{(\beta)}) = 1$ if and only if we have $\card(x^{(\alpha)} \cap y^{(\beta)} \cap K) = 1$ for some (or equivalently, for all) $K$.
In particular this cannot happen if $\gamma = \pi$, since in this case we have $[x,y] = [x]$ and thus $u\in\{x,y\}$, which is a contradiction.
Therefore we only have to prove our statement for the case when $\dim H = 3$, so from now on we will assume this.

The first spherical law of cosines implies that $x^{(\alpha)} \cap y^{(\beta)} \neq \emptyset$ if and only if there exists a $C\in [0,\pi]$ such that
$$
\cos\gamma = \cos\alpha\cos\beta + \sin\alpha\sin\beta\cos C.
$$
This is equivalent to $\cos\gamma \in [\cos(\alpha+\beta),\cos(\beta-\alpha)] = [\cos(2\pi-\alpha-\beta),\cos(\beta-\alpha)]$, which implies \eqref{a}.

Clearly, $\card(x^{(\alpha)} \cap y^{(\beta)}) = 1$ implies $C\in\{0,\pi\}$.
Therefore one easily concludes that $\card(x^{(\alpha)} \cap y^{(\beta)}) = 1$ holds exactly when one of the following two conditions is satisfied:
\begin{itemize}
\item $C = 0$ and $\gamma = \beta-\alpha$, or
\item $C = \pi$, $\alpha+\beta \neq \pi$ and $\gamma = \left\{\begin{matrix} \alpha+\beta, & \text{if } \alpha+\beta<\pi, \\ 2\pi-\alpha-\beta, & \text{if } \alpha+\beta>\pi. \end{matrix}\right.$
\end{itemize}
The equivalences stated in the lemma follows from this.
\end{proof}

It is a standard and natural method in the theory of preserver problems that one characterizes a relation in terms of the property which is preserved by our bijective map in both directions, and then concludes that this map also preserves this relation in both directions.
Here this method will be utilized several times.

We proceed with the verifications of the following two lemmas.

\begin{lem}\label{multiple-sphericalangle-eq-leq_lem}
Let $0<\alpha<\tfrac{\pi}{2}$.
Under the hypotheses of Theorem \ref{Emain} we have the following two properties:
\begin{equation}\label{multiple-angle-eq_eq}
\sphericalangle(x,y) = j \alpha \;\iff\; \sphericalangle(\psi(x),\psi(y)) = j \alpha \quad (x,y\in S_H, j\in\N,\, 2\leq j < \tfrac{\pi}{\alpha}).
\end{equation}
and
\begin{equation}\label{multiple-angle-leq_eq}
\sphericalangle(x,y) \leq j \alpha \;\iff\; \sphericalangle(\psi(x),\psi(y)) \leq j \alpha \quad (x,y\in S_H, j\in\N,\, 2\leq j < \tfrac{\pi}{\alpha})
\end{equation}
\end{lem}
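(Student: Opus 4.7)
The plan is to prove (1) by induction on $j$ and then to derive (2) from (1), in both cases translating the angle condition into a cardinality or non-emptiness condition governed by Lemma \ref{sphere-0-1-element_lem}. The driving idea is that $\psi$ preserves $\alpha$ in both directions, so $\psi(x^{(\alpha)}) = \psi(x)^{(\alpha)}$ holds for every $x \in S_H$; once (1) provides preservation of additional angles $k\alpha$, the same remark yields $\psi(x^{(k\alpha)}) = \psi(x)^{(k\alpha)}$, and hence $\psi$ carries every intersection $x^{(\alpha')} \cap y^{(\beta')}$ bijectively onto the corresponding intersection for $\psi(x), \psi(y)$.

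For (1), the base case $j = 2$ comes directly from the ``moreover'' part of Lemma \ref{sphere-0-1-element_lem} applied with equal angles $\alpha < \tfrac{\pi}{2}$, giving $\sphericalangle(x,y) = 2\alpha \iff \card(x^{(\alpha)} \cap y^{(\alpha)}) = 1$; the bijection $\psi$ transfers this cardinality criterion to $\psi(x), \psi(y)$. For the inductive step with $3 \leq j < \tfrac{\pi}{\alpha}$, I would apply part (\ref{b}) of Lemma \ref{sphere-0-1-element_lem} to the distinct angles $\alpha$ and $(j-1)\alpha$, noting that $\alpha + (j-1)\alpha = j\alpha < \pi$ places us in the first case of that lemma, to obtain
$$
\sphericalangle(x,y) = j\alpha \iff \card\bigl(x^{(\alpha)} \cap y^{((j-1)\alpha)}\bigr) = 1 \;\text{and}\; \sphericalangle(x,y) \neq (j-2)\alpha.
$$
The inductive hypothesis delivers preservation of $(j-1)\alpha$ and $(j-2)\alpha$ in both directions (the latter being $\alpha$ itself when $j = 3$), so $\psi$ preserves both conjuncts on the right and therefore also the left-hand side.

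Once (1) is established for every admissible index, (2) follows without a separate induction. For each $i \in \{1, \ldots, j-1\}$ the inequality $(i+1)\alpha \leq j\alpha < \pi$ places Lemma \ref{sphere-0-1-element_lem} in its first case (with its ``moreover'' part covering the boundary value $i = 1$), yielding
$$
x^{(\alpha)} \cap y^{(i\alpha)} \neq \emptyset \iff (i-1)\alpha \leq \sphericalangle(x,y) \leq (i+1)\alpha.
$$
The successive intervals $[(i-1)\alpha, (i+1)\alpha]$ for $i = 1, \ldots, j-1$ telescope to the whole of $[0, j\alpha]$, so
$$
\sphericalangle(x,y) \leq j\alpha \iff \bigvee_{i=1}^{j-1} \bigl(x^{(\alpha)} \cap y^{(i\alpha)} \neq \emptyset\bigr),
$$
and every disjunct on the right is preserved by $\psi$ in both directions thanks to (1).

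The main potential obstacle is combinatorial book-keeping rather than any deep idea: one has to verify throughout that the relevant sums of angles stay strictly below $\pi$ so that Lemma \ref{sphere-0-1-element_lem} is in its cleanest regime, switch between its main part and its ``moreover'' part whenever the two angles happen to coincide (the base case $j = 2$ in (1) and the disjunct $i = 1$ in (2)), and confirm that the successive intervals really tile $[0, j\alpha]$ without a gap.
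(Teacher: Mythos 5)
Your proof is correct, but it follows a different route from the paper's. The paper characterizes $\sphericalangle(x,y)=j\alpha$ by the existence of a \emph{unique} $(j-1)$-element chain $x, x_1,\dots,x_{j-1},y$ with all consecutive angles equal to $\alpha$, and characterizes $\sphericalangle(x,y)\leq j\alpha$ by the mere existence of such a chain; both conditions transfer under $\psi$ at once, with no induction and no further appeal to Lemma \ref{sphere-0-1-element_lem}. You instead induct on $j$, encoding $\sphericalangle(x,y)=j\alpha$ as $\card\bigl(x^{(\alpha)}\cap y^{((j-1)\alpha)}\bigr)=1$ together with the exclusion $\sphericalangle(x,y)\neq(j-2)\alpha$, and you derive \eqref{multiple-angle-leq_eq} from \eqref{multiple-angle-eq_eq} by tiling $[0,j\alpha]$ with the nonemptiness intervals of $x^{(\alpha)}\cap y^{(i\alpha)}$. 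The trade-off: the paper's argument is shorter but rests on the unproved (``straightforward'') claim that the entire chain is unique when $\sphericalangle(x,y)=j\alpha$, which is a geodesic-rigidity fact one still has to believe; your version replaces that with explicit, already-established cardinality and emptiness criteria from Lemma \ref{sphere-0-1-element_lem}, at the cost of an induction and some interval bookkeeping. All the hypotheses you invoke do check out: $\alpha<(j-1)\alpha<\pi$ and $\alpha+(j-1)\alpha=j\alpha<\pi$ put you in the first case of \eqref{a} and \eqref{b}, the ``moreover'' clause applies since $\alpha\neq\tfrac{\pi}{2}$, the degenerate case $x=y$ is consistent with each of your equivalences, and the intervals $[(i-1)\alpha,(i+1)\alpha]$ do cover $[0,j\alpha]$ without gaps.
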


\begin{proof}
Assume that $j\in\N$ and $2 \leq j < \tfrac{\pi}{\alpha}$.
It is straightforward that we have $\sphericalangle(x,y) = j \alpha$ if and only if there is a unique $(j-1)$-element sequence $\{x_1,\dots x_{j-1}\}$ such that $\sphericalangle(x_l, x_{l+1}) = \sphericalangle(x, x_1) = \sphericalangle(x_{j-1}, y) = \alpha$ $(l=1,\dots j-2)$.
Since $\psi$ and $\psi^{-1}$ are bijective transformations preserving the angle $\alpha$ in both directions, we conclude that the latter condition is equivalent to the existence of a unique $(j-1)$-element sequence $\{y_1,\dots y_{j-1}\}$ such that $\sphericalangle(y_l, y_{l+1}) = \sphericalangle(\psi(x), y_1) = \sphericalangle(y_{j-1}, \psi(y)) = \alpha$ $(l=1,\dots j-2)$.
But this holds exactly when $\sphericalangle(\psi(x), \psi(y)) = \alpha$.
Therefore we obtain \eqref{multiple-angle-eq_eq}.

The proof of \eqref{multiple-angle-leq_eq} is very similar, we only have to observe that $\sphericalangle(x,y) \leq j \alpha$ is satisfied if and only if there exists a $(j-1)$-element sequence $\{x_1,\dots x_{j-1}\}$ such that $\sphericalangle(x_l, x_{l+1}) = \sphericalangle(x, x_1) = \sphericalangle(x_{j-1}, y) = \alpha$ $(l=1,\dots j-2)$.
\end{proof}

\begin{lem}\label{double-angle-leq_lem}
Let $H$ be a real inner product space with $\dim H \geq 3$, and $\psi \colon S_H \to S_H$ be a bijective map.
Assume that $0 < \alpha < \tfrac{\pi}{2}$ and that $\psi$ satisfies the following condition:
$$
\sphericalangle(x,y) \leq \alpha \;\iff\; \sphericalangle(\psi(x),\psi(y)) \leq \alpha \quad (x,y\in S_H).
$$
Then $\psi$ satisfies 
\begin{equation}\label{double-angle-eq_eq}
\sphericalangle(x,y) = 2 \alpha \;\iff\; \sphericalangle(\psi(x),\psi(y)) = 2 \alpha \quad (x,y\in S_H).
\end{equation}
\end{lem}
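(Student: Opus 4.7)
The strategy is to encode the relation $\sphericalangle(x,y) = 2\alpha$ purely in terms of the cardinality of certain closed $\alpha$-balls, a quantity the hypothesis forces $\psi$ to preserve. For $z \in S_H$, put
$$
B_\alpha(z) := \{w \in S_H : \sphericalangle(z,w) \leq \alpha\}.
$$
Since the hypothesis preserves the relation $\leq \alpha$ in both directions, $\psi$ restricts to a bijection $B_\alpha(x) \to B_\alpha(\psi(x))$ for every $x \in S_H$, and hence induces a cardinality-preserving bijection
$$
B_\alpha(x) \cap B_\alpha(y) \;\longrightarrow\; B_\alpha(\psi(x)) \cap B_\alpha(\psi(y))
$$
for every pair $x, y \in S_H$.

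The heart of the proof is then the geometric equivalence, valid because $2\alpha < \pi$:
$$
\sphericalangle(x,y) = 2\alpha \;\iff\; \card\bigl(B_\alpha(x) \cap B_\alpha(y)\bigr) = 1.
$$
I would verify this in three cases via the spherical triangle inequality. If $\sphericalangle(x,y) > 2\alpha$, the intersection is empty. If $\sphericalangle(x,y) = 2\alpha$, then any $z$ in the intersection satisfies
$$
2\alpha = \sphericalangle(x,y) \leq \sphericalangle(x,z) + \sphericalangle(z,y) \leq 2\alpha,
$$
so equality holds throughout, forcing $\sphericalangle(x,z) = \sphericalangle(z,y) = \alpha$ and $z$ to lie on the unique minor geodesic through $x$ and $y$; this pins $z$ down as the midpoint $m = (x+y)/\|x+y\|$. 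If $\sphericalangle(x,y) < 2\alpha$, then $m$ lies in the $\sphericalangle$-open interior of both caps, so a whole $S_H$-neighbourhood of $m$ lies in the intersection, giving infinitely many points.

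Combining the cardinality preservation with this equivalence yields \eqref{double-angle-eq_eq} at once. The only delicate point is the last case, where one must know that the caps have nonempty $S_H$-interior around interior points; this is where the ambient assumption $\dim H \geq 3$ is comfortably (and not tightly) used. Everything else reduces to the spherical triangle inequality together with the two-sided preservation hypothesis.
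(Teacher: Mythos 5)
Your proposal is correct and takes essentially the same route as the paper: the paper's entire proof is the one-line observation that $\sphericalangle(x,y)=2\alpha$ holds if and only if there is a \emph{unique} $u\in S_H$ with $\sphericalangle(x,u)\leq\alpha$ and $\sphericalangle(u,y)\leq\alpha$, which is exactly your cardinality criterion $\card\bigl(B_\alpha(x)\cap B_\alpha(y)\bigr)=1$. You have merely supplied the three-case verification (via the triangle inequality for $\sphericalangle$, the uniqueness of the geodesic midpoint when $2\alpha<\pi$, and the open-cap argument) that the paper leaves implicit.
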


\begin{proof}
We only have to observe that $\sphericalangle(x,y) = 2 \alpha$ holds if and only if there is a unique $u\in S_H$ with $\sphericalangle(x, u) \leq \alpha$ and $\sphericalangle(u, y) \leq \alpha$.
\end{proof}

In the previous lemma we could have shown the property \eqref{multiple-angle-eq_eq} as well however, we will not need it in the sequel.
Next, we show that the $\sphericalangle$-isometriness of $\psi$ is a consequence of a much milder assumption.

\begin{lem}\label{small-sphericalangles_lem}
Let $H$ be a real inner product space with $\dim H \geq 3$, and $\psi \colon S_H \to S_H$ be a bijective map.
Suppose that there exists a decreasing sequence $\{\alpha_n\}_{n=1}^\infty \subseteq (0,\pi)$ with $\lim_{n\to\infty} \alpha_n = 0$ such that $\psi$ preserves all of these angles in both directions. 
Then $\psi$ is a $\sphericalangle$-isometry.
\end{lem}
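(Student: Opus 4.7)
My plan is to leverage Lemma \ref{multiple-sphericalangle-eq-leq_lem} for each of the angles $\alpha_n$ in the sequence, thereby producing a family of threshold inequalities that $\psi$ must respect, and then exploit the density of the set $\bigcup_n\{j\alpha_n : j\in\N,\ 2\leq j,\ j\alpha_n<\pi\}$ in $[0,\pi)$ to conclude that every angle is preserved exactly.

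First I would discard the (finitely many) initial terms of $\{\alpha_n\}$ for which $\alpha_n\geq\tfrac{\pi}{2}$, so that the hypotheses of Lemma \ref{multiple-sphericalangle-eq-leq_lem} are met for every remaining $\alpha_n$. This yields, for each $n$ and each $j\in\N$ with $2\leq j<\pi/\alpha_n$, the biconditional
$$
\sphericalangle(x,y)\leq j\alpha_n \;\iff\; \sphericalangle(\psi(x),\psi(y))\leq j\alpha_n \qquad (x,y\in S_H).
$$

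Fix now $x,y\in S_H$ and set $\theta=\sphericalangle(x,y)$ and $\theta'=\sphericalangle(\psi(x),\psi(y))$. If $\theta=0$ then $x=y$, so $\theta'=0$; the converse follows from the same property applied to $\psi^{-1}$, which is also a bijection preserving each $\alpha_n$ in both directions. Otherwise, suppose for contradiction that $\theta\neq\theta'$; after possibly swapping the roles of $\psi$ and $\psi^{-1}$ we may assume $0<\theta<\theta'\leq\pi$. Choose $n$ large enough that $\alpha_n<\theta$, $\alpha_n<\theta'-\theta$ and $\alpha_n<\pi-\theta$. Then the interval $(\theta,\theta')$ has length exceeding $\alpha_n$, so picking $j$ to be the smallest positive integer with $j\alpha_n>\theta$ gives $j\alpha_n\in(\theta,\theta+\alpha_n)\subseteq(\theta,\theta')$; moreover $j\alpha_n>\theta>\alpha_n$ forces $j\geq 2$, and $j\alpha_n<\theta+\alpha_n<\pi$ forces $j<\pi/\alpha_n$. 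The displayed biconditional then produces the simultaneous $\theta\leq j\alpha_n$ and $\theta'>j\alpha_n$, a contradiction. Thus $\theta=\theta'$ for all $x,y$, which is precisely the $\sphericalangle$-isometry property.

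The main obstacle is essentially already absorbed into Lemma \ref{multiple-sphericalangle-eq-leq_lem}: upgrading preservation of a single small angle $\alpha$ (in both directions) to preservation of all the upper-bound thresholds $j\alpha$ (in both directions) is the substantive step. Once that is available, everything reduces to a straightforward approximation argument, with the only delicate bookkeeping being the endpoint $\theta'=\pi$, handled by the extra condition $\alpha_n<\pi-\theta$ that ensures one can still find a multiple $j\alpha_n$ strictly below $\pi$ and strictly above $\theta$.
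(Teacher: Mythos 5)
Your proposal is correct and follows essentially the same route as the paper: both rest on the threshold equivalences \eqref{multiple-angle-leq_eq} of Lemma \ref{multiple-sphericalangle-eq-leq_lem} together with the density of the multiples $j\alpha_n$ in $(0,\pi)$. The only cosmetic difference is that the paper sandwiches the angle between two consecutive thresholds $(j-1)\alpha_n$ and $j\alpha_n$ and treats the angle $\pi$ by a separate negation, whereas you separate $\theta$ from $\theta'$ by a single threshold and absorb the $\theta'=\pi$ case via the extra requirement $\alpha_n<\pi-\theta$; both work.
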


\begin{proof} 
From \eqref{multiple-angle-leq_eq} of Lemma \ref{multiple-sphericalangle-eq-leq_lem} we conclude that
\begin{equation}\label{angle_interval_eq}
(j-1) \alpha_{n} < \sphericalangle(x,y) \leq j \alpha_{n} \;\iff\; (j-1) \alpha_{n} < \sphericalangle(\psi(x),\psi(y)) \leq j \alpha_{n}
\end{equation}
for every $x,y\in S_H$ and $j,n\in\N$, $3 \leq j < \tfrac{\pi}{\alpha_{n}}$.
Since $\alpha_n$ can be arbitrarily small, we easily obtain that $\psi$ preserves every angle which is less than $\pi$ in both directions.
Hence we also have
$$
\sphericalangle(x,y)<\pi \;\iff\; \sphericalangle(\psi(x),\psi(y))<\pi,
$$
and thus a negation of this equivalence gives that the angle $\pi$ is also preserved in both directions.
This completes the proof.
\end{proof}

Using Lemma \ref{sphere-0-1-element_lem} we can provide other angles which are preserved by $\psi$.

\begin{lem}\label{sphere-diff-sum_lem}
Let $H$ be a real inner product space with $\dim H \geq 3$, $0 < \alpha < \beta < \pi$, $\alpha+\beta \leq \pi$, and $\psi \colon S_H \to S_H$ be a bijective map.
Suppose that $\psi$ preserves the angles $\alpha$ and $\beta$ in both directions. 
Then 
\begin{itemize}
\item[(i)] $\psi$ preserves the angles $\beta-\alpha$ and $\alpha+\beta$ in both directions, if $\alpha + \beta < \pi$,
\item[(ii)] $\psi$ preserves the angle $\beta-\alpha$ in both directions, if $\alpha + \beta = \pi$.
\end{itemize}
\end{lem}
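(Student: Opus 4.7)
Case (ii) is immediate: when $\alpha+\beta=\pi$, Lemma \ref{sphere-0-1-element_lem} gives $\card(x^{(\alpha)}\cap y^{(\beta)})=1$ if and only if $\sphericalangle(x,y)=\beta-\alpha$. Since $\psi$ is a bijection preserving both angles $\alpha$ and $\beta$ in both directions, it maps $x^{(\alpha)}\cap y^{(\beta)}$ bijectively onto $\psi(x)^{(\alpha)}\cap\psi(y)^{(\beta)}$, so the cardinality condition, and hence $\sphericalangle(x,y)=\beta-\alpha$, is preserved in both directions.

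For case (i), the same bijectivity argument applied to Lemma \ref{sphere-0-1-element_lem} only yields preservation of the weaker \emph{set} condition $\sphericalangle(x,y)\in\{\beta-\alpha,\alpha+\beta\}$, and the main obstacle will be to separate the two values. My plan is to produce an auxiliary preserved condition that singles out $\beta-\alpha$ from $\alpha+\beta$. Note that $0<\alpha<\beta<\pi$ together with $\alpha+\beta<\pi$ forces $\alpha<\pi/2$. I choose the unique positive integer $k$ with
$$\beta-\alpha\le 2k\alpha<\alpha+\beta,$$
which exists because the half-open interval $[(\beta-\alpha)/(2\alpha),(\beta+\alpha)/(2\alpha))$ has length exactly $1$. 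Since $k\alpha<(\alpha+\beta)/2<\pi/2$, I automatically get both $k\alpha\neq\pi/2$ and $k<\pi/\alpha$.

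By Lemma \ref{multiple-sphericalangle-eq-leq_lem} (or directly from the hypothesis when $k=1$), $\psi$ preserves the angle $k\alpha$ in both directions. Then the \emph{moreover} clause of Lemma \ref{sphere-0-1-element_lem}, applied with $k\alpha$ in place of $\alpha$, says that $x^{(k\alpha)}\cap y^{(k\alpha)}=\emptyset$ if and only if $\sphericalangle(x,y)>2k\alpha$; the standard bijectivity argument again makes this emptiness condition, and hence the strict inequality $\sphericalangle(x,y)>2k\alpha$, preserved by $\psi$ in both directions. Combining this with the preservation of the set $\{\beta-\alpha,\alpha+\beta\}$ gives
$$\sphericalangle(x,y)=\beta-\alpha\iff\sphericalangle(x,y)\in\{\beta-\alpha,\alpha+\beta\}\text{ and }\sphericalangle(x,y)\le 2k\alpha,$$
and symmetrically for $\alpha+\beta$, so both angles are preserved individually. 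The delicate point is the choice of the integer $k$: one needs $2k\alpha$ to land strictly between $\beta-\alpha$ and $\alpha+\beta$ while simultaneously keeping $k\alpha$ strictly below $\pi/2$, and the hypothesis $\alpha+\beta<\pi$ is exactly what makes both requirements compatible.
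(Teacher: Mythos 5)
Your proof is correct and follows essentially the same route as the paper: both arguments characterize the pair $\{\beta-\alpha,\alpha+\beta\}$ via $\card\bigl(x^{(\alpha)}\cap y^{(\beta)}\bigr)=1$ (with the $\alpha+\beta=\pi$ case being immediate) and then separate the two values using a preserved threshold at a multiple of $\alpha$ lying strictly below $\alpha+\beta$. The only, harmless, difference is that the paper takes the threshold condition $\sphericalangle(x,y)\le j\alpha$ directly from \eqref{multiple-angle-leq_eq} of Lemma \ref{multiple-sphericalangle-eq-leq_lem}, whereas you re-derive a threshold $\sphericalangle(x,y)\le 2k\alpha$ from the emptiness criterion in the \emph{moreover} part of Lemma \ref{sphere-0-1-element_lem}.
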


\begin{proof} (i): Let us observe that the properties of $\psi$ implies $\psi\left(x^{(\alpha)} \cap y^{(\beta)}\right) = \psi(x)^{(\alpha)} \cap \psi(y)^{(\beta)}$.
Therefore, by Lemma \ref{sphere-0-1-element_lem} we have 
$$
\sphericalangle(x,y) \in \{\beta-\alpha, \alpha+\beta\} \;\iff\; \card(x^{(\alpha)} \cap y^{(\beta)}) = 1
$$
$$
\;\iff\; \card(\psi(x)^{(\alpha)} \cap \psi(y)^{(\beta)}) = 1 \;\iff\; \sphericalangle(\psi(x),\psi(y)) \in \{\beta-\alpha, \alpha+\beta\}
$$
We observe that $\alpha<\tfrac{\pi}{2}$.
Clearly, there exists a $j\in\N$, $2\leq j < \tfrac{\pi}{\alpha}$ such that $\beta-\alpha < j \alpha < \alpha+\beta$.
Since 
$$
\sphericalangle(x,y) = \beta-\alpha \;\iff\; \sphericalangle(x,y) \in \{\beta-\alpha, \alpha+\beta\} \;\text{and}\; \sphericalangle(x,y)\leq j\alpha,
$$
the above equivalence and \eqref{multiple-angle-leq_eq} of Lemma \ref{multiple-sphericalangle-eq-leq_lem} imply that $\psi$ preserves the angle $\beta-\alpha$, and thus also $\alpha+\beta$ in both directions.

(ii): This is immediate from Lemma \ref{sphere-0-1-element_lem}.
\end{proof}

For a given set $\mathcal{A}\subseteq S_H$ we will use the notation $\diam_\sphericalangle(\mathcal{A}) = \sup\{\sphericalangle(x,y)\colon x,y\in\mathcal{A}\}$. 
We proceed with the following crucial statement.

\begin{lem}\label{main-sphere_lem}
Let us assume that the hypotheses of Theorem \ref{Emain} are satisfied for $\psi$ and that we have $0 < \alpha < \tfrac{\pi}{2}$.
Then $\arccos\left( \frac{4\cos^2\alpha}{\cos\alpha + 1} - 1 \right) \in (\alpha,2\alpha)$ and $\psi$ preserves this angle in both directions.
\end{lem}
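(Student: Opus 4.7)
The plan is to verify the inequality by elementary algebra and then to characterize the relation $\sphericalangle(u,v) = \gamma^*$, where $\gamma^* := \arccos\bigl(\tfrac{4\cos^2\alpha}{1+\cos\alpha} - 1\bigr)$, purely in terms of the relation ``angle equals $\alpha$''; since $\psi$ preserves that latter relation in both directions, the preservation of $\gamma^*$ will follow. Writing $c := \cos\alpha$, the claim $\gamma^* \in (\alpha,2\alpha)$ reduces to the two pointwise estimates $\cos\gamma^* < c$ and $\cos\gamma^* > 2c^2 - 1$, which simplify to $(3c+1)(c-1) < 0$ and $c+1 < 2$ respectively, both clearly true for $c \in (0,1)$.

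The geometric heart of the matter is the following observation, obtained via the first spherical law of cosines in the style of the proof of Lemma \ref{sphere-0-1-element_lem}: whenever $\sphericalangle(x,y) = \beta \in (0, 2\alpha)$, the set $x^{(\alpha)}\cap y^{(\alpha)}$ is a sphere whose $\sphericalangle$-diameter equals $\arccos\bigl(\tfrac{4c^2}{1+\cos\beta}-1\bigr)$, degenerating to exactly two points realizing this angle when $\dim H = 3$. Consequently this diameter equals $\alpha$ precisely when $\beta = \gamma^*$, exceeds $\alpha$ when $\beta < \gamma^*$, and is strictly less than $\alpha$ when $\beta > \gamma^*$.

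With this in hand, I will introduce two conditions on $(u,v)\in S_H\times S_H$, both phrased only via equality of angles to $\alpha$ and (in)equality of vectors. Condition (A) asserts that there exist $x, y \in S_H$ with $x\neq y$ and $\sphericalangle(x,u)=\sphericalangle(y,u)=\sphericalangle(x,v)=\sphericalangle(y,v)=\sphericalangle(x,y)=\alpha$. Condition (B) asserts that for every $x\in S_H$ with $\sphericalangle(x,u)=\sphericalangle(x,v)=\alpha$, at most one $y\in S_H$ satisfies $\sphericalangle(y,u)=\sphericalangle(y,v)=\sphericalangle(y,x)=\alpha$. Geometrically, (A) says $u^{(\alpha)}\cap v^{(\alpha)}$ contains a pair at angle $\alpha$, and (B) says each point of $u^{(\alpha)}\cap v^{(\alpha)}$ has at most one ``mate'' in the set at angle $\alpha$. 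A case analysis based on the trichotomy of the previous paragraph yields that (A) and (B) together hold if and only if $\sphericalangle(u,v) = \gamma^*$: the cases $u=v$ and $0<\sphericalangle(u,v)<\gamma^*$ are ruled out by (B) (each $x$ then has multiple mates), while the cases $\gamma^*<\sphericalangle(u,v)\leq 2\alpha$ and $\sphericalangle(u,v)>2\alpha$ are ruled out by (A). Since both conditions are stated purely in terms of the relation ``angle equals $\alpha$'' and of (in)equality of points, and since $\psi$ and $\psi^{-1}$ are bijections preserving the angle $\alpha$ in both directions, they preserve the conjunction (A) and (B) in both directions; the angle $\gamma^*$ is therefore preserved.

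The step I expect to require the most care is the verification of condition (B) in dimensions four and above, where $u^{(\alpha)}\cap v^{(\alpha)}$ is a sphere of positive intrinsic dimension rather than a two-point set; here I would parametrize the small sphere explicitly and observe that the equation ``ambient $\sphericalangle$-distance from $x$ equals $\alpha$'' admits exactly one solution inside a sphere of $\sphericalangle$-diameter $\alpha$ (the antipode of $x$), at least two solutions inside a sphere of strictly larger diameter, and no solution inside a strictly smaller sphere—the trichotomy on which the entire characterization rests.
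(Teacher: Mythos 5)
Your proposal is correct and follows essentially the same route as the paper: both compute the $\sphericalangle$-diameter of $x^{(\alpha)}\cap y^{(\alpha)}$ as the strictly decreasing function $\gamma\mapsto\arccos\bigl(\tfrac{4\cos^2\alpha}{1+\cos\gamma}-1\bigr)$ of $\gamma=\sphericalangle(x,y)\in(0,2\alpha)$, identify the unique $\gamma$ at which this diameter equals $\alpha$, and characterize that value by counting, for each point of the intersection, its mates inside the intersection at angle $\alpha$. One small misattribution: when $\dim H=3$ and $0<\sphericalangle(u,v)<\gamma^*$, the set $u^{(\alpha)}\cap v^{(\alpha)}$ consists of exactly two points at distance strictly greater than $\alpha$, so each $x$ has \emph{no} mate rather than ``multiple mates''; this case is therefore ruled out by your condition (A), not (B), but the conjunction still fails there and the characterization is unaffected.
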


\begin{proof}
Let $x,y\in S_H$ with $\gamma := \sphericalangle(x,y) \in (0,2\alpha) \subset \left(0,\pi\right)$.
By the bjectivity of $\psi$ and \eqref{multiple-angle-eq_eq} and \eqref{multiple-angle-leq_eq} of Lemma \ref{multiple-sphericalangle-eq-leq_lem}, this is equivalent to $\sphericalangle(\psi(x),\psi(y)) \in (0,2\alpha)$.
Let $u \in x^{(\alpha)} \cap y^{(\alpha)}$.
Then there exists an orthonormal system $\{e_1,e_2\} \subset H$ such that $x = \cos\tfrac{\gamma}{2} \cdot e_1 + \sin\tfrac{\gamma}{2} \cdot e_2$, $y = \cos\tfrac{\gamma}{2} \cdot e_1 - \sin\tfrac{\gamma}{2} \cdot e_2$.
Moreover, we have $u = \cos\delta \cdot e_1 + \sin\delta\cos\varepsilon \cdot e_2 + \sin\delta\sin\varepsilon \cdot e_3$ with some $e_3\in S_H$, $e_3 \perp e_j$ $(j=1,2)$, and $\delta, \varepsilon \in [0,\pi]$ which satisfy the following two equations:
$$
\langle u, x \rangle = \cos\tfrac{\gamma}{2}\cos\delta + \sin\tfrac{\gamma}{2}\sin\delta\cos\varepsilon = \cos\alpha
$$
and
$$
\langle u, y \rangle = \cos\tfrac{\gamma}{2}\cos\delta - \sin\tfrac{\gamma}{2}\sin\delta\cos\varepsilon = \cos\alpha.
$$
Thus we conclude that we have either $\delta\in\{0,\pi\}$, or $\varepsilon = \tfrac{\pi}{2}$.
It is easy to see that the first possibility cannot happen.
Indeed, in that case we would get $u \in \{e_1,-e_1\}$, hence $\alpha = \sphericalangle(u,x) = \sphericalangle(u,y) \in \left\{\tfrac{\gamma}{2},\pi-\tfrac{\gamma}{2}\right\}$ would follow, which would be a contradiction.
In the second case we have $\cos\delta = \frac{\cos\alpha}{\cos(\gamma/2)} \in (0,1)$.
Therefore we get the following equation
$$
x^{(\alpha)} \cap y^{(\alpha)} = \left\{ \frac{\cos\alpha}{\cos\tfrac{\gamma}{2}} \cdot e_1 + \sqrt{1-\frac{\cos^2\alpha}{\cos^2\tfrac{\gamma}{2}}} \cdot e_3 \colon \|e_3\| = 1,\, e_3 \perp e_1, \, e_3 \perp e_2 \right\} \neq \emptyset.
$$
It is straightforward that 
$$
h(\gamma) := \diam_\sphericalangle(x^{(\alpha)} \cap y^{(\alpha)}) = \arccos\left( 2 \frac{\cos^2\alpha}{\cos^2\tfrac{\gamma}{2}} - 1 \right).
$$ 
One easily calculates the unique solution $\gamma_0\in(0,2\alpha)$ of the equation $h(\gamma_0) = \alpha$ which is
$$ 
\gamma_0 = \arccos\left( \frac{4\cos^2\alpha}{\cos\alpha + 1} - 1 \right).
$$
Clearly, the function $h\colon (0,2\alpha) \to (0,2\alpha)$ is a strictly decreasing bijection.
Since 
$$
h(\alpha) > \alpha 
\;\iff\; \frac{2\cos^2\alpha}{\cos^2\tfrac{\alpha}{2}} - 1 < \cos\alpha 
\;\iff\; \cos \alpha \frac{4\cos^2\tfrac{\alpha}{2} - 2}{\cos^2\tfrac{\alpha}{2}} - 1 < \cos\alpha
$$
$$
\iff\; \cos \alpha \frac{3\cos^2\tfrac{\alpha}{2} - 2}{\cos^2\tfrac{\alpha}{2}} < 1
\;\iff\; \left( 3-\frac{2}{\cos^2\tfrac{\alpha}{2}} \right)\cos \alpha < 1
$$
and this latter inequality is satisfied, we obtain $h(\alpha) > \alpha$.
Thus the monotonicity of $h$ implies $\alpha < \gamma_0 < 2\alpha$.
Therefore we have just proven that there is a unique angle $\gamma_0\in(0,2\alpha)$ such that $\diam_\sphericalangle(x^{(\alpha)} \cap y^{(\alpha)}) = \alpha$ holds if and only if $\sphericalangle(x,y) = \gamma_0$, furthermore we have $\gamma_0\in(\alpha,2\alpha)$.

Now, an elementary observation shows that for every $u\in x^{(\alpha)} \cap y^{(\alpha)}$ and $\vartheta > 0$ we have $\card(x^{(\alpha)} \cap y^{(\alpha)} \cap u^{(\vartheta)}) = 1$ if and only if $\vartheta = h(\gamma)$.
In particular, we have $\card(x^{(\alpha)} \cap y^{(\alpha)} \cap u^{(\alpha)}) = 1$ for every $u\in x^{(\alpha)} \cap y^{(\alpha)}$ exactly when $\gamma = \gamma_0$, which gives a characterization of the angle $\gamma_0$ in terms of $\alpha$.
Now, similarly as in Lemma \ref{sphere-diff-sum_lem}, we obtain that $\psi$ preserves the angle $\gamma_0$ in both directions.
Namely, we have
$$
\sphericalangle(x,y) = \gamma_0 \;\iff\; \card(x^{(\alpha)} \cap y^{(\alpha)} \cap u^{(\alpha)}) = 1 \; (\forall \, u\in x^{(\alpha)} \cap y^{(\alpha)})
$$
$$
\iff\; \card(\psi(x)^{(\alpha)} \cap \psi(y)^{(\alpha)} \cap \psi(u)^{(\alpha)}) = 1 \; (\forall \, \psi(u)\in \psi(x)^{(\alpha)} \cap \psi(y)^{(\alpha)})
$$
$$
\iff\; \card(\psi(x)^{(\alpha)} \cap \psi(y)^{(\alpha)} \cap v^{(\alpha)}) = 1 \; (\forall \, v\in \psi(x)^{(\alpha)} \cap \psi(y)^{(\alpha)}) $$
$$
\iff\; \sphericalangle(\psi(x),\psi(y)) = \gamma_0.
$$
This completes our proof.
\end{proof}

Let $V$ be a vector space over $\mathbb{K} \in \{\R,\C\}$.
A transformation $A\colon V\to V$ is called semilinear if there exists a field automorphism $\sigma\colon\mathbb{K}\to \mathbb{K}$ such that we have
$$
A(\lambda_1 x_1 + \lambda_2 x_2) = \sigma(\lambda_1)Ax_1 + \sigma(\lambda_2)Ax_2 \quad (x_1,x_2\in V, \lambda_1,\lambda_2\in\mathbb{K}).
$$
If $\mathbb{K} = \R$, then it is well-known that the only field homomorphism is the identity, therefore semilinear maps are always linear.
In the complex case the identity and the conjugation are trivial field automorphisms, however it is a well-known fact that there are several others (see \cite{Ke}).

Before we prove Theorem \ref{Emain}, we need Uhlhorn's theorem for real inner product spaces, which, as far as we know, was not published anywhere in this generality.
However, we note that it is usual to prove Uhlhorn's theorem in Hilbert spaces with the fundamental theorem of projective geometry.
We will also follow this method, but with using an easy observation which works without completeness of the space.
Since later we will need the complex version as well, here we prove the real and complex cases together.

\begin{lem}[Uhlhorn's theorem in inner product spaces]\label{Ulem}
Let $H$ be a real or complex inner product space with $\dim H \geq 3$.
Let $\phi\colon P(H) \to P(H)$ be a bijective map which saisfies
$$
[u] \perp [v]\; \iff \; \phi([u]) \perp \phi([v]) \quad ([u],[v]\in P(H)).
$$
Then $\phi$ is a Wigner symmetry.
\end{lem}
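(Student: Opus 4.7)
The plan is to reduce the lemma to the fundamental theorem of projective geometry by showing that $\phi$ must send projective lines to projective lines. The ``easy observation which works without completeness'' alluded to before the lemma is the following: for any two distinct $[u],[v]\in P(H)$, the two-dimensional subspace $[u,v]$ admits an algebraic orthogonal decomposition $H=[u,v]\oplus[u,v]^\perp$ regardless of completeness of $H$, because finite-dimensional subspaces always possess an orthogonal projection via Gram--Schmidt. Consequently,
$$
[w]\subseteq [u,v] \;\iff\; \{[u],[v]\}^\perp \subseteq \{[w]\}^\perp,
$$
and this condition is expressed entirely in terms of $\perp$. Since $\phi$ preserves orthogonality in both directions, it therefore preserves collinearity in $P(H)$ in both directions. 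The fundamental theorem of projective geometry (valid in any projective space of dimension at least two, with no topological hypothesis) then yields a bijective semilinear map $T\colon H\to H$, with associated field automorphism $\sigma$ of $\K$, such that $\phi([v])=[Tv]$ for every $0\neq v\in H$.

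Next I would identify $T$, up to a nonzero scalar factor, with a linear or conjugate-linear isometry. The orthogonality preservation of $\phi$ translates to the equivalence $\langle x,y\rangle=0\iff\langle Tx,Ty\rangle=0$ for all $x,y\in H$. For any orthonormal pair $x,y\in S_H$ one has $(x+y)\perp(x-y)$, hence $T(x+y)\perp T(x-y)$; expanding this inner product forces $\|Tx\|=\|Ty\|$ (in the complex case a brief real-part argument handles the cross terms, which vanish since $Tx\perp Ty$). So $T$ has constant norm $c>0$ on $S_H$, i.e.\ $\|Tv\|=c\|v\|$ for every $v\in H$.

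In the real case $\sigma$ is automatically the identity, hence $T$ is linear and $c^{-1}T$ is an orthogonal transformation. The complex case is where the main obstacle lies: one must exclude wild field automorphisms of $\C$. Picking an orthonormal pair $e_1,e_2$, the identity $(e_1+\lambda e_2)\perp(-\bar\lambda e_1+e_2)$ together with the orthogonality of the $T$-images forces $\sigma(\bar\lambda)=\overline{\sigma(\lambda)}$, so $\sigma$ fixes $\R$; combined with $\sigma(i)^2=-1$, this yields $\sigma\in\{\mathrm{id},\overline{\,\cdot\,}\}$. Hence $T$ is either linear or conjugate-linear, and $c^{-1}T$ is a linear or conjugate-linear isometry $U\colon H\to H$ which realizes $\phi$ as a Wigner symmetry. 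The delicate point throughout is the complex case, where the sesquilinearity of $\langle\cdot,\cdot\rangle$ must be exploited to tame $\sigma$; everything else is routine once the FTPG has been invoked.
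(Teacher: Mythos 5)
Your proposal is correct and follows the same overall architecture as the paper: characterize the projective line $P_{[u],[v]}$ as the double orthocomplement $\left(\{[u],[v]\}^\perp\right)^\perp$ (your reformulation $[w]\subseteq[u,v]\iff\{[u],[v]\}^\perp\subseteq\{[w]\}^\perp$ is the same statement, and your Gram--Schmidt justification is exactly the paper's ``easy observation''), invoke the fundamental theorem of projective geometry to get a semilinear $T$ with automorphism $\sigma$, and then use orthogonality preservation to tame $\sigma$. Where you genuinely diverge is in the last step. The paper restricts $A$ to two-dimensional subspaces through a fixed unit vector, conjugates by isometries to reduce to $\C^2$, factors the result as $Q\circ(\sigma,\sigma)$ with $Q$ a linear isometry, and then runs a computation with vectors of the form $\left(r,r(\cos s+i\sin s)\right)$ to show $|\sigma(t)+i|=|\sigma(t)-i|$ and hence $\sigma(\R)\subseteq\R$. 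You instead exploit the single orthogonal pair $(e_1+\lambda e_2,\,-\bar\lambda e_1+e_2)$: combined with $Te_1\perp Te_2$ and $\|Te_1\|=\|Te_2\|$, it yields $\sigma(\bar\lambda)=\overline{\sigma(\lambda)}$ directly, whence $\sigma(\R)\subseteq\R$, $\sigma|_\R=\mathrm{id}$, and $\sigma(i)=\pm i$. This is shorter and avoids the detour through the factorization $Q\circ(\sigma,\sigma)$; it buys a cleaner proof at no cost in generality.

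Two small points should be tightened. First, the deduction from ``$\|Tx\|=\|Ty\|$ for orthonormal pairs'' to ``constant norm on $S_H$'' needs the chaining argument: for arbitrary unit vectors $u,v$ pick a unit $w\perp u$, $w\perp v$ (possible since $\dim H\geq 3$ and $[u,v]$ admits an orthogonal complement by Gram--Schmidt) and compare both to $Tw$. Second, the assertion $\|Tv\|=c\|v\|$ for \emph{all} $v\in H$ is premature at the point where you state it: it requires $|\sigma(\lambda)|=|\lambda|$, which for a wild automorphism fails badly, so this identity only becomes available \emph{after} you have pinned down $\sigma\in\{\mathrm{id},\overline{\,\cdot\,}\}$. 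Fortunately your $\sigma$-argument only uses the constant norm on orthonormal pairs, so there is no circularity --- just reorder the two claims.
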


\begin{proof}
Let $[x], [y] \in P(H)$ be two different lines. 
We will us use the notation $P_{[x],[y]} = \{[w]\in P(H)\colon [w]\subseteq [x,y]\}$.
It is clear that we have $P_{[x],[y]} \subseteq (\{[x], [y]\}^\perp)^\perp$.
Let us assume that $[z] \in (\{[x], [y]\}^\perp)^\perp \setminus P_{[x],[y]}$.
This means that the representing vectors $x$, $y$ and $z\in H$ are linearly independent.
By the Gramm-Schmidt orthogonalization theorem there are scalars $\lambda, \mu, \nu$ such that $0 \neq \lambda x + \mu y + \nu z$ and $[\lambda x + \mu y + \nu z] \in \{[x], [y]\}^\perp$.
But since $[x],[y],[z] \in (\{[x], [y]\}^\perp)^\perp$, we obtain that 
$$
[\lambda x + \mu y + \nu z] \in (\{[x], [y]\}^\perp)^\perp \subseteq \{[\lambda x + \mu y + \nu z]\}^\perp,
$$
which is a contradiction.
Therefore we obtain 
$$
P_{[x],[y]} = (\{[x], [y]\}^\perp)^\perp \qquad ([x], [y]\in P(H), [x]\neq [y]).
$$

The above observations and the properties of $\phi$ implies the following:
$$
\phi\left(P_{[x],[y]}\right) = \phi\left(\left(\{[x], [y]\}^\perp\right)^\perp\right) = \left(\phi\left(\{[x], [y]\}^\perp\right)\right)^\perp $$
$$
= \left(\{\phi([x]), \phi([y])\}^\perp\right)^\perp = P_{\phi([x]), \phi([y])}.
$$
Hence the fundamental theorem of projective geometry (see e.g. \cite{FTPG,Fa}) gives us a bijective semilinear transformation $A\colon H \to H$, with a field automorphism $\sigma\colon\mathbb{K}\to\mathbb{K}$, such that we have
\begin{equation}\label{semi}
\phi([v]) = [Av] \quad ([v]\in P(H)).
\end{equation}
Moreover, the semilinear transformation $A$ preserves orthogonality of vectors in both directions.

We will only deal with the complex case, since the real case can be shown similarly, even with some simplifications. 
So assume that $\mathbb{K} = \C$.
Let $x\in H$, $\|x\| = 1$ be a vector.
Clearly for any $t>0$, $A$ satisfies \eqref{semi} if and only if $\tfrac{1}{t} A$ does.
Therefore, without loss of generality, we may assume $\|Ax\|=1$.
Let $M$ be an arbitrary two-dimensional subspace of $H$ containing $x$, and consider the restricted semilinear map $A|_M$.
We also consider two linear isometries $W_1\colon \C^2 \to M$ and $W_2\colon AM \to \C^2$ such that $W_1(1,0) = x$ and $W_2(Ax) = (1,0)$.
The composition $W_2\circ A\circ W_1 \colon \C^2 \to \C^2$ is a semilinear map with the same field automorphism as $A$.
We also have $W_2\circ A\circ W_1 (1,0) = (1,0)$.
It is straightforward that there exists a linear transformation $Q\colon \C^2 \to \C^2$ such that we have 
$$
(W_2\circ A\circ W_1) (z_1,z_2) = Q (\sigma(z_1),\sigma(z_2)) \qquad ((z_1,z_2)\in \C^2).
$$
In particular, $Q(1,0) = (1,0)$.
Moreover, $W_2\circ A\circ W_1$ preserves orthogonality of vectors in both directions.
Let $\Q[i] = \{p+iq\colon p,q\in\Q\}$.
It is easy to see that $\sigma|_{\Q[i]^2}$ is either the conjugation or the identity.
Therefore $Q(1,0) \perp Q(0,1)$, furthermore $Q(1,0) - Q(0,1) = Q(1,-1) \perp Q(1,1) = Q(1,0) + Q(0,1)$.
This latter one is only possible if we have $\|Q(0,1)\| = \|Q(1,0)\| = \|(1,0)\| = 1$, therefore $Q$ is a linear isometry.

Finally, we consider 
$$
Q^{-1}\circ W_2\circ A\circ W_1 \colon \C^2\to\C^2, \quad (z_1,z_2) \mapsto (\sigma(z_1),\sigma(z_2)).
$$
This map preserves orthogonality of vectors in both directions.
Therefore we can calculate the following for every $r\geq 0$ and $s\in\R$:
$$
0 = \left\langle\left(r,r(\cos s + i\sin s)\right);\left(-r,r(\cos s + i\sin s)\right)\right\rangle 
$$
$$
= \left\langle\left(\sigma(r),\sigma(r(\cos s + i\sin s))\right);\left(-\sigma(r),\sigma(r(\cos s + i\sin s))\right)\right\rangle 
$$
$$
= -|\sigma(r)|^2 + \left|\sigma(r(\cos s + i\sin s))\right|^2.
$$
This implies $\left|\sigma(r(\cos s + i\sin s))\right| = |\sigma(r)|$.
Therefore, for every $t\in\R$ we calculate either
$$
|\sigma(t) + i| = |\sigma(t+i)| = |\sigma(t-i)| = |\sigma(t) - i|,
$$
if $\sigma|_{\Q[i]^2}$ is the identity, or
$$
|\sigma(t) + i| = |\sigma(t-i)| = |\sigma(t+i)| = |\sigma(t) - i|,
$$
if $\sigma|_{\Q[i]^2}$ is the conjugation.
But this implies $\sigma(t)\in\R$.
We conclude $\sigma(\R)\subseteq\R$, which implies that $\sigma|_\R$ is the identity, and therefore $\sigma$ is either the conjugation map or the identity.
From this we obtain that the restriction $A|_M$ is a linear or a conjugatelinear isometry for every two-dimensional subspace $M$ containing $x$.
Therefore $A$ is a linear or conjugate linear isometry on $H$ which is our desired conclusion.
\end{proof}

Now, we are in a position to verify our main result about transformations on unit spheres.

\begin{proof}[Proof of Theorem \ref{Emain}]
The $\alpha = \tfrac{\pi}{2}$ case is a direct consequence Lemma \ref{Ulem}. 
We will consider several possibilities separately in order to handle the remaining case.

\smallskip

\textbf{Case 1:} when $\alpha \leq \tfrac{\pi}{4}$.
Then by Lemmas \ref{multiple-sphericalangle-eq-leq_lem} and \ref{main-sphere_lem}, our map $\psi$ preserves the angles $2\alpha$ and $\beta := \arccos\left( \frac{4\cos^2\alpha}{\cos\alpha + 1} - 1 \right) \in (\alpha,2\alpha)$ in both directions.
Since $\alpha + \beta < 2\alpha + \beta < 4\alpha < \pi$, an application of Lemma \ref{sphere-diff-sum_lem} gives that $\psi$ preserves both of the (positive) angles $2\alpha-\beta$ and $\beta-\alpha$ in both directions.
Clearly, one of these angles is less than or equal to $\tfrac{\alpha}{2}$.
Therefore an iteration gives us a decreasing sequence of positive angles converging to zero such that $\psi$ preserves all of these angles in both directions.
Applying Lemma \ref{small-sphericalangles_lem} completes the proof of this case.

\smallskip

\textbf{Case 2:} when $\tfrac{\pi}{4} < \alpha < \tfrac{\pi}{2}$.
We will use the notation 
$$
\beta(\alpha) := \arccos\left( \frac{4\cos^2\alpha}{\cos\alpha + 1} - 1 \right) \quad \left(\alpha\in\left[0,\tfrac{\pi}{2}\right]\right).
$$
The following estimation shows that the continuous function $\beta(\alpha) - \alpha$ is strictly increasing on $\left[0,\tfrac{\pi}{2}\right]$:
$$
\frac{d\beta(\alpha)}{d\alpha}
=
\frac{4 \sin\alpha\cos\alpha (\cos\alpha+2)}{(\cos\alpha+1)^2\sqrt{1-\left(\frac{4 \cos^2\alpha}{\cos\alpha+1}-1\right)^2}}
$$
$$
= \frac{4 \sin \alpha\cos\alpha(\cos (\alpha )+2)}{(\cos\alpha+1)^2\sqrt{\frac{4 \cos^2\alpha}{\cos\alpha+1}}\sqrt{2-\frac{4 \cos^2\alpha}{\cos\alpha+1}}} 
= 
\frac{4 \sin\alpha\cos\alpha(\cos\alpha+2)}{(\cos\alpha+1) 2\cos\alpha\sqrt{2+2\cos\alpha-4 \cos^2\alpha}} 
$$
$$
> 
\frac{4 \sin\alpha\cos\alpha(\cos\alpha+2)}{(\cos\alpha+1) 2\cos\alpha\sqrt{2+2\cos\alpha-4 \cos^2\alpha}}
$$
$$
=
\frac{\cos\alpha + 2}{\cos\alpha + 1}
=
1 + \frac{1}{\cos\alpha + 1} > 1 \qquad (0 < \alpha < \tfrac{\pi}{2}).
$$
We also know that $\beta(\alpha) - \alpha < \alpha$ holds for every $\alpha\in (0,\tfrac{\pi}{2})$, and that $\beta(\alpha) - \alpha = \alpha$ for $\alpha \in \{0,\tfrac{\pi}{2}\}$.
Let us define a sequence $\{\alpha_n\}_{n=1}^\infty$ recursively as follows: $\alpha_1 := \tfrac{\pi}{4}$, $\tfrac{\pi}{4} < \alpha_n < \tfrac{\pi}{2}$ and $\beta(\alpha_n) - \alpha_n = \alpha_{n-1}$ $(n\in\N, n\geq 2)$.
This obviously defines a strictly increasing sequence of numbers which is therefore convergent: $\widehat\alpha := \lim_{n\to\infty} \alpha_n$.
Hence we get $\beta(\widehat\alpha) - \widehat\alpha = \widehat\alpha$, from which we conclude $\widehat\alpha = \tfrac{\pi}{2}$.

Next, let us examine the equation $2\pi-\beta(\alpha)-\alpha = 2\alpha$.
Since the left-hand side is strictly decreasing in $\alpha \in \left[0,\tfrac{\pi}{2}\right]$, and the right-hand side is strictly increasing, we conclude that there is a unique solution in the interval $\left[0,\tfrac{\pi}{2}\right]$.
A numerical calculation gives the following:
$$
2\pi-\beta(1.28)-1.28 > 2.59 > 2.56 = 2\cdot 1.28,
$$
and
$$
2\pi-\beta(1.29)-1.29 < 2.57 < 2.58 = 2\cdot 1.29.
$$
Therefore the unique solution, which will be denoted by $\check\alpha$, satisfies $1.28<\check\alpha<1.29$

We want to show for every $n\in\N$ that if $\alpha \leq \alpha_n$ holds, then $\psi$ is a $\sphericalangle$-isometry.
In fact, the previous case implies this for $n=1$.
Let us assume that we have verified this for an $n\in\N$, and let us consider an $\alpha \in (0,\alpha_{n+1}]$.
On one hand, if $\beta(\alpha)+\alpha \leq \pi$, then by Lemma \ref{sphere-diff-sum_lem} $\psi$ preservers the angle $\beta(\alpha)-\alpha$ in both directions, and since $\beta(\alpha)-\alpha \leq \alpha_n$, we are done by the inductional hypothesis.
On the other hand, if $\beta(\alpha)+\alpha > \pi$, then by Lemma \ref{sphere-0-1-element_lem} we have:
$$
\sphericalangle(x,y) \in \{\beta(\alpha)-\alpha, 2\pi-\beta(\alpha)-\alpha\} 
\;\iff\; \card\left(x^{(\alpha)}\cap y^{(\beta(\alpha))}\right) = 1
$$
$$
\iff\; \card\left(\psi(x)^{(\alpha)}\cap \psi(y)^{(\beta(\alpha))}\right) = 1 
$$
$$
\iff\; \sphericalangle(\psi(x),\psi(y)) \in \{\beta(\alpha)-\alpha, 2\pi-\beta(\alpha)-\alpha\}
$$
We distinguish two possibilities.
First, if $2\pi-\beta(\alpha)-\alpha \geq 2\alpha$, or equivalently when $\alpha\leq\check\alpha$, then we have
$$
\sphericalangle(x,y) = \beta(\alpha)-\alpha \;\iff\; \sphericalangle(x,y) \in \{\beta(\alpha)-\alpha, 2\pi-\beta(\alpha)-\alpha\} \text{ and } \sphericalangle(x,y) < 2\alpha,
$$
thus \eqref{multiple-angle-eq_eq} and \eqref{multiple-angle-leq_eq} of Lemma \ref{multiple-sphericalangle-eq-leq_lem} implies that $\psi$ preserves the angle $\beta(\alpha)-\alpha$ in both directions.
Since $\beta(\alpha)-\alpha \leq \alpha_{n}$, the map $\psi$ has to be a $\sphericalangle$-isometry.
Second, if $2\pi-\beta(\alpha)-\alpha < 2\alpha$, or equivalently $\alpha>\check\alpha$, then we have $2\pi-4\alpha < 2\pi-4\check\alpha < 1.2 < \check\alpha$.
Lemma \ref{sphere-0-1-element_lem} gives the following:
$$
\sphericalangle(x,y) = 2\pi-4\alpha \;\iff\; \card\left(x^{(2\alpha)}\cap y^{(2\alpha)}\right) = 1 \quad (x,y\in S_H).
$$
Since $\psi$ preserves the angle $2\alpha$ in both directions by Lemma \ref{multiple-sphericalangle-eq-leq_lem}, a straightforward argument shows that $\psi$ also preserves the angle $2\pi-4\alpha$ in both directions.
Therefore, by the previous possibility and the inductional hypothesis, we infer the $\sphericalangle$-isometriness of $\psi$.

Finally, we easily conclude that whenever $\alpha < \widehat\alpha = \tfrac{\pi}{2}$, then $\psi$ is a $\sphericalangle$-isometry, which completes the present case.

\smallskip

\textbf{Case 3:} when $\tfrac{3\pi}{4} < \alpha < \pi$. 
Then by Lemma \ref{sphere-0-1-element_lem} we have $\card(x^{(\alpha)} \cap y^{(\alpha)}) = 1$ if and only if $\sphericalangle(x,y) = 2\pi - 2\alpha$.
Therefore $\psi$ preserves the angle $2\pi-2\alpha \in \left(0,\tfrac{\pi}{2}\right)$ in both directions, and the previous cases complete the present one.

\smallskip

\textbf{Case 4:} when $\alpha = \tfrac{3\pi}{4}$.
Similarly as above, we get that $\psi$ preserves the angle $\tfrac{\pi}{2}$ in both directions.
By Lemma \ref{sphere-0-1-element_lem} we have the following property:
$$
\sphericalangle(x,y) \in \left\{\tfrac{\pi}{4},\tfrac{3\pi}{4}\right\} \;\iff\; \sphericalangle(\psi(x),\psi(y)) \in \left\{\tfrac{\pi}{4},\tfrac{3\pi}{4}\right\} \quad (x,y\in S_H).
$$
But the angle $\tfrac{3\pi}{4}$ is preserved in both directions, whence we get that the same holds for the angle $\tfrac{\pi}{4}$.
Therefore, by Case 1, $\psi$ is a $\sphericalangle$-isometry.

\smallskip

\textbf{Case 5:} when $\tfrac{\pi}{2} < \alpha < \tfrac{3\pi}{4}$ and $\alpha \neq \tfrac{2\pi}{3}$.
We set $\alpha_1 := \tfrac{5\pi}{8}$ and define $\alpha_n$ for $n \geq 2$ by the recursive formula $\alpha_n = \pi - \tfrac{\alpha_{n-1}}{2}$ $(n\geq 2, n\in\N)$.
We observe that $\alpha_{2k-1} \in [\tfrac{5\pi}{8},\tfrac{2\pi}{3})$, $\alpha_{2k} \in (\tfrac{2\pi}{3},\tfrac{11\pi}{16}]$ $(k\in\N)$ are satisfied.
Moreover, one easily obtains the explicit formula $\alpha_n = (-1)^{n-1}\tfrac{\alpha_1}{2^{n-1}} + \sum_{j=0}^{n-2} (-1)^j \tfrac{\pi}{2^j}$ $(n\geq 2)$, from which we infer $\lim_{n\to\infty} \alpha_n = \tfrac{2\pi}{3}$.

We show that if $\tfrac{\pi}{2} < \alpha \leq \alpha_n$ and $n$ is odd, or $\alpha_n \leq \alpha < \tfrac{3\pi}{4}$ and $n$ is even $(n\in\N)$, then $\psi$ is a $\sphericalangle$-isometry.
In order to do this, we use induction.
If $n=1$, then $\tfrac{\pi}{2} < \alpha \leq \alpha_{1} = \tfrac{5\pi}{8}$ implies that $\psi$ preserves the angle $2\pi-2\alpha \in \left[\tfrac{3\pi}{4},\pi\right)$. 
From the previous cases it readily follows that $\psi$ is a $\sphericalangle$-isometry.
Let us suppose that the claim has been proven for some $n\geq 1$, and let us investigate it for $n+1$.
On one hand, if $n = 2k-1$ with some $k\in\N$, then $\alpha \geq \alpha_{2k} > \tfrac{2\pi}{3}$ implies that $\psi$ preserves the angle $2\pi-2\alpha \in (0,\alpha_{2k-1}]$ in both directions.
Thus, by the inductional hypothesis, either $\psi$ is a $\sphericalangle$-isometry, or $2\pi-2\alpha = \tfrac{\pi}{2}$.
But in the latter case we have $\alpha = \tfrac{3\pi}{4}$, which contradicts to our assumption, therefore $\psi$ is a $\sphericalangle$-isometry.
On the other hand, if $n = 2k$ with some $k\in\N$, then $\tfrac{\pi}{2} < \alpha \leq \alpha_{2k+1}$ implies that $\psi$ preserves the angle $2\pi-2\alpha \geq \alpha_{2k}$ in both directions.
The inductional hypothesis and the previous cases togehter yield that $\psi$ is a $\sphericalangle$-isometry.

From the properties of the sequence $\{\alpha_n\}_{n=1}^\infty$ it follows that whenever $\alpha \in (\tfrac{\pi}{2}, \tfrac{3\pi}{4})\setminus \{\tfrac{2\pi}{3}\}$, then $\psi$ is a $\sphericalangle$-isometry.

\smallskip

\textbf{Case 6:} when $\alpha = \tfrac{2\pi}{3}$.
By Lemma \ref{sphere-0-1-element_lem} for every $x,y\in S_H$ we have
$$
\sphericalangle(x,y) > \tfrac{2\pi}{3}
\;\iff\; x^{(2\pi/3)} \cap y^{(2\pi/3)} = \emptyset
$$
$$
\;\iff\; \psi(x)^{(2\pi/3)} \cap \psi(y)^{(2\pi/3)} = \emptyset
\;\iff\; \sphericalangle(\psi(x),\psi(y)) > \tfrac{2\pi}{3}.
$$
It is straightforward that for every $u\in x^{(2\pi/3)}$ we have $\card(x^{(2\pi/3)} \cap u^{(2\pi/3)}) = 1$.
We will denote this unique element by $\widetilde{u}_x$, which depends only on $x$ and $u$.
We easily obtain the following
$$
\left\{\psi(\widetilde{u}_x)\right\}
 = \psi\left(x^{(2\pi/3)} \cap u^{(2\pi/3)}\right)
$$
$$
 = \psi(x)^{(2\pi/3)} \cap \psi(u)^{(2\pi/3)}
 = \left\{\widetilde{\psi(u)}_{\psi(x)}\right\}
\quad \left(x\in S_H, u\in x^{(2\pi/3)}\right).
$$

We claim that $\sphericalangle(x,y) = \pi$ holds if and only if $\sphericalangle(x,y) > \tfrac{2\pi}{3}$ and for every $u\in x^{(2\pi/3)}$ and $z\in y^{(2\pi/3)} \cap u^{(2\pi/3)}$ we have $\sphericalangle(\widetilde u_x, \widetilde{z}_y) = \tfrac{2\pi}{3}$.
For the necessity part, we assume that $\sphericalangle(x,y) = \pi$.
We consider the following isometry $T\colon H\to H$, $T(\lambda x + h) = \lambda x - h$ $(h\in H, h\perp x)$.
Clearly, we have $\widetilde{u}_x = Tu$ $(u\in x^{(2\pi/3)})$, $\widetilde{z}_y = Tz$ $(z\in y^{(2\pi/3)})$, $T (x^{(2\pi/3)}) = x^{(2\pi/3)}$ and $T (y^{(2\pi/3)}) = y^{(2\pi/3)}$.
If $u\in x^{(2\pi/3)}$ and $z\in y^{(2\pi/3)} \cap u^{(2\pi/3)}$, then $\widetilde{z}_y = Tz \in (Ty)^{(2\pi/3)} \cap (Tu)^{(2\pi/3)} = y^{(2\pi/3)} \cap (\widetilde{u}_x)^{(2\pi/3)}$, which gives us $\sphericalangle(\widetilde u_x, \widetilde{z}_y) = \tfrac{2\pi}{3}$.

For the sufficiency part, we assume that $\tfrac{2\pi}{3} < \sphericalangle(x,y) < \pi$, and we would like to conclude that there exist $u\in x^{(2\pi/3)}$ and $z\in y^{(2\pi/3)} \cap u^{(2\pi/3)}$ such that $\sphericalangle(\widetilde u_x, \widetilde{z}_y) \neq \tfrac{2\pi}{3}$.
Let $\gamma = \sphericalangle(x,y) \in \left(\tfrac{2\pi}{3},\pi\right)$.
In order to do this, we choose an orthonormal system $\{e_1,e_2\}$ such that $x= e_1$, $y = \cos\gamma \cdot e_1 + \sin\gamma \cdot e_2$.
We define $y' := \sin\gamma \cdot e_1 - \cos\gamma \cdot e_2$ and fix $u := -\tfrac{1}{2} \cdot e_1 - \tfrac{\sqrt{3}}{2} \cdot e_2 \in x^{(2\pi/3)}$.
We have $\widetilde{u}_x = -\tfrac{1}{2} \cdot e_1 + \tfrac{\sqrt{3}}{2} \cdot e_2$.
A straightforward calculation gives that for any $z\in y^{(2\pi/3)}$ there exists a $\delta \in [0,2\pi)$ and an $e_3\in S_H$, $e_3\perp e_j$ $(j=1,2)$ such that
$$
z = -\tfrac{1}{2} \cdot y + \tfrac{\sqrt{3}}{2} \cdot (\cos\delta \cdot y' + \sin\delta \cdot e_3),
$$
and of course we have 
$$
\widetilde{z}_y = -\tfrac{1}{2} \cdot y - \tfrac{\sqrt{3}}{2} \cdot (\cos\delta \cdot y' + \sin\delta \cdot e_3).
$$
We have $z \in y^{(2\pi/3)} \cap u^{(2\pi/3)}$ if and only if the following equation is satisfied with the above $\delta$:
\begin{equation}\label{u-ip-z_eq}
\langle u,z \rangle = \tfrac{1}{4}\cos\gamma + \tfrac{\sqrt{3}}{4}\sin\gamma - \tfrac{\sqrt{3}}{4}\sin\gamma\cos\delta + \tfrac{3}{4}\cos\gamma\cos\delta = -\tfrac{1}{2}.
\end{equation}
Furthermore, we have $\sphericalangle(\widetilde u_x, \widetilde{z}_y) = \tfrac{2\pi}{3}$ exactly when
$$
\langle \widetilde u_x, \widetilde{z}_y \rangle = \tfrac{1}{4}\cos\gamma - \tfrac{\sqrt{3}}{4}\sin\gamma + \tfrac{\sqrt{3}}{4}\sin\gamma\cos\delta + \tfrac{3}{4}\cos\gamma\cos\delta = -\tfrac{1}{2}.
$$
These two equations gives $\tfrac{\sqrt{3}}{2}\sin\gamma - \tfrac{\sqrt{3}}{2}\sin\gamma\cos\delta = 0$,
which implies $\cos\delta = 1$.
Therefore from \eqref{u-ip-z_eq} we get $\tfrac{2\pi}{3} = \gamma$, which contradicts to our assumptions.
Therefore our claim is verified.

Now, by these observations we infer the following equivalence-chain:
$$
\sphericalangle(x,y) = \pi 
$$
$$
\iff\; \sphericalangle(x,y) > \tfrac{2\pi}{3} \; \text{and} \; \sphericalangle(\widetilde u_x, \widetilde{z}_y) = \tfrac{2\pi}{3} \quad \left(u\in x^{(2\pi/3)}, z\in y^{(2\pi/3)} \cap u^{(2\pi/3)}\right)
$$
$$
\;\iff\; \left\{ \begin{matrix}
\sphericalangle(\psi(x),\psi(y)) > \tfrac{2\pi}{3} \; \text{and} \; \sphericalangle(\widetilde{\psi(u)}_{\psi(x)}, \widetilde{\psi(z)}_{\psi(y)}) = \tfrac{2\pi}{3}\\
\left(\psi(u)\in \psi(x)^{(2\pi/3)}, \psi(z)\in \psi(y)^{(2\pi/3)} \cap \psi(u)^{(2\pi/3)}\right)
\end{matrix} \right.
$$
$$
\;\iff\; \left\{ \begin{matrix}
\sphericalangle(\psi(x),\psi(y)) > \tfrac{2\pi}{3} \; \text{and} \; \sphericalangle(\widetilde{w}_{\psi(x)}, \widetilde{s}_{\psi(y)}) = \tfrac{2\pi}{3}\\
\left(w\in \psi(x)^{(2\pi/3)}, s\in \psi(y)^{(2\pi/3)} \cap w^{(2\pi/3)}\right)
\end{matrix} \right.
\;\iff\; \sphericalangle(\psi(x),\psi(y)) = \pi.
$$
Since $\psi$ preserves the angles $\tfrac{2\pi}{3}$ and $\pi$ in both directions, it also preserves the angle $\tfrac{\pi}{3}$ in both directions. 
Finally, Case 2 completes the proof.
\end{proof}


\section{Symmetries of projective spaces}\label{proj_sec}

The aim of this section is to prove our main results on transformations of projective spaces namely, Theorems \ref{Umain-real}, \ref{Umain-complex-2d} and \ref{Umain-complex}.
We begin with the verification of Theorem \ref{Umain-complex-2d} which is a consequence of Theorem \ref{Emain}.

\begin{proof}[Proof of Theorem \ref{Umain-complex-2d}]
By Bloch's represantion (see e.g. \cite{Bloch}), elements of $P(\C^2)$ can be represented as points on the unit sphere $S_{\R^3}$ of $\R^3$ in the following way:
\begin{equation}\label{qubit-repr}
\rho\colon P(\C^2)\to S_{\R^3}, \quad \rho\left([(\cos\theta, e^{i\nu}\sin\theta)]\right) = (\sin 2\theta \cos\nu, \sin 2\theta \sin\nu, \cos 2\theta),
\end{equation}
where $\theta \in \left[0,\tfrac{\pi}{2}\right]$, $\nu \in [0,2\pi)$.
Furthermore, we have
$$
\sphericalangle\left(\rho([u]),\rho([v])\right) = 2\cdot\measuredangle([u],[v]) \quad ([u],[v]\in P(H)).
$$
Now, we easily obtain that $\rho\circ\phi\circ\rho^{-1}\colon S_{\R^3}\to S_{\R^3}$ preserves the angle $2\alpha$ in both directions.
On one hand, if $\alpha \neq \tfrac{\pi}{4}$, then by Theorem \ref{Emain} we obtain that $\rho\circ\phi\circ\rho^{-1}$ is a $\sphericalangle$-isometry, and thus $\phi$ is a Wigner symmetry.
On the other hand, in case when $\alpha = \tfrac{\pi}{4}$, then from Theorem \ref{Emain} we get that there is a $\sphericalangle$-isometry $\psi\colon S_{\R^3}\to S_{\R^3}$ and a map $\varepsilon\colon S_{\R^3} \to \{-1,1\}$ such that $\rho\circ\phi\circ\rho^{-1} = \varepsilon\cdot\psi$.
Therefore we have $\phi = \rho^{-1}\circ(\varepsilon\cdot\psi)\circ\rho = \epsilon\circ\rho^{-1}\circ\psi\circ\rho$ where $\epsilon \colon P(\C^2)\to P(\C^2)$, $\epsilon([v]) \in \{[v],[v]^\perp\}$ $([v]\in P(H))$ and $\rho^{-1}\circ\psi\circ\rho$ is a Wigner symmetry.
This completes the proof.
\end{proof}

We proceed with verifying some lemmas.
In the first one we show that $\measuredangle$ defines a metric on $P(H)$.

\begin{lem}\label{angle-metric_lem}
Let $H$ be a real or complex inner product space.
Then the angle $\measuredangle$ defines a metric on $P(H)$.
\end{lem}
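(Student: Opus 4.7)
The plan is to verify the three metric axioms. Non-negativity, symmetry, and the identity of indiscernibles (the latter via the equality case of the Cauchy--Schwarz inequality applied to nonzero representatives) are immediate from the definition, so the only substantial content is the triangle inequality
$$
\measuredangle([u],[w]) \leq \measuredangle([u],[v]) + \measuredangle([v],[w]).
$$
Set $a,b,c$ to be these three angles in the obvious order.

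After passing to unit representatives, I would exploit the fact that multiplication by a unit scalar does not alter the underlying line to adjust the phases of $u$ and $w$ (leaving $v$ fixed) so that both $\langle u,v\rangle = \cos a$ and $\langle v,w\rangle = \cos b$ are real and nonnegative. If $a + b \geq \tfrac{\pi}{2}$ the inequality is automatic because $c\in\left[0,\tfrac{\pi}{2}\right]$, and if $\min(a,b)=0$ two of the lines coincide and the inequality is trivial. Hence it suffices to treat the case $0 < a, b$ and $a + b < \tfrac{\pi}{2}$.

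In that remaining case, writing $u = \cos a\cdot v + \sin a\cdot u'$ and $w = \cos b\cdot v + \sin b\cdot w'$ with $u', w'$ unit vectors orthogonal to $v$ yields the spherical cosine-type identity
$$
\langle u,w\rangle = \cos a\cos b + \sin a\sin b\,\langle u',w'\rangle.
$$
Combining the Cauchy--Schwarz bound $|\langle u',w'\rangle|\leq 1$ with the reverse triangle inequality in $\K$, and using $\cos a\cos b \geq \sin a\sin b$ (which is precisely the standing hypothesis $a+b\leq\tfrac{\pi}{2}$) to resolve the absolute value, one obtains $|\langle u,w\rangle| \geq \cos(a+b)$; taking $\arccos$ delivers $c \leq a+b$. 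I foresee no genuine obstacle; the only mildly delicate ingredient is the phase adjustment, which is exactly what reduces the complex setting to the familiar real spherical-triangle estimate.
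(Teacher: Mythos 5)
Your proof is correct, and the core estimate is the same as the paper's: both arguments reduce everything to the triangle inequality, expand the two outer lines against the middle one to get the identity $\langle u,w\rangle = \cos a\cos b + \sin a \sin b\,\langle u',w'\rangle$, split on whether $a+b\geq\tfrac{\pi}{2}$, and then bound $|\langle u,w\rangle|$ from below by $\cos(a+b)$ via Cauchy--Schwarz and the reverse triangle inequality. The difference is organizational but worth noting: you decompose $u$ and $w$ coordinate-freely relative to the middle line $[v]$ (after a phase adjustment making $\langle u,v\rangle$ and $\langle v,w\rangle$ real and nonnegative), which treats all dimensions at once. The paper instead splits into cases: for $\dim H=2$ in the complex setting it invokes Bloch's representation together with the fact that $\sphericalangle$ is a metric on $S_{\R^3}$, and for $\dim H>2$ it builds an explicit orthonormal frame $\{e_1,e_2,e_3\}$ anchored at the middle line and writes the other two lines in those coordinates. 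Your version buys uniformity and avoids the external input of Bloch's representation; the paper's version makes the phase bookkeeping explicit through the unimodular scalar $\mu$ in the coordinates of $w$. Both are complete; just make sure, when writing it up, to record explicitly that $u' = (u-\cos a\cdot v)/\sin a$ is a unit vector orthogonal to $v$ (which is where the hypothesis $a>0$ from your case split is used), and likewise for $w'$.
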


\begin{proof} 
The $\dim H < 2$ case is obvious.
If $\dim H = 2$, then this is quite trivial in the real case.
For the complex case, we utilize Bloch's representation and the fact that $\sphericalangle$ gives a metric on $S_{\R^3}$.

Now, let us consider the general case, i.e. when $\dim H > 2$.
Let $u,v,w\in H$ with $\|u\| = \|v\| = \|w\| = 1$. 
If $\dim [u,v,w] \leq 2$, then we are done by the two-dimensional case.
Otherwise, let $\{e_1,e_2,e_3\}$ be an orthonormal system in $H$ such that $[u] = [e_1]$, $v\in[e_1,e_2]$ and $w\in [e_1,e_2,e_3]$. 
We may assume without loss of generality, by replacing $e_j$'s by their scalar multiples, that we have $[v] = [\cos \alpha\cdot e_1 + \sin\alpha \cdot e_2]$ and $[w] = [\cos\beta \cdot e_1 + \mu \sin\beta \cos\gamma \cdot e_2 + \sin\beta \sin\gamma \cdot e_3]$ with some $\alpha \in \left(0,\tfrac{\pi}{2}\right]$, $\beta \in \left(0,\tfrac{\pi}{2}\right]$, $\gamma \in \left(0,\tfrac{\pi}{2}\right]$ and $|\mu| = 1$. 
We easily compute $\measuredangle([u],[v]) = \alpha$ and $\measuredangle([u],[w]) = \beta$.

On one hand, if $\alpha+\beta \geq \tfrac{\pi}{2}$, then we obviously have $\measuredangle([v],[w]) \leq \measuredangle([u],[v]) + \measuredangle([u],[w])$. 
On the other hand, if $\alpha+\beta < \tfrac{\pi}{2}$, then we have $\cos\alpha \cos\beta > \sin\alpha \sin\beta$. 
Therefore we can estimate in the following way:
\begin{equation}
\begin{gathered}
\measuredangle([v],[w]) = \arccos|\langle v,w\rangle| = \arccos|\cos \alpha \cos\beta + \overline{\mu} \sin\alpha \sin\beta \cos\gamma|
\\
\leq \arccos(\cos\alpha \cos\beta - \sin\alpha \sin\beta \cos\gamma)
\\
\leq \arccos(\cos\alpha \cos\beta - \sin\alpha \sin\beta) = \alpha + \beta = \measuredangle([u],[v]) + \measuredangle([u],[w]),
\end{gathered}
\end{equation}
which verifies the triangle inequality.
\end{proof}

We shall use the notation 
$$
\mathcal{A}^\alpha = \{ [u]\in P(H) \colon \measuredangle([u],[v]) = \alpha, \; \forall \; [v]\in\mathcal{A} \},
$$
where $\mathcal{A} \subset P(H)$ and $\alpha \in (0,\tfrac{\pi}{2}]$. 
If $\mathcal{A} = \{[v]\}$ then we will simply write $[v]^\alpha$ instead of $\{[v]\}^\alpha$.
Let us note that the previously defined set $\mathcal{A}^\perp$ coincides with $\mathcal{A}^{\pi/2}$.
We will denote the set of numbers in $\K \in \{\C,\R\}$ with unit modulus by $\irC$.
In the following lemma the dimensionality assumption is crucial.

\begin{lem}\label{proj-0-1-element_lem}
Let $H$ be a real or complex inner product space with $\dim H \geq 3$.
Assume that $0 < \alpha \leq \beta \leq \tfrac{\pi}{2}$, and $v,w\in H$ with $\|v\| = \|w\| = 1$ and $\gamma := \measuredangle([v],[w]) \neq 0$. 
Then we have
\begin{equation}\label{proj-0-1-element_eq}
\begin{gathered}[]
[v]^\alpha\cap[w]^\beta = \emptyset \;\iff\; \gamma < \beta - \alpha \text{ or } \gamma > \alpha + \beta,\\
\card([v]^\alpha\cap[w]^\beta) = 1 \;\iff\; \left\{\begin{matrix} \gamma = \beta - \alpha, \text{ or } \gamma = \alpha + \beta < \tfrac{\pi}{2}, \\ \text{ or } \dim H = 3 \text{ and } \alpha = \beta = \tfrac{\pi}{2}. \end{matrix}\right.
\end{gathered}
\end{equation}
Moreover, if $\gamma = \tfrac{\pi}{2} = \alpha + \beta$, then $[v]^\alpha\cap[w]^\beta = \{[\cos\alpha \cdot v + \lambda\sin\alpha \cdot w] \colon \lambda\in\irC\}$.
\end{lem}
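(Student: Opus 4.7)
The plan is to parametrize $[v]^\alpha \cap [w]^\beta$ explicitly in a basis adapted to $v$ and $w$, and then count solutions. After replacing the representative of $[w]$ by a unit-modulus scalar multiple, I may assume $\langle v, w \rangle = \cos\gamma$; define $w' := (w - \cos\gamma\cdot v)/\sin\gamma$ if $\gamma < \tfrac{\pi}{2}$ and $w' := w$ if $\gamma = \tfrac{\pi}{2}$, so that $\{v, w'\}$ is orthonormal and $w = \cos\gamma\cdot v + \sin\gamma\cdot w'$. A unit vector $u\in H$ decomposes uniquely as $u = av + bw' + z$ with $a,b \in \K$, $z\in [v,w]^\perp$, and $|a|^2 + |b|^2 + \|z\|^2 = 1$. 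Multiplying $u$ by a scalar of modulus one does not affect $[u]$, so I enforce $a\geq 0$; the representative of $[u]$ is then unique. The angle conditions translate to $a = \cos\alpha$ and $|b\sin\gamma + \cos\alpha\cos\gamma| = \cos\beta$. A key bookkeeping point: when $a = \cos\alpha > 0$, distinct pairs $(b,z)$ produce distinct lines $[u]$.

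I would then split into two main cases. First, if $\alpha = \beta = \tfrac{\pi}{2}$, then $a = 0$ and the second condition forces $b\sin\gamma = 0$, hence $b = 0$ since $\sin\gamma > 0$; thus $u = z$ ranges over unit vectors in $[v,w]^\perp$, and $[v]^\alpha\cap[w]^\beta$ is canonically identified with $P([v,w]^\perp)$. This is a single line precisely when $\dim_\K[v,w]^\perp = 1$, i.e.\ when $\dim H = 3$, and is infinite otherwise. Second, if $\alpha < \tfrac{\pi}{2}$ (so $a > 0$), I split further on $\gamma$.

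For $\alpha < \tfrac{\pi}{2}$ and $\gamma < \tfrac{\pi}{2}$, the second condition places $b$ on $\{b\in\K : |b + \cos\alpha\cot\gamma| = \cos\beta/\sin\gamma\}$, which is a pair of (possibly coinciding) real points in the real case and a circle in the complex case; the constraint $\|z\|^2 = \sin^2\alpha - |b|^2 \geq 0$ intersects this with the closed disk $|b| \leq \sin\alpha$. A direct computation using the addition formulas for $\cos$ shows that this intersection is empty exactly when $\gamma < \beta - \alpha$ or $\gamma > \alpha + \beta$, and reduces to a single $b$ (necessarily with $|b|=\sin\alpha$, hence forcing $z=0$) exactly when the two circles are tangent, which happens only for $\gamma = \beta - \alpha$ (external tangency) or $\gamma = \alpha + \beta$ (internal tangency, with the disk sitting inside the $b$-circle). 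In every other nonempty configuration one finds either two real $b$'s or an arc of complex $b$'s, and whenever some valid $b$ has $|b| < \sin\alpha$ the orthogonal component $z$ varies over a nondegenerate sphere in $[v,w]^\perp$, yielding multiple (infinitely many over $\C$) distinct lines. For $\alpha < \tfrac{\pi}{2}$ and $\gamma = \tfrac{\pi}{2}$, the constraint collapses to $|b| = \cos\beta$ with $\|z\|^2 = \sin^2\alpha - \cos^2\beta$; emptiness is $\alpha + \beta < \tfrac{\pi}{2}$, and at the boundary $\alpha + \beta = \tfrac{\pi}{2} = \gamma$ we have $z = 0$ and $b = \lambda\sin\alpha$ with $\lambda \in \irC$, yielding (using $w' = w$) exactly the family $\{[\cos\alpha\cdot v + \lambda\sin\alpha\cdot w] : \lambda\in\irC\}$ of the \emph{moreover} part, while $\alpha + \beta > \tfrac{\pi}{2}$ again produces infinitely many lines via the sphere of $z$'s.

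Assembling these cases gives both equivalences of \eqref{proj-0-1-element_eq}: the singleton branches $\gamma = \beta - \alpha$ and $\gamma = \alpha + \beta < \tfrac{\pi}{2}$ come from the $\gamma < \tfrac{\pi}{2}$ analysis, and the exceptional $\dim H = 3$, $\alpha = \beta = \tfrac{\pi}{2}$ branch comes from the first main case. The main difficulty will be the tangency analysis for the circle-disk intersection in the complex $\gamma < \tfrac{\pi}{2}$ subcase, where one must correctly distinguish the two tangency configurations that genuinely yield singletons from the third configuration (the $b$-circle entirely inside the disk, $C\cap D = C$) which instead gives a whole circle of solutions, and handle the transition at $\gamma = \alpha + \beta = \tfrac{\pi}{2}$ carefully so that the $\irC$-family of the \emph{moreover} clause is not miscounted as a tangency singleton.
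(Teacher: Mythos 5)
Your proposal is correct and follows essentially the same route as the paper: both fix an orthonormal basis adapted to $[v],[w]$, write a general element of $[v]^\alpha$ with a component in $[v,w]$ and a component $z$ orthogonal to it, translate membership in $[w]^\beta$ into the modulus equation $|\cos\alpha\cos\gamma+b\sin\gamma|=\cos\beta$, and observe that any solution with $z\neq 0$ (equivalently $|b|<\sin\alpha$) automatically produces several lines, so that singletons occur only at the tangency values $\gamma=\beta-\alpha$ and $\gamma=\alpha+\beta<\tfrac{\pi}{2}$, with the $\alpha=\beta=\tfrac{\pi}{2}$ and $\gamma=\tfrac{\pi}{2}=\alpha+\beta$ cases handled separately exactly as in the paper. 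Your circle--disk phrasing of the counting step is only a cosmetic repackaging of the paper's analysis of the map $\vartheta(\lambda,\delta)=\cos\alpha\cos\gamma+\lambda\sin\alpha\sin\gamma\cos\delta$.
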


\begin{proof}
If $\alpha = \tfrac{\pi}{2}$, then we also have $\beta = \tfrac{\pi}{2}$.
Clearly, we have $\card([v]^{\pi/2}\cap[w]^{\pi/2}) = 1$ if and only if $\dim H = 3$.
Therefore from now on we may assume $0 < \alpha < \tfrac{\pi}{2}$.
The first possibility in \eqref{proj-0-1-element_eq} is obvious by the triangle inequality, therefore we may assume throughout the proof that $\beta - \alpha \leq \gamma \leq \alpha + \beta$.
We fix an orthonormal system $\{e_1,e_2\}$ such that $[v] = [e_1]$ and $[w] = [\cos\gamma \cdot e_1 + \sin\gamma \cdot e_2]$. 
Let $[u]\in [v]^\alpha$, then
\begin{equation}\label{form-of-u_eq}
[u] = [\cos\alpha\cdot e_1 + \lambda \sin\alpha \cos\delta\cdot e_2 + \mu \sin\alpha \sin\delta \cdot e_3]
\end{equation}
holds with some $\delta\in[0,\tfrac{\pi}{2}]$, $|\lambda| = |\mu| = 1$, and $e_3$ such that $\|e_3\| = 1$, $e_3\perp e_j$ $(j=1,2)$.
Of course, the three coordinates in \eqref{form-of-u_eq} are all non-zero if and only if $0 < \delta < \frac{\pi}{2}$.

We have $[u]\in [v]^\alpha\cap[w]^\beta$ if and only if 
\begin{equation}\label{cosb_eq}
\cos \beta = |\cos\alpha\cos\gamma + \lambda\sin\alpha\sin\gamma\cos\delta|.
\end{equation}
Observe that if there is a solution $(\lambda,\delta)\in \irC \times (0,\tfrac{\pi}{2}]$ of \eqref{cosb_eq}, then choosing $\mu = -1$ and $1$ in \eqref{form-of-u_eq} gives us two different elements of $[v]^\alpha\cap[w]^\beta$.
Therefore we conclude that $\card([v]^\alpha\cap[w]^\beta) = 1$ implies that for every solution of \eqref{cosb_eq} we have $\delta = 0$.
In particular, we have $[u]\in P_{[v],[w]}$

Clearly, the map
$$
\vartheta \colon \irC \times [0,\tfrac{\pi}{2}] \to \K, \quad (\lambda,\delta) \mapsto \cos\alpha\cos\gamma + \lambda\sin\alpha\sin\gamma\cos\delta
$$
is injective on $\irC \times [0,\tfrac{\pi}{2})$. Furthermore, we have $\vartheta\big( \irC \times [0,\tfrac{\pi}{2}) \big) = \{z\in\K \colon 0 < |z-\cos\alpha\cos\gamma| \leq \sin\alpha\sin\gamma \}$ and $\vartheta\big( \irC \times \{\tfrac{\pi}{2}\} \big) = \{\cos\alpha\cos\gamma\}$.
Therefore $\card([v]^\alpha\cap[w]^\beta) > 0$ holds if and only if $\cos(\alpha+\gamma) \leq \cos\beta \leq \cos(\alpha-\gamma)$, which is equivalent to $|\alpha-\gamma| \leq \beta \leq \alpha+\gamma$. 
It is straightforward that this condition is fulfilled under our assumption, i.e. $\beta - \alpha \leq \gamma \leq \alpha + \beta$.
(In fact, these two conditions are equivalent).

Now, the following equivalence-chain is straightforward from the observations made above:
$$
\card([v]^\alpha\cap[w]^\beta) = 1 \;\iff\; (\lambda, \delta) = (-1,0) \text{ or } (1,0) \text{ is a unique solution for } \eqref{cosb_eq}
$$
$$
\iff \; \left\{\begin{matrix}
(\lambda,\delta) = (1,0) \text{ is a solution for } \eqref{cosb_eq} \text{ and } \cos\alpha\cos\gamma \neq 0, \text{ or } \\
(\lambda,\delta) = (-1,0) \text{ is a solution for } \eqref{cosb_eq} \text{ and } \cos\alpha\cos\gamma \geq \sin\alpha\sin\gamma \, (>0)
\end{matrix}\right.
$$
$$
\iff \; \left\{\begin{matrix}
(\lambda,\delta) = (1,0) \text{ is a solution for } \eqref{cosb_eq} \text{ and } \gamma\neq\tfrac{\pi}{2}, \text{ or } \\
(\lambda,\delta) = (-1,0) \text{ is a solution for } \eqref{cosb_eq} \text{ and } \alpha + \gamma \leq \tfrac{\pi}{2}
\end{matrix}\right.
$$
$$
\iff \; \cos\beta = \cos(\alpha-\gamma) \text{ and } \gamma\neq\tfrac{\pi}{2}, \text{ or } \cos\beta = \cos(\alpha+\gamma) \text{ and } \alpha + \gamma \leq \tfrac{\pi}{2}
$$
$$
\iff \; \beta = \alpha-\gamma \text{ and } \gamma\neq\tfrac{\pi}{2}, \text{ or } \beta = \gamma-\alpha \text{ and } \gamma\neq\tfrac{\pi}{2}, \text{ or } \beta = \alpha+\gamma
$$
$$
\iff \; \gamma = \alpha-\beta, \text{ or } \gamma = \alpha + \beta < \tfrac{\pi}{2}, \text{ or } \gamma = \beta - \alpha
$$
$$
\iff \; \gamma = \alpha + \beta < \tfrac{\pi}{2} \text{ or } \gamma = \beta-\alpha.
$$
This verifies \eqref{proj-0-1-element_eq}.

Finally, if $\gamma = \tfrac{\pi}{2} = \alpha + \beta$, then \eqref{cosb_eq} becomes
$$
\sin \alpha = \cos \beta = \sin\alpha\cos\delta.
$$
From this we infer $\delta = 0$, and by \eqref{form-of-u_eq} we get
$$
[v]^\alpha\cap[w]^\beta \subseteq [v]^\alpha \subseteq \{[\cos\alpha \cdot e_1 + \lambda\sin\alpha \cdot e_2] \colon \lambda\in\irC\} = \{[\cos\alpha \cdot v + \lambda\sin\alpha \cdot w] \colon \lambda\in\irC\}.
$$
But obviously, $\{[\cos\alpha \cdot v + \lambda\sin\alpha \cdot w] \colon \lambda\in\irC\} \subseteq [v]^\alpha\cap[w]^\beta$ is also fulfilled, which completes the proof.
\end{proof}

The following three lemmas are consequences of Lemma \ref{proj-0-1-element_lem} and are proven in a similar way as the lemmas in Section \ref{sphere_sec}.
Namely, in each of them we provide a characterization of a relative position in terms of some other relative positions which are preserved by our transformation in both directions, and therefore we conclude that this relative position is also preserved in both directions.

\begin{lem}\label{multiple-measuredangle-leq_lem}
Let $H$ be a real or complex inner product space with $\dim H \geq 3$, and $\phi\colon P(H) \to P(H)$ be a bijection.
If $\phi$ preserves the angle $\alpha \in (0,\tfrac{\pi}{4})$ in both directions, then $\phi$ shares the following property:
\begin{equation}\label{proj-multiple-angle-leq_eq}
\measuredangle([v],[w]) \leq j\alpha \;\iff\; \measuredangle(\phi([v]),\phi([w])) \leq j\alpha \quad \left([v],[w] \in P(H), \; 2\leq j < \tfrac{\pi}{2\alpha}\right)
\end{equation}
\end{lem}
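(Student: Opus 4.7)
The plan is to mirror the approach of Lemma~\ref{multiple-sphericalangle-eq-leq_lem}: I will rephrase the relation $\measuredangle([v],[w])\leq j\alpha$ in a form that refers only to the relation ``angle equals $\alpha$'', namely I will prove that $\measuredangle([v],[w])\leq j\alpha$ holds if and only if there exist lines $[v_1],\dots,[v_{j-1}]\in P(H)$ satisfying
\begin{equation*}
\measuredangle([v],[v_1])=\measuredangle([v_l],[v_{l+1}])=\measuredangle([v_{j-1}],[w])=\alpha \quad (l=1,\dots,j-2).
\end{equation*}
Once this intrinsic characterization is established, \eqref{proj-multiple-angle-leq_eq} follows at once by transporting such chains via $\phi$ and via $\phi^{-1}$, both of which preserve the angle $\alpha$ in both directions by hypothesis.

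The ``$\Leftarrow$'' direction of the characterization is the triangle inequality for the metric $\measuredangle$ on $P(H)$ supplied by Lemma~\ref{angle-metric_lem}. The ``$\Rightarrow$'' direction I will prove by induction on $j$. For the base case $j=2$: if $[v]=[w]$ then any $[v_1]\in[v]^\alpha$ works, while if $[v]\neq[w]$, Lemma~\ref{proj-0-1-element_lem} with $\beta=\alpha$ tells me that $[v]^\alpha\cap[w]^\alpha\neq\emptyset$ is equivalent to $\gamma:=\measuredangle([v],[w])\leq 2\alpha$; no boundary case of that lemma is triggered, since $2\alpha<\tfrac{\pi}{2}$ by $\alpha<\tfrac{\pi}{4}$. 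For the inductive step with $j\geq 3$ and $\gamma\leq j\alpha$, I will produce $[v_1]\in[v]^\alpha$ such that $\measuredangle([v_1],[w])\leq (j-1)\alpha$ and then apply the induction hypothesis to the pair $([v_1],[w])$ to obtain the remaining chain $[v_2],\dots,[v_{j-1}]$. The existence of $[v_1]$ comes from a further appeal to Lemma~\ref{proj-0-1-element_lem}: setting $\beta:=\max(\alpha,\gamma-\alpha)$, one checks that $\alpha\leq\beta\leq(j-1)\alpha$ (using $\gamma\leq j\alpha$ and $j\geq 3$) and that $|\gamma-\alpha|\leq\beta\leq\gamma+\alpha$, so Lemma~\ref{proj-0-1-element_lem} yields $[v]^\alpha\cap[w]^\beta\neq\emptyset$, and any element of this intersection serves as $[v_1]$.

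With the characterization in hand, the proof of \eqref{proj-multiple-angle-leq_eq} is a one-line transport of chains of consecutive $\alpha$-angles under $\phi$ and $\phi^{-1}$, exactly as at the end of the proof of Lemma~\ref{multiple-sphericalangle-eq-leq_lem}. The main delicate point is not any individual computation but the bookkeeping around the corner cases of Lemma~\ref{proj-0-1-element_lem} (the possibilities $\gamma=\alpha+\beta=\tfrac{\pi}{2}$ and the $\dim H=3$, $\alpha=\beta=\tfrac{\pi}{2}$ anomaly for orthogonality). All of these are comfortably avoided here, because the joint hypotheses $\alpha<\tfrac{\pi}{4}$ and $j\alpha<\tfrac{\pi}{2}$ force every angle appearing in the argument---the angles $l\alpha$ for $l\leq j$ as well as the auxiliary $\beta$---to stay strictly inside $(0,\tfrac{\pi}{2})$. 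This is the single point at which the upper bound $\alpha<\tfrac{\pi}{4}$ is genuinely used.
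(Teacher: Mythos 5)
Your proof is correct and follows the paper's strategy exactly: characterize $\measuredangle([v],[w])\leq j\alpha$ by the existence of a chain of $j-1$ intermediate lines at consecutive angle $\alpha$ (with the reverse implication coming from the triangle inequality of Lemma~\ref{angle-metric_lem}), then transport chains by $\phi$ and $\phi^{-1}$. The only deviation is in how the chain is produced --- the paper finds it inside $P_{[v],[w]}$ via Bloch's representation, whereas you build it by induction on $j$ from Lemma~\ref{proj-0-1-element_lem} --- and your variant works, modulo the trivial remark that the inductive step's appeal to that lemma formally requires $\gamma\neq 0$, the case $[v]=[w]$ being handled directly by picking any $[v_1]\in[v]^\alpha$.
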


\begin{proof}
Clearly, we have $\measuredangle([v],[w]) \leq j\alpha$ $\left(j\in\N, \; 2\leq j < \frac{\pi}{2\alpha}\right)$ if and only if there exists a $(j-1)$-element sequence $\{[u_l]\}_{l=1}^{j-1} \subset P(H)$ such that $\measuredangle([v],[u_1]) = \measuredangle([u_l],[u_{l+1}]) = \measuredangle([u_{j-1}],[w]) = \alpha$ $(l\in\N, 1\leq l \leq j-2)$.
Namely, one direction is immediate by Lemma \ref{angle-metric_lem}.
For the other one, by Bloch's representation we easily infer the existence of such a $(j-1)$-element sequence in $P_{[v],[w]}$.
Therefore, using the same technique as in Lemma \ref{multiple-sphericalangle-eq-leq_lem}, the property \eqref{proj-multiple-angle-leq_eq} is yielded.
\end{proof}

\begin{lem}\label{proj-diff-sum_lem}
Let $H$ be a real or complex inner product space with $\dim H \geq 3$, $0 < \alpha < \beta < \tfrac{\pi}{2}$, and $\phi\colon P(H) \to P(H)$ be a bijection.
If $\phi$ preserves the angles $\alpha$ and $\beta$ in both directions, then
\begin{itemize}
\item[(i)] $\phi$ also preserves the angles $\beta-\alpha$ and $\alpha + \beta$ in both directions if $\alpha + \beta < \tfrac{\pi}{2}$,
\item[(ii)] $\phi$ also preserves the angle $\beta-\alpha$ in both directions if $\alpha + \beta \geq \tfrac{\pi}{2}$,
\end{itemize}
\end{lem}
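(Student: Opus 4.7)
The plan is to follow the template established in Lemma \ref{sphere-diff-sum_lem}, replacing the spherical intersection set $x^{(\alpha)}\cap y^{(\beta)}$ with its projective analogue $[v]^\alpha\cap[w]^\beta$ and leveraging the cardinality characterization supplied by Lemma \ref{proj-0-1-element_lem}. The starting observation is that, since $\phi$ is bijective and preserves both $\alpha$ and $\beta$ in both directions,
\[
\phi\bigl([v]^\alpha\cap[w]^\beta\bigr) = \phi([v])^\alpha\cap\phi([w])^\beta \qquad ([v],[w]\in P(H)),
\]
so in particular the cardinality of this set is a $\phi$-invariant of the pair $([v],[w])$.

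Under the standing hypothesis $0<\alpha<\beta<\tfrac{\pi}{2}$, the exceptional case $\alpha=\beta=\tfrac{\pi}{2}$ of Lemma \ref{proj-0-1-element_lem} does not arise, so the singleton criterion reads: $\card([v]^\alpha\cap[w]^\beta)=1$ iff $\gamma\in\{\beta-\alpha\}$ (case $\alpha+\beta\geq\tfrac{\pi}{2}$) or $\gamma\in\{\beta-\alpha,\alpha+\beta\}$ (case $\alpha+\beta<\tfrac{\pi}{2}$). Part (ii) is then immediate: the equality $\measuredangle([v],[w])=\beta-\alpha$ is characterized purely by $\card([v]^\alpha\cap[w]^\beta)=1$, which transfers intact to $\phi([v]),\phi([w])$.

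For part (i), the equivalence only yields that $\phi$ preserves the set-valued invariant $\{\beta-\alpha,\alpha+\beta\}$ in both directions, and we still need to separate the two angles. For this I would invoke Lemma \ref{multiple-measuredangle-leq_lem}. The hypothesis $\alpha+\beta<\tfrac{\pi}{2}$ forces $\alpha<\tfrac{\pi}{4}$, so that lemma applies. The open interval $(\beta-\alpha,\alpha+\beta)$ has length $2\alpha$, hence contains an integer multiple $j\alpha$ of $\alpha$ with $j\geq 2$; the upper bound $j\alpha<\alpha+\beta<\tfrac{\pi}{2}$ automatically gives $j<\tfrac{\pi}{2\alpha}$, so this $j$ is admissible in Lemma \ref{multiple-measuredangle-leq_lem}. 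Then
\[
\measuredangle([v],[w])=\beta-\alpha \;\iff\; \measuredangle([v],[w])\in\{\beta-\alpha,\alpha+\beta\}\ \text{and}\ \measuredangle([v],[w])\leq j\alpha,
\]
and both conditions on the right are preserved by $\phi$ in both directions, so $\phi$ preserves the angle $\beta-\alpha$ in both directions. Consequently $\phi$ also preserves $\alpha+\beta$ in both directions, since the set-invariant $\{\beta-\alpha,\alpha+\beta\}$ is preserved.

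I do not expect any serious obstacle: the argument is the projective mirror of Lemma \ref{sphere-diff-sum_lem}, and the only real verification is the arithmetic that locates a suitable $j$ in the interval $(\beta-\alpha,\alpha+\beta)$ while remaining below $\tfrac{\pi}{2\alpha}$, both of which are automatic from $\alpha+\beta<\tfrac{\pi}{2}$ and $\beta>\alpha$.
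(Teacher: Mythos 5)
Your proposal is correct and follows essentially the same route as the paper: characterize the pair $\{\beta-\alpha,\alpha+\beta\}$ via the cardinality of $[v]^\alpha\cap[w]^\beta$ using Lemma \ref{proj-0-1-element_lem}, then (for part (i)) separate the two angles by locating a multiple $j\alpha$ strictly between them and invoking Lemma \ref{multiple-measuredangle-leq_lem}, which applies since $\alpha+\beta<\tfrac{\pi}{2}$ forces $\alpha<\tfrac{\pi}{4}$. This is exactly the argument the paper sketches by reference to Lemma \ref{sphere-diff-sum_lem}, and your arithmetic verifying the existence of an admissible $j$ is sound.
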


\begin{proof}
(i): An application of Lemma \ref{proj-0-1-element_lem} gives the following property:
\begin{equation}\label{proj-a-b_eq}
\measuredangle([v],[w]) \in \{\beta-\alpha, \alpha+\beta\} \iff \measuredangle(\phi([v]),\phi([w])) \in \{\beta-\alpha, \alpha+\beta\}  \;\; ([v],[w] \in P(H)).
\end{equation}
Let us observer that there exists a $j\in\N$, $2\leq j < \frac{\pi}{2\alpha}$ with $\beta-\alpha < j\alpha < \alpha+\beta$.
A technique similar to the one which was used in Lemma \ref{sphere-diff-sum_lem} completes the proof of this case.

(ii): A similar, but easier argument verifies this part.
\end{proof}

\begin{lem}\label{multiple-measuredangle_lem}
Let $H$ be a real or complex inner product space with $\dim H \geq 3$, and $\phi\colon P(H) \to P(H)$ be a bijection.
If $\phi$ preserves the angle $\alpha \in (0,\tfrac{\pi}{4})$ in both directions, then $\phi$ shares the following property:
\begin{equation}\label{proj-multiple-angle_eq}
\measuredangle([v],[w]) = j\alpha \;\iff\; \measuredangle(\phi([v]),\phi([w])) = j\alpha \quad \left([v],[w] \in P(H), \; 2\leq j < \frac{\pi}{2\alpha}\right)
\end{equation}
\end{lem}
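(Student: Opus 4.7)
The plan is to prove \eqref{proj-multiple-angle_eq} by induction on $j$ (with $2 \leq j < \pi/(2\alpha)$), combining the cardinality dichotomy of Lemma \ref{proj-0-1-element_lem} to handle the base case $j=2$ with the sum-preservation in Lemma \ref{proj-diff-sum_lem}(i) for the inductive step.

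For the base case $j = 2$, I would apply Lemma \ref{proj-0-1-element_lem} with $\beta = \alpha$. Since $\alpha < \tfrac{\pi}{4}$, one has $2\alpha < \tfrac{\pi}{2}$, so the lemma gives the characterization $\card\bigl([v]^\alpha \cap [w]^\alpha\bigr) = 1$ if and only if $\measuredangle([v],[w]) = 2\alpha$, for all $[v],[w]\in P(H)$ (the degenerate case $[v]=[w]$ poses no problem, since the intersection then equals the infinite set $[v]^\alpha$). Because $\phi$ preserves the angle $\alpha$ in both directions, it maps $[v]^\alpha$ bijectively onto $\phi([v])^\alpha$, hence preserves the cardinalities of such intersections in both directions; this immediately yields preservation of the angle $2\alpha$ in both directions.

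For the inductive step, assume that the angle $(j-1)\alpha$ is preserved by $\phi$ in both directions for some $j$ with $3 \leq j < \pi/(2\alpha)$. Setting $\beta := (j-1)\alpha$, I observe that $0 < \alpha < \beta < \tfrac{\pi}{2}$ (the strict inequality $\alpha < \beta$ uses $j \geq 3$) and $\alpha + \beta = j\alpha < \tfrac{\pi}{2}$. Lemma \ref{proj-diff-sum_lem}(i) then applies directly and produces preservation of $\alpha + \beta = j\alpha$ in both directions, closing the induction.

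I do not anticipate any real obstacle here, because the two technical ingredients required (the cardinality dichotomy and the sum-of-angles preservation) have already been prepared in the two preceding lemmas of this section. It is worth noting, however, that the straightforward analogue of the sphere-case proof (property \eqref{multiple-angle-eq_eq} in Lemma \ref{multiple-sphericalangle-eq-leq_lem}) relied on a unique-chain characterization of the angle $j\alpha$ that is not so readily available in projective space, because chains of $\alpha$-related lines need not remain inside the two-dimensional projective subspace $P_{[v],[w]}$. The cardinality-plus-induction route bypasses this difficulty entirely and keeps the argument short.
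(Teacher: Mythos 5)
Your proposal is correct and follows essentially the same route as the paper: the base case $j=2$ via the cardinality characterization $\card\bigl([v]^\alpha\cap[w]^\alpha\bigr)=1 \iff \measuredangle([v],[w])=2\alpha$ from Lemma \ref{proj-0-1-element_lem}, and the step $j-1 \to j$ by applying Lemma \ref{proj-diff-sum_lem}(i) with $\beta=(j-1)\alpha$. The paper gives exactly this two-line sketch, so nothing further is needed.
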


\begin{proof}
For $j=2$, this is straightforward from Lemma \ref{proj-0-1-element_lem}.
For $j>2$, the statement can be verified using a simple recursion and (i) of Lemma \ref{proj-diff-sum_lem}.
\end{proof}

Next, we provide a counterpart of Lemma \ref{small-sphericalangles_lem}.
Since its verification can be done along the same lines as the proof of Lemma \ref{small-sphericalangles_lem}, we shall omit it.

\begin{lem}\label{small-measuredangles_lem}
Let $H$ be a real or complex inner product space with $\dim H \geq 3$, and $\phi\colon P(H) \to P(H)$ be a bijection.
If there is a sequence of positive angles $\{\alpha_n\}_{n=1}^\infty \subset (0,\tfrac{\pi}{2})$ such that $\lim_{n\to\infty} \alpha_n = 0$ and $\phi$ preserves these angles in both directions, then $\phi$ is a Wigner symmetry.
\end{lem}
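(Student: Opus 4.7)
The plan is to mimic the proof of Lemma \ref{small-sphericalangles_lem} in the projective setting, using Lemmas \ref{multiple-measuredangle-leq_lem} and \ref{multiple-measuredangle_lem} in place of Lemma \ref{multiple-sphericalangle-eq-leq_lem}. Since $\alpha_n\to 0$, by passing to a tail of the sequence we may assume every $\alpha_n$ lies in $(0,\tfrac{\pi}{4})$, so that these two lemmas apply to each $\alpha_n$. Consequently, for every $n$ and every integer $j$ with $2\le j<\tfrac{\pi}{2\alpha_n}$, both of the relations $\measuredangle([v],[w])\le j\alpha_n$ and $\measuredangle([v],[w])=\alpha_n$ are preserved by $\phi$ in both directions.

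Combining these preservation properties, for any $j\ge 3$ satisfying $j\alpha_n<\tfrac{\pi}{2}$ the interval condition
\[
(j-1)\alpha_n<\measuredangle([v],[w])\le j\alpha_n
\]
is preserved by $\phi$ in both directions, since it is the conjunction of $\measuredangle([v],[w])\le j\alpha_n$ and the negation of $\measuredangle([v],[w])\le(j-1)\alpha_n$. Now fix any $\gamma\in(0,\tfrac{\pi}{2})$. For $n$ sufficiently large we have $\alpha_n<\min\{\tfrac{\gamma}{3},\tfrac{\pi/2-\gamma}{2}\}$, so choosing $j_n:=\lceil\gamma/\alpha_n\rceil\ge 4$ we get $\gamma\in\bigl((j_n-1)\alpha_n,j_n\alpha_n\bigr]$ with $j_n\alpha_n<\tfrac{\pi}{2}$; hence $\measuredangle(\phi([v]),\phi([w]))$ lies in the same interval, whose length $\alpha_n$ tends to $0$. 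A squeeze gives $\measuredangle(\phi([v]),\phi([w]))=\gamma$, and applying the argument to $\phi^{-1}$ yields the reverse implication.

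Thus $\phi$ preserves every angle in $(0,\tfrac{\pi}{2})$ in both directions. The remaining extreme values are also preserved: $\gamma=0$ by the bijectivity of $\phi$ on $P(H)$, and $\gamma=\tfrac{\pi}{2}$ by taking the negation of the dichotomy $\gamma\in(0,\tfrac{\pi}{2})$ just established. Therefore $\phi$ is a $\measuredangle$-isometry on $P(H)$, and R\"atz's version of Wigner's theorem (recalled in Section~2) concludes that $\phi$ is a Wigner symmetry. Equivalently, one could stop after obtaining that $\phi$ preserves orthogonality in both directions and then apply Lemma \ref{Ulem}.

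There is no serious obstacle here; the only delicate point is the bookkeeping needed to guarantee simultaneously that $j\ge 3$ and $j\alpha_n<\tfrac{\pi}{2}$ in the interval argument, and this is automatic once $\alpha_n$ is small compared with both $\gamma$ and $\tfrac{\pi}{2}-\gamma$.
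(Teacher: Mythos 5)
Your proposal is correct and follows exactly the route the paper intends: the paper omits this proof, stating only that it "can be done along the same lines as the proof of Lemma \ref{small-sphericalangles_lem}", and your argument is precisely that adaptation, using Lemma \ref{multiple-measuredangle-leq_lem} to trap $\measuredangle(\phi([v]),\phi([w]))$ in intervals of length $\alpha_n$ and then squeezing. The bookkeeping ensuring $j_n\geq 3$ and $j_n\alpha_n<\tfrac{\pi}{2}$, and the treatment of the extreme values $0$ and $\tfrac{\pi}{2}$, are all handled correctly.
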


If $H$ is a real or complex inner product space and $[v]$ is a line in it, then the symbol $H\ominus [v]$ will denote the set of those vectors in $H$ which are orthogonal to $v$.
This is a linear subspace, thus it can be considered as an inner product space with the restricted inner product.

Now, we are in the position to present the verification of Theorem \ref{Umain-real}.
In order to do this we will apply Theorem \ref{Emain}.

\begin{proof}[Proof of Theorem \ref{Umain-real}]
During our proof we will distinguish six different cases. 
In the first three of them we will deal with the possibility when $H$ is of dimension at least four, and in the last three ones we will handle the three-dimensional case.

\smallskip

\textbf{Case 1:} when $\dim H \geq 4$ and $0 < \alpha < \tfrac{\pi}{3}$.
Let $v \in S_H$ be arbitrary and $w \in S_H$ such that $\phi([v]) = [w]$ holds. 
Then we have $\phi([v]^\alpha) = \phi([v])^\alpha = [w]^\alpha$.
We consider a bijective isometry $R\colon H\to H$ such that $Rw = v$ and define the transformation $\widetilde\phi\colon P(H) \to P(H)$, $\widetilde\phi([u]) = R(\phi([u]))$ which is bijective and preserves the angle $\alpha$ in both directions.
In addition, we have $\widetilde\phi([v]) = [v]$, whence $\widetilde\phi([v]^\alpha) = [v]^\alpha$ follows.
It is straightforward that
$$
[v]^\alpha = \{[\cos\alpha\cdot v + \sin\alpha\cdot u] \colon u\in S_{H\ominus [v]}\},
$$
where $\dim (H\ominus [v]) \geq 3$.
Therefore we have a bijective map $\psi\colon S_{H\ominus [v]} \to S_{H\ominus [v]}$ which implements the restricted transformation $\widetilde\phi\big|_{[v]^\alpha}$, i.e. we have
$$
\widetilde\phi([\cos\alpha\cdot v + \sin\alpha\cdot u]) = [\cos\alpha\cdot v + \sin\alpha\cdot \psi(u)] \quad (u\in S_{H\ominus [v]}).
$$
We calculate the following for every $u_1, u_2 \in S_{H\ominus [v]}$:
\begin{equation}\label{convert-angle-from-proj-tosphere_eq}
\measuredangle\big([\cos\alpha\cdot v + \sin\alpha\cdot u_1],[\cos\alpha\cdot v + \sin\alpha\cdot u_2]\big) = \arccos \left(\left|\cos^2\alpha + \langle u_1, u_2\rangle \sin^2\alpha\right|\right).
\end{equation}
If $u_1$ and $u_2$ run through $S_{H\ominus [v]}$, then $\cos^2\alpha + \langle u_1, u_2\rangle \sin^2\alpha$ runs through $\left[\cos 2\alpha, 1\right]$.
Thus the above angle runs through the interval $\left[0, \min\left(\tfrac{\pi}{2}, 2\alpha\right)\right]$.

The inequality $-\cos\alpha < \cos 2\alpha$ implies that
$$
\measuredangle\big([\cos\alpha\cdot v + \sin\alpha\cdot u_1],[\cos\alpha\cdot v + \sin\alpha\cdot u_2]\big) = \alpha
$$ 
holds exactly when 
$$
\sphericalangle(u_1,u_2) = \arccos \left(\frac{\cos\alpha}{1+\cos\alpha}\right) \in \left(\tfrac{\pi}{3},\arccos\tfrac{1}{3}\right) \subset \left(\tfrac{\pi}{3},\tfrac{\pi}{2}\right).
$$
Since $\widetilde\phi$ preserves the angle $\alpha$ between lines in both directions, it is apparent that $\psi$ preserves the angle $\arccos \left(\tfrac{\cos\alpha}{1+\cos\alpha}\right)$ between unit vectors in both directions.
By Theorem \ref{Emain} we easily obtain that $\psi$ is a bijective $\sphericalangle$-isometry. 
Therefore if $[w_1], [w_2] \in [v]^\alpha$, then we have $\measuredangle([w_1],[w_2]) = \measuredangle\left(\widetilde\phi([w_1]),\widetilde\phi([w_2])\right)$, and thus we infer
$$
\measuredangle(\phi([w_1]),\phi([w_2])) = \measuredangle([w_1],[w_2]) \quad ([w_1], [w_2] \in [v]^\alpha).
$$
Since the above hold for every $[v] \in P(H)$, our transformation $\phi$ preserves every angle lying in the interval $\left(0, \max\left(2\alpha,\tfrac{\pi}{2}\right)\right]$.
By Lemma \ref{small-measuredangles_lem} we conlcude that $\phi$ is a Wigner symmetry.

\smallskip

\textbf{Case 2:} when $\dim H \geq 4$ and $\tfrac{\pi}{3} < \alpha < \tfrac{\pi}{2}$.
In this case we have 
$$
\measuredangle\big([\cos\alpha\cdot v + \sin\alpha\cdot u_1],[\cos\alpha\cdot v + \sin\alpha\cdot u_2]\big) = \alpha
$$ 
if and only if 
$$
\sphericalangle(u_1, u_2) \in \left\{ \beta_1, \beta_2 \right\},
$$
where
$$
\beta_1 = \arccos\left(1 - \frac{1}{1+\cos\alpha}\right) \in \left(\arccos\tfrac{1}{3},\tfrac{\pi}{2}\right)
$$ 
and 
$$
\beta_2 = \arccos\left(1 - \frac{1}{1-\cos\alpha}\right)  \in \left(\tfrac{\pi}{2},\pi\right).
$$
Let us observe that, similarly as in the previous case, if one was able to show that every bijection $\psi\colon S_{H\ominus [v]} \to S_{H\ominus [v]}$ which satisfies
\begin{equation}\label{sphere-b1-or-b2-pres_eq}
\sphericalangle(x,y) \in \{\beta_1,\beta_2\} \;\iff\; \sphericalangle(\psi(x),\psi(y)) \in \{\beta_1,\beta_2\}
\end{equation}
is a $\sphericalangle$-isometry, then one could conclude that $\phi$ is a Wigner symmetry. 
In fact, that is what will be done here.

Let $x,y\in S_{H\ominus [v]}$ be two unit vectors with angle $\gamma := \sphericalangle(x,y) \neq 0$.
Obviously, we have
\begin{equation}\label{card}
\begin{gathered}
\card\left((x^{(\beta_1)} \cup x^{(\beta_2)})\cap (y^{(\beta_1)} \cup y^{(\beta_2)})\right) \\
= \card\left(x^{(\beta_1)} \cap y^{(\beta_1)}\right) + 2\cdot\card\left(x^{(\beta_1)} \cap y^{(\beta_2)}\right) + \card\left(x^{(\beta_2)} \cap y^{(\beta_2)}\right).
\end{gathered}
\end{equation}
Let us observe that $\cos\beta_1 < -\cos\beta_2$, which implies $\beta_1+\beta_2 > \pi$.
By Lemma \ref{sphere-0-1-element_lem} we obtain the following equivalences:
\begin{equation}\label{iff1}
\card\left(x^{(\beta_1)} \cap y^{(\beta_1)}\right) = 1 \;\iff\; \gamma = 2\beta_1,
\end{equation}
\begin{equation}\label{iff2}
x^{(\beta_1)} \cap y^{(\beta_1)} = \emptyset \;\iff\; \gamma > 2\beta_1,
\end{equation}
\begin{equation}\label{iff3}
\card\left(x^{(\beta_1)} \cap y^{(\beta_2)}\right) = 1 \;\iff\; \gamma \in \{2\pi-\beta_1-\beta_2, \beta_2-\beta_1\},
\end{equation}
\begin{equation}\label{iff4}
x^{(\beta_1)} \cap y^{(\beta_2)} = \emptyset \;\iff\; \gamma > 2\pi-\beta_1-\beta_2 \text{ or } \gamma < \beta_2-\beta_1,
\end{equation}
\begin{equation}\label{iff5}
\card\left(x^{(\beta_2)} \cap y^{(\beta_2)}\right) = 1 \;\iff\; \gamma = 2\pi-2\beta_2,
\end{equation}
and
\begin{equation}\label{iff6}
x^{(\beta_2)} \cap y^{(\beta_2)} = \emptyset \;\iff\; \gamma > 2\pi-2\beta_2.
\end{equation}
Also, one can easily see that if $\dim (H\ominus [v]) = 3$, then $\card\left(x^{(\beta_j)} \cap y^{(\beta_l)}\right) \in \{0,1,2\}$ is satisfied, and if $\dim (H\ominus [v]) \geq 4$, then we have $\card\left(x^{(\beta_j)} \cap y^{(\beta_l)}\right) \in \{0,1,\infty\}$ $(j,l\in\{1,2\})$. 

Our next step is to compare the following four quantities: $2\beta_1, 2\pi -\beta_1 - \beta_2, \beta_2 - \beta_1, 2\pi -2\beta_2$.
Since we have $\tfrac{\pi}{3} < \beta_1$, we easily get that $3\beta_1 > \beta_2$, and thus $2\beta_1 > \beta_2-\beta_1$.
By similar observations, we obtain the following inequalities: $2\beta_1 > 2\pi - 2\beta_2$, $2\pi -\beta_1 - \beta_2 > \beta_2 - \beta_1$, $2\pi-\beta_1-\beta_2 > 2\pi -2\beta_2$; and the following relations: $2\beta_1, 2\pi -\beta_1 - \beta_2, \beta_2 - \beta_1, 2\pi -2\beta_2 \in (0,\pi)$.

Now, let us solve the equation $2\beta_1 = 2\pi-\beta_1-\beta_2$, which is equivalent to $3\beta_1 = 2\pi - \beta_2$.
Since both sides are in the interval $\left(\pi,\tfrac{3\pi}{2}\right)$, this holds if and only if
$$
1 - \frac{1}{1-\cos\alpha} 
= \cos \beta_2 
= \cos (3\beta_1) 
= 4 \cos^3 (\beta_1) - 3 \cos (\beta_1)
$$
$$
= 4\left(1 - \frac{1}{1+\cos\alpha}\right)^3 - 3 \left(1 - \frac{1}{1+\cos\alpha}\right).
$$
By a quite straightforward calculation we get that this holds exactly when $\cos\alpha \in \{-\tfrac{1}{\sqrt{5}},0,\tfrac{1}{\sqrt{5}}\}$.
Clearly, the first two possibilities cannot occur, therefore we conclude the only solution $\alpha = \arccos\left(\tfrac{1}{\sqrt{5}}\right)$.

Next, we examine the equation $\beta_2-\beta_1 = 2\pi-2\beta_2$, which is the same as $3\beta_2 = 2\pi+\beta_1$.
This implies the following:
$$
1 - \frac{1}{1+\cos\alpha}
= \cos \beta_1 
= \cos (3\beta_2)
= 4 \cos^3 (\beta_2) - 3 \cos (\beta_2)
$$
$$
= 4\left(1 - \frac{1}{1-\cos\alpha} 
\right)^3 - 3 \left(1 - \frac{1}{1-\cos\alpha} 
\right).
$$
Again, by a quite straightforward calculation we obtain that the only solution is $\alpha = \arccos\left(\tfrac{1}{\sqrt{5}}\right)$.

By the above observations we have at most the following six possibilities concerning the order of the four quantities $2\beta_1, 2\pi -\beta_1 - \beta_2, \beta_2 - \beta_1, 2\pi -2\beta_2$: 
\begin{itemize}
\item[1.] $2\beta_1 = 2\pi-\beta_1-\beta_2 > \beta_2-\beta_1 = 2\pi-2\beta_2$;
\item[2.] $2\beta_1 > 2\pi-\beta_1-\beta_2 > \beta_2-\beta_1 > 2\pi-2\beta_2$;
\item[3.] $\dim H = 4$ and $2\pi-\beta_1-\beta_2 > 2\beta_1 > \beta_2-\beta_1 > 2\pi-2\beta_2$;
\item[4.] $\dim H \geq 5$ and $2\pi-\beta_1-\beta_2 > 2\beta_1 > \beta_2-\beta_1 > 2\pi-2\beta_2$;
\item[5.] $2\beta_1 > 2\pi-\beta_1-\beta_2 > 2\pi-2\beta_2 > \beta_2-\beta_1$;
\item[6.] $2\pi-\beta_1-\beta_2 > 2\beta_1 > 2\pi-2\beta_2 > \beta_2-\beta_1$.
\end{itemize}
(In fact, one of these possibilities cannot happen, though we prefer to handle that case as well rather than give some long calculation which would verify its impossibility.)
Using \eqref{iff1}-\eqref{iff6}, in each of the above possibilities we get the following equivalences:
\begin{itemize}
\item[1.] $\card\left((x^{(\beta_1)} \cup x^{(\beta_2)})\cap (y^{(\beta_1)} \cup y^{(\beta_2)})\right) = 3 \;\iff\; \gamma = 2\beta_1 = 2\pi-\beta_1-\beta_2$;
\item[2.] $\card\left((x^{(\beta_1)} \cup x^{(\beta_2)})\cap (y^{(\beta_1)} \cup y^{(\beta_2)})\right) = 1 \;\iff\; \gamma = 2\beta_1$;
\item[3.] $\card\left((x^{(\beta_1)} \cup x^{(\beta_2)})\cap (y^{(\beta_1)} \cup y^{(\beta_2)})\right) = 5 \;\iff\; \gamma = 2\beta_1$;
\item[4.] $\card\left((x^{(\beta_1)} \cup x^{(\beta_2)})\cap (y^{(\beta_1)} \cup y^{(\beta_2)})\right) = 2 \;\iff\; \gamma = 2\pi-\beta_1-\beta_2$;
\item[5.] $\card\left((x^{(\beta_1)} \cup x^{(\beta_2)})\cap (y^{(\beta_1)} \cup y^{(\beta_2)})\right) = 1 \;\iff\; \gamma = 2\beta_1$;
\item[6.] $\card\left((x^{(\beta_1)} \cup x^{(\beta_2)})\cap (y^{(\beta_1)} \cup y^{(\beta_2)})\right) = 2 \;\iff\; \gamma = 2\pi-\beta_1-\beta_2$.
\end{itemize}
Here we only give the verification of the second point above, the others are quite similar.
For this we consider the following table:
\begin{center}
\begin{tabular}{|c|c|c|c|}
\hline
& $\card\left(x^{(\beta_1)} \cap y^{(\beta_1)}\right)$ & $\card\left(x^{(\beta_1)} \cap y^{(\beta_2)}\right)$ & $\card\left(x^{(\beta_2)} \cap y^{(\beta_2)}\right)$ \\
\hline
$0 < \gamma < 2\pi-2\beta_2$ & $2\leq$ & 0 & $2\leq$\\
\hline
$\gamma = 2\pi-2\beta_2$ & $2\leq$ & 0 & 1 \\
\hline
$2\pi-2\beta_2 < \gamma < \beta_2-\beta_1$ & $2\leq$ & 0 & 0\\
\hline
$\gamma = \beta_2-\beta_1$ & $2\leq$ & 1 & 0\\
\hline
$\beta_2-\beta_1 < \gamma < 2\pi-\beta_1-\beta_2$ & $2\leq$ & $2\leq$ & 0\\
\hline
$\gamma = 2\pi-\beta_1-\beta_2$ & $2\leq$ & 1 & 0\\
\hline
$2\pi-\beta_1-\beta_2 < \gamma < 2\beta_1$ & $2\leq$ & 0 & 0\\
\hline
$\gamma = 2\beta_1$ & 1 & 0 & 0\\
\hline
$2\beta_1 < \gamma \leq \pi$ & 0 & 0 & 0\\
\hline
\end{tabular}
\end{center}
Clearly, \eqref{card} gives us the desired conclusion.

Now, in each of the cases above we conclude that $\psi$ preserves either the angle $2\beta_1$ or $2\pi-\beta_1-\beta_2$ in both directions.
The only thing which needs to be shown is that these angles cannot be $\tfrac{\pi}{2}$.
This is obvious for the first one.
Concerning the second one, we know that $\beta_1 < \tfrac{\pi}{2}$ and $\beta_2 < \pi$, thus we easily obtain $2\pi-\beta_1-\beta_2 > \tfrac{\pi}{2}$, which completes this case.

\smallskip

\textbf{Case 3:} when $\dim H \geq 4$ and $\alpha = \tfrac{\pi}{3}$.
Then, similarly as in the previous case, we only need to show that any bijective map $\psi\colon S_{H\ominus [v]} \to S_{H\ominus [v]}$ which satisfies
\begin{equation}
\sphericalangle(x,y) \in \left\{\arccos\left(\tfrac{1}{3}\right),\pi\right\} \;\iff\; \sphericalangle(\psi(x),\psi(y)) \in \left\{\arccos\left(\tfrac{1}{3}\right),\pi\right\}
\end{equation}
is a $\sphericalangle$-isometry.
In order to prove this it is enough to see that $\psi$ preserves the angle $\pi$ in both directions.
But this is straightforward form the following equivalence, where we use the notation $\beta := \arccos\left(\tfrac{1}{3}\right)$:
$$
\left(x^{(\beta)}\cup x^{(\pi)}\right) \cap \left(y^{(\beta)}\cup y^{(\pi)}\right) = \emptyset \text{ and } y\in x^{(\beta)}\cup x^{(\pi)} \;\iff\; y=-x.
$$
This completes the proof of the present case.

\smallskip

\textbf{Case 4:} when $\dim H = 3$ and $0 < \alpha < \tfrac{\pi}{3}$.
Here we have $\dim (H\ominus [v]) = 2$ and therefore, unlike in the previous cases, we cannot apply Theorem \ref{Emain}.
Instead, we shall utilize Lemma \ref{small-measuredangles_lem}.
As in Case 1, one can easily see that the transformation $\psi\colon S_{H\ominus [v]} \to S_{H\ominus [v]}$ preserves the angle $\beta := \arccos \left(\frac{\cos\alpha}{1+\cos\alpha}\right) \in (\tfrac{\pi}{3},\arccos\tfrac{1}{3}) \subset (\tfrac{\pi}{3},\tfrac{\pi}{2})$ in both directions.
Thus $\psi$ also preserves the angle $2\beta \in (\tfrac{2\pi}{3},\pi)$ in both directions, which is a consequence of the following observation:
$$
\sphericalangle(x,y) = 2\beta \;\iff\; \card\left(x^{(\beta)}\cap y^{(\beta)}\right) = 1 \quad (x,y,\in S_{H\ominus [v]}).
$$
Therefore from \eqref{convert-angle-from-proj-tosphere_eq} we conclude that $\phi$ preserves the angle 
\begin{equation}\label{gamma}
\gamma := \arccos\left(\left|\cos^2\alpha+\cos(2\beta)\sin^2\alpha\right|\right) \in \left(0,\tfrac{\pi}{2}\right]
\end{equation}
in both directions.
An elementary calculation gives us the following:
$$
\cos^2\alpha+\cos(2\beta)\sin^2\alpha = \cos^2\alpha+(2\cos^2\beta-1)\sin^2\alpha 
$$
$$
= \cos^2\alpha+\left(2\left(\frac{\cos\alpha}{1+\cos\alpha}\right)^2-1\right)(1-\cos^2\alpha) = 4 \cos\alpha+\frac{4}{\cos\alpha+1}-5.
$$
Therefore \eqref{gamma} becomes 
\begin{equation}\label{gamma2}
\gamma = \arccos \left(\left| 4 \cos\alpha+\frac{4}{\cos\alpha+1}-5 \right|\right).
\end{equation}

Next, an elementary calculation verifies the following inequality:
$$
2c^2 - 1 < 4c + \frac{4}{c+1}-5 < c \qquad (\tfrac{1}{2} \leq c < 1).
$$
It is also easy to show that for every $c \in \left[\tfrac{1}{2}, 1\right)$ we have 
$$
4c + \frac{4}{c+1}-5 < 0 \;\iff\; \tfrac{1}{2} \leq c < \tfrac{1}{8}(1+\sqrt{17}),
$$
and 
$$
4c + \frac{4}{c+1}-5 = 0 \;\iff\; c = \tfrac{1}{8}(1+\sqrt{17}).
$$
Again, by elementary computations, we obtain
$$
\left|4c + \frac{4}{c+1} - 5\right| = 5 - 4c - \frac{4}{c+1} < 1 - 2c^2 \leq c \qquad (\tfrac{1}{2} \leq c < \tfrac{1}{8}(1+\sqrt{17}))
$$
which gives us 
\begin{equation}\label{ceq}
2c^2 - 1 < \left| 4c + \frac{4}{c+1} - 5 \right| < c \qquad (\tfrac{1}{2} \leq c < 1).
\end{equation}
Clearly, \eqref{gamma2} and \eqref{ceq} implies $\gamma \in (\alpha,2\alpha)\cap\left(0,\tfrac{\pi}{2}\right]$.

Now, we consider four different possibilities.
First, if $0 < \alpha < \tfrac{\pi}{4}$, then $0 < \gamma < \tfrac{\pi}{2}$, and by Lemmas \ref{proj-diff-sum_lem} and \ref{multiple-measuredangle_lem} our map $\phi$ preserves the angles $2\alpha-\gamma \in \left(0,\tfrac{\pi}{2}\right)$ and $\gamma-\alpha \in \left(0,\tfrac{\pi}{2}\right)$ in both directions. 
Since one of these angles is less than or equal to $\tfrac{\alpha}{2}$, by a recursion we obtain a sequence of positive angles converging to zero such that all of them is preserved by $\phi$ in both directions.
Therefore by Lemma \ref{small-measuredangles_lem} we conclude that $\phi$ is a Wigner symmetry.

Second, by \eqref{gamma2} we have $\gamma = \tfrac{\pi}{2}$ if and only if $\alpha = \arccos\frac{1+\sqrt{17}}{8}$.
If this happens, then Uhlhorn's theorem implies that $\phi$ is a Wigner symmetry.

Third, we assume that $\tfrac{\pi}{4} \leq \alpha < \arccos\frac{1+\sqrt{17}}{8}$ is satisfied, and we claim that there exists a number $0<q<1$ such that we have $\gamma < (1+q)\alpha$ for every such $\alpha$.
In fact, if this was not the case, then $\left| 4c + \frac{4}{c+1} - 5 \right|$ could be arbitrarily close to $2c^2-1$ on the interval $\left[\frac{1+\sqrt{17}}{8}, \tfrac{1}{\sqrt{2}}\right]$.
But then, continuity of the functions and compactness of the interval would obtain that actually there is a number $c\in \left[\frac{1+\sqrt{17}}{8}, \tfrac{1}{\sqrt{2}}\right]$ for which these two functions coincide, which would contradict to \eqref{ceq}.
Since by Lemma \ref{proj-diff-sum_lem} the map $\phi$ preserves $\gamma-\alpha < q\alpha$, by a recursion we get a sequence of positive angles $\{\alpha_n\}_{n=1}^\infty$ converging to zero such that they are preserved by $\phi$ in both directions.
This implies that $\phi$ has to be a Wigner symmetry.

Finally, let us suppose that we have $\arccos\frac{1+\sqrt{17}}{8} < \alpha < \tfrac{\pi}{3}$.
Since $\tfrac{\pi}{4} \leq \arccos\frac{1+\sqrt{17}}{8}$, we get $\tfrac{\gamma}{\alpha} < \tfrac{\pi/2}{\arccos\frac{1+\sqrt{17}}{8}} < 1.8$ for every such $\alpha$.
But $\phi$ preserves the angle $\gamma - \alpha < 0.8\cdot\alpha$, therefore we have the following two possibilities.
Either by a recursion we obtain a sequence of positive angles converging to zero such that all of them is preserved by $\phi$ in both directions, or after some steps we conclude that $\phi$ preserves the angle $\arccos\frac{1+\sqrt{17}}{8}$, and thus orthogonality in both directions.
Both of them imply that $\phi$ is a Wigner symmetry.

\smallskip

\textbf{Case 5:} when $\dim H = 3$ and $\tfrac{\pi}{3} < \alpha < \tfrac{\pi}{2}$.
We claim that 
\begin{equation}\label{3}
\measuredangle([x],[y]) = \pi-2\alpha \;\iff\; \card([x]^{\alpha} \cap [y]^{\alpha}) = 3.
\end{equation}
In order to see this let $[x]$ and $[y]$ be two different lines and set $\gamma := \measuredangle([x],[y]) \in \left(0,\tfrac{\pi}{2}\right]$.
There exists two orthogonal unit vectors $e_1, e_2 \in H$ such that we have 
\begin{equation}\label{xyline}
[x] = [\cos\tfrac{\gamma}{2}\cdot e_1 + \sin\tfrac{\gamma}{2}\cdot e_2], \;\; [y] = [\cos\tfrac{\gamma}{2}\cdot e_1 - \sin\tfrac{\gamma}{2}\cdot e_2].
\end{equation}
We also fix a unit vector $e_3$ which is orthogonal to $e_1$ and $e_2$.
For $t_1, t_2, t_3 \in \R$, $t_1^2+t_2^2+t_3^2 = 1$ we have 
$$
[t_1\cdot e_1 + t_2\cdot e_2 + t_3\cdot e_3] \in [x]^{\alpha} \cap [y]^{\alpha}
$$ 
if and only if
$$
\cos\alpha = \left|t_1 \cos\tfrac{\gamma}{2} + t_2 \sin\tfrac{\gamma}{2}\right| = \left|t_1 \cos\tfrac{\gamma}{2} - t_2 \sin\tfrac{\gamma}{2}\right|.
$$
We immediately conclude that either $t_1 = 0$, or $t_2 = 0$.
In the first possibility we compute $|t_2| = \frac{\cos\alpha}{\sin\tfrac{\gamma}{2}} > 0$, moreover, we have
$$
\frac{\cos\alpha}{\sin\tfrac{\gamma}{2}} < 1 \;\iff\; \gamma > \pi-2\alpha
$$
and
$$
\frac{\cos\alpha}{\sin\tfrac{\gamma}{2}} = 1 \;\iff\; \gamma = \pi-2\alpha.
$$
In the second possibility we have $|t_1| = \frac{\cos\alpha}{\cos\tfrac{\gamma}{2}} \in (0,1)$.
Therefore if $\gamma < \pi-2\alpha$, then $\card\left([x]^{\alpha} \cap [y]^{\alpha}\right) \leq 2$.
Otherwise, we have
\begin{equation}\label{xametszetya}
\begin{gathered}
{[x]}^\alpha \cap [y]^{\alpha} = \Bigg\{ \left[\frac{\cos\alpha}{\sin\tfrac{\gamma}{2}}\cdot e_2 + \sqrt{1-\frac{\cos^2\alpha}{\sin^2\tfrac{\gamma}{2}}}\cdot e_3\right], \left[\frac{\cos\alpha}{\sin\tfrac{\gamma}{2}}\cdot e_2 - \sqrt{1-\frac{\cos^2\alpha}{\sin^2\tfrac{\gamma}{2}}}\cdot e_3\right], \\
\left[\frac{\cos\alpha}{\cos\tfrac{\gamma}{2}} \cdot e_1 + \sqrt{1-\frac{\cos^2\alpha}{\cos^2\tfrac{\gamma}{2}}}\cdot e_3\right], \left[\frac{\cos\alpha}{\cos\tfrac{\gamma}{2}} \cdot e_1 - \sqrt{1-\frac{\cos^2\alpha}{\cos^2\tfrac{\gamma}{2}}}\cdot e_3\right] \Bigg\}.
\end{gathered}
\end{equation}
Clearly, the third and fourth elements above are always different.
Furthermore, they are also different from the first and second elements.
Therefore we have $\card\left([x]^{\alpha} \cap [y]^{\alpha}\right) = 3$ if and only if the first and second elements coincide, which yields \eqref{3}.

Now, since we have
$$
\measuredangle([x],[y]) = \pi-2\alpha
\;\iff\; \card([x]^{\alpha} \cap [y]^{\alpha}) = 3
\;\iff\; \card(\phi([x])^{\alpha} \cap \phi([y])^{\alpha}) = 3 
$$
$$
\;\iff\; \measuredangle(\phi([x]),\phi([y])) = \pi-2\alpha,
$$
an application of Case 4 ensures that $\phi$ is indeed a Wigner symmetry.

\smallskip

\textbf{Case 6:} when $\dim H = 3$ and $\alpha = \tfrac{\pi}{3}$.
We claim that we have $\measuredangle([x],[y]) = \tfrac{\pi}{2}$ if and only if $[x]^{\pi/3} \cap [y]^{\pi/3} = \{[u_1],[u_2],[u_3],[u_4]\}$ is a four-element set with $\measuredangle([u_1],[u_3]) = \measuredangle([u_1],[u_4]) = \measuredangle([u_2],[u_3]) = \measuredangle([u_2],[u_4]) = \tfrac{\pi}{3}$.
Using the notations of the previous case, we have $\card\left([x]^{\pi/3} \cap [y]^{\pi/3}\right) \leq 3$ if and only if $\gamma \leq \tfrac{\pi}{3}$, otherwise $\card\left([x]^{\pi/3} \cap [y]^{\pi/3}\right) = 4$ and $[x]^{\pi/3} \cap [y]^{\pi/3} = \{[v_1(\gamma)], [v_2(\gamma)], [v_3(\gamma)], [v_4(\gamma)]\}$ where we use the following notations:
$$
v_1(\gamma) = \frac{1/2}{\sin\tfrac{\gamma}{2}}\cdot e_2 + \sqrt{1-\frac{1/4}{\sin^2\tfrac{\gamma}{2}}}\cdot e_3, \quad 
v_2(\gamma) = \frac{1/2}{\sin\tfrac{\gamma}{2}}\cdot e_2 - \sqrt{1-\frac{1/4}{\sin^2\tfrac{\gamma}{2}}}\cdot e_3,
$$
$$
v_3(\gamma) = \frac{1/2}{\cos\tfrac{\gamma}{2}} \cdot e_1 + \sqrt{1-\frac{1/4}{\cos^2\tfrac{\gamma}{2}}}\cdot e_3, \quad
v_4(\gamma) = \frac{1/2}{\cos\tfrac{\gamma}{2}} \cdot e_1 - \sqrt{1-\frac{1/4}{\cos^2\tfrac{\gamma}{2}}}\cdot e_3.
$$
A straightforward computation gives the following:
$$
\langle v_1(\gamma), v_3(\gamma)\rangle = - \langle v_1(\gamma), v_4(\gamma)\rangle = - \langle v_2(\gamma), v_3(\gamma)\rangle = \langle v_2(\gamma), v_4(\gamma)\rangle
$$
$$
= \sqrt{1 - \frac{1/4}{\sin^2\tfrac{\gamma}{2}}} \cdot \sqrt{1 - \frac{1/4}{\cos^2\tfrac{\gamma}{2}}},
$$
where $\sin^2\tfrac{\gamma}{2}$ runs through the interval $\left(\tfrac{1}{4},\tfrac{1}{2}\right]$ if $\gamma$ runs through $\left(\tfrac{\pi}{3},\tfrac{\pi}{2}\right]$.
Denoting $\sin^2\tfrac{\gamma}{2}$ by $u$ and observing that the equation $\left(1-\tfrac{1}{4u}\right)\left(1-\tfrac{1}{4(1-u)}\right) = \tfrac{1}{4}$ has the only solution $u = \tfrac{1}{2}$ verifies our claim.

Finally, similarly as in the last paragraph of the previous case, one can show easily that $\phi$ preserves orthogonality in both directions.
Therefore $\phi$ has to be a Wigner symmetry because of Uhlhorn's theorem.
\end{proof}

Our final aim is to prove our theorem for complex inner product spaces of dimension at least three.
But before doing so, we provide the following crucial lemma.
We will use the notation
$$
\diam_\measuredangle\left([v]^\alpha \cap [w]^\alpha\right) = \sup\left(\measuredangle([u_1],[u_2]) \colon [u_1], [u_2] \in [v]^\alpha \cap [w]^\alpha\right).
$$

\begin{lem}\label{proj-crucial_lem}
Let $\alpha\in\left(0,\tfrac{\pi}{4}\right)$, $H$ be a complex inner product space with $\dim H \geq 3$, and $[v],[w] \in P(H)$ be two different lines with $\measuredangle([v],[w]) =: \gamma \in (0,2\alpha)$.
Then 
\begin{equation}\label{szogdiam}
\diam_\measuredangle\left([v]^\alpha \cap [w]^\alpha\right) 
= 2 \cdot \arccos \sqrt{ \frac{\cos^2\alpha-\sin^2\left(\tfrac{\gamma}{2}\right)}{\cos^2\left(\tfrac{\gamma}{2}\right)-\sin^2\left(\tfrac{\gamma}{2}\right)} }.
\end{equation}

Moreover, there exist exactly two lines $[u_1], [u_2] \in [v]^\alpha \cap [w]^\alpha$ with $\measuredangle([u_1],[u_2]) = \diam_\measuredangle\left([v]^\alpha \cap [w]^\alpha\right)$; 
and if $0 < \beta < \diam_\measuredangle\left([v]^\alpha \cap [w]^\alpha\right)$, then there exist at least three different lines $[u_0], [u_1], [u_2] \in [v]^\alpha \cap [w]^\alpha$ with $\measuredangle([u_0],[u_1]) = \measuredangle([u_0],[u_2]) = \beta$.
\end{lem}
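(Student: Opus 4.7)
My plan is to reduce to a computation inside the two-dimensional subspace $M=[v,w]$ with orthogonal complement $N=H\ominus M$. Fix an orthonormal basis $\{e_1,e_2\}$ of $M$ with $v=e_1$ and $w=\cos\gamma\cdot e_1+\sin\gamma\cdot e_2$. For each $[u]\in[v]^\alpha\cap[w]^\alpha$ I select the unique unit representative $u$ with $\langle u,v\rangle=\cos\alpha>0$; then $u=\cos\alpha\cdot e_1+c_2\cdot e_2+n$ with $n\in N$, and the constraint $|\langle u,w\rangle|=\cos\alpha$ forces $c_2=\cos\alpha(e^{i\phi}-\cos\gamma)/\sin\gamma$ for some $\phi\in[0,2\pi)$ with $\cos\phi\geq\cos\phi_{\max}$, while $\|n\|^2=R(\phi)/\sin^2\gamma$ with $R(\phi)=\sin^2\gamma-2\cos^2\alpha(1-\cos\gamma\cos\phi)$. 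Writing $g=\cos\gamma$ and $d=\cos 2\alpha/\cos\gamma$, one checks $\cos\phi_{\max}=(d+g)/(1+dg)$, with $\|n\|=0$ precisely at $\phi=\pm\phi_{\max}$, which yields the two distinguished lines $[u^\pm]$ of $[v]^\alpha\cap[w]^\alpha\cap P(M)$.

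For two such lines $[u_1],[u_2]$ with data $(\phi_1,n_1),(\phi_2,n_2)$, one has $\langle u_1,u_2\rangle=\langle Pu_1,Pu_2\rangle+\langle n_1,n_2\rangle$ (where $P$ is the orthogonal projection onto $M$), and a short calculation gives $|\langle Pu_1,Pu_2\rangle|=(2\cos^2\alpha/\sin^2\gamma)|\cos\theta-g\cos\theta'|$ with $\theta=(\phi_1-\phi_2)/2$, $\theta'=(\phi_1+\phi_2)/2$. The phase of $\langle n_1,n_2\rangle$ is free (in $\dim N=1$ by rotating the single complex coordinate of $n_1$; in $\dim N\geq 2$ by freely choosing the pair), and $|\langle n_1,n_2\rangle|\leq\|n_1\|\|n_2\|$ by Cauchy--Schwarz. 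So the minimum of $|\langle u_1,u_2\rangle|$ over all representatives is $\max(0,|\langle Pu_1,Pu_2\rangle|-\|n_1\|\|n_2\|)$, and computing $\diam_\measuredangle$ amounts to the master inequality
\[
(\ast)\qquad |\langle Pu_1,Pu_2\rangle|-\|n_1\|\|n_2\|\geq \cos D,\qquad \cos D=d,
\]
holding on the admissible $(\phi_1,\phi_2)$-domain, with equality exactly at $\{\phi_1,\phi_2\}=\{\phi_{\max},-\phi_{\max}\}$.

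The main obstacle is $(\ast)$. First note $\cos\theta-g\cos\theta'>0$, since $\cos\theta\geq\cos\phi_{\max}\geq g$ (the latter being $\cos\phi_{\max}-g=d(1-g^2)/(1+dg)\geq 0$). Setting $A=2\cos^2\alpha(\cos\theta-g\cos\theta')$ and $B=d(1-g^2)$, $(\ast)$ becomes $A-B\geq\sqrt{R(\phi_1)R(\phi_2)}$ after multiplication by $\sin^2\gamma$. Using the identity $\cos\phi\cos\phi'=\cos^2\theta+\cos^2\theta'-1$ one obtains $R(\phi_1)R(\phi_2)=[K+L\cos\theta\cos\theta']^2-L^2\sin^2\theta\sin^2\theta'$ with $K=\sin^2\gamma-2\cos^2\alpha$ and $L=2\cos^2\alpha\, g$, and a careful expansion yields the clean factorisation
\[
(A-B)^2-R(\phi_1)R(\phi_2)=(1-g^2)\Bigl\{\bigl[(1+dg)\cos\theta-d-g\cos\theta'\bigr]^2+g^2(1-\cos\theta')(1+\cos\theta'-2d^2)\Bigr\}.
\]
The first summand in the braces is a square; the second is non-negative because $\cos\theta'\geq\cos\phi_{\max}\geq 2d^2-1$, which in turn follows from the algebraic identity $(1+d)(1+g)-2d^2(1+dg)=(1-d)\bigl[2gd^2+2d(1+g)+(1+g)\bigr]\geq 0$. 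Equality in $(\ast)$ forces the square to vanish and $\cos\theta'=1$ (the alternative $\cos\theta'=2d^2-1$ would require $d=1$, which is excluded), giving $\theta'=0$ and $\cos\theta=\cos\phi_{\max}$. Hence $\{\phi_1,\phi_2\}=\{\phi_{\max},-\phi_{\max}\}$, $\|n_1\|=\|n_2\|=0$, and the unique extremal pair is $\{[u^+],[u^-]\}\subset P(M)$. A half-angle manipulation of $\cos D=d$ transforms $D$ into the explicit formula in \eqref{szogdiam}.

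For the final claim, given $\beta\in(0,D)$, I take $[u_0]=[u^+]\in P(M)$, so that $\langle u_0,u_1\rangle=\langle u^+,m_1\rangle$ depends only on $\phi_1$. This function is continuous in $\phi_1\in[-\phi_{\max},\phi_{\max}]$, with value $1$ at $\phi_{\max}$ and $\cos D$ at $-\phi_{\max}$, so by the intermediate value theorem some $\phi_1^\ast\in(-\phi_{\max},\phi_{\max})$ realises $|\langle u^+,m_1\rangle|=\cos\beta$. At any such interior $\phi_1^\ast$ one has $\|n_1\|>0$, and $n_1$ ranges over a real sphere of positive dimension inside $N$ (automatic from $\dim N\geq 1$); distinct choices of $n_1$ on that sphere produce distinct lines $[u_1]$ all at angle $\beta$ from $[u_0]$. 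Picking any two of them yields the triple $[u_0],[u_1],[u_2]$ required by the lemma.
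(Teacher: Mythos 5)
Your argument is correct and reaches all three conclusions of the lemma, but by a genuinely different route for the main part. The paper places $v$ and $w$ symmetrically about a bisecting unit vector $e_1$, computes $\sup\{\measuredangle([u],[e_1])\colon [u]\in[v]^\alpha\cap[w]^\alpha\}$ from an explicit parametrization of the intersection, and then converts this one-point supremum into the diameter via the reflection $R(\lambda e_1+h)=\lambda e_1-h$ (which preserves the intersection and satisfies $\measuredangle([u],[Ru])=2\measuredangle([u],[e_1])$) combined with the triangle inequality for $\measuredangle$ from Lemma \ref{angle-metric_lem}. You instead prove the two-point inequality $|\langle u_1,u_2\rangle|\geq d=\cos 2\alpha/\cos\gamma$ directly, through the factorisation of $(A-B)^2-R(\phi_1)R(\phi_2)$; I have checked that identity symbolically (expanding both sides in $\cos\theta$, $\cos\theta'$ with $1+dg=2\cos^2\alpha$) and it is exact, and your equality analysis correctly isolates the unique extremal pair $\{\phi_{\max},-\phi_{\max}\}$ with $n_1=n_2=0$. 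Your approach costs a heavier computation but is self-contained — it does not invoke the metric property of $\measuredangle$ — and the strict bound $|\langle Pu_1,Pu_2\rangle|-\|n_1\|\|n_2\|\geq d>0$ it produces also disposes of the $\dim N=1$ subtlety in your ``minimum over representatives'' formula, since the $\max(0,\cdot)$ never activates. The treatments of the final claim (fix one extremal line, apply the intermediate value theorem, and exploit the freedom of $n_1$ in the orthogonal complement) are essentially the same in both proofs.

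One step should be made explicit: from $(A-B)^2\geq R(\phi_1)R(\phi_2)$ you may only conclude $|A-B|\geq\sqrt{R(\phi_1)R(\phi_2)}$, so the master inequality $(\ast)$ additionally requires $A\geq B$. This follows in one line from bounds you already have: $A=2\cos^2\alpha\,(\cos\theta-g\cos\theta')\geq 2\cos^2\alpha\,(\cos\phi_{\max}-g)=2\cos^2\alpha\cdot d(1-g^2)/(1+dg)=d(1-g^2)=B$, using $\cos\theta\geq\cos\phi_{\max}$, $\cos\theta'\leq 1$ and $1+dg=2\cos^2\alpha$. With that line inserted, the proof is complete.
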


\begin{proof}
There is an orthonormal system $\{e_1, e_2\}$ such that we have 
\begin{equation}\label{vw}
[v] = \left[\cos\left(\tfrac{\gamma}{2}\right)\cdot e_1 + \sin\left(\tfrac{\gamma}{2}\right)\cdot e_2\right] \;\; \text{and} \;\; [w] = \left[\cos\left(\tfrac{\gamma}{2}\right)\cdot e_1 - \sin\left(\tfrac{\gamma}{2}\right)\cdot e_2\right].
\end{equation}
For an arbitrary $[u]\in P(H)$, we can find a vector $e_3\in H$, $\|e_3\| = 1$, $e_3\perp e_j$ $(j=1,2)$ and numbers $\lambda \in \irC$, $\delta,\varepsilon \in [0,\tfrac{\pi}{2}]$ such that 
$$
[u] = [\cos\delta \cdot e_1 + \lambda\sin\delta\cos\varepsilon\cdot e_2 + \sin\delta\sin\varepsilon\cdot e_3].
$$
This line lies in $[v]^\alpha \cap [w]^\alpha$ if and only if we have
$$
\cos\alpha
=
\left|\cos\left(\tfrac{\gamma}{2}\right)\cos\delta + \lambda\sin\left(\tfrac{\gamma}{2}\right)\sin\delta\cos\varepsilon\right|
=
\left|\cos\left(\tfrac{\gamma}{2}\right)\cos\delta - \lambda\sin\left(\tfrac{\gamma}{2}\right)\sin\delta\cos\varepsilon\right|.
$$
The second equation implies that we have either $\delta = 0$, or $\delta = \tfrac{\pi}{2}$, or $\varepsilon = \tfrac{\pi}{2}$ or $\lambda \in \{-i,i\}$.
For the first one, we get $\gamma = 2\alpha$ which is impossible under our assumptions.
Concerning the second one, the contradiction $\tfrac{1}{\sqrt{2}} < \cos\alpha = \left|\sin\left(\tfrac{\gamma}{2}\right)\cos\varepsilon\right| < \tfrac{1}{\sqrt{2}}$ is yielded.
The third case implies $\cos\delta = \frac{\cos\alpha}{\cos\left(\tfrac{\gamma}{2}\right)} \in (0,1)$.
From the last possibility we obtain the following:
$$
\cos^2\alpha
= 
\cos^2\left(\tfrac{\gamma}{2}\right)\cos^2\delta + \sin^2\left(\tfrac{\gamma}{2}\right)\sin^2\delta\cos^2\varepsilon
$$
$$
=
\cos^2\left(\tfrac{\gamma}{2}\right)\cos^2\delta + \sin^2\left(\tfrac{\gamma}{2}\right)(1-\cos^2\delta)\cos^2\varepsilon,
$$
from which we conclude
$$
\cos\delta = \sqrt{ \frac{\cos^2\alpha-\sin^2\left(\tfrac{\gamma}{2}\right)\cos^2\varepsilon}{\cos^2\left(\tfrac{\gamma}{2}\right)-\sin^2\left(\tfrac{\gamma}{2}\right)\cos^2\varepsilon} }.
$$
We note that the above fraction is positive, since we have $\cos\left(\tfrac{\gamma}{2}\right) > \cos\alpha > \tfrac{1}{\sqrt{2}} > \sin\left(\tfrac{\gamma}{2}\right)$.
We also point out that the $\varepsilon = \tfrac{\pi}{2}$ case in the present possibility covers the before mentioned third possibility.
Therefore we conclude
\begin{equation}\label{v-cap-w_eq}
\begin{gathered}[]
[v]^\alpha \cap [w]^\alpha 
= 
\Bigg\{ \Bigg[ \sqrt{ \frac{\cos^2\alpha-\sin^2\left(\tfrac{\gamma}{2}\right)\cos^2\varepsilon}{\cos^2\left(\tfrac{\gamma}{2}\right)-\sin^2\left(\tfrac{\gamma}{2}\right)\cos^2\varepsilon} } \cdot e_1 + \\
+ \lambda \sqrt{ \frac{\cos^2\left(\tfrac{\gamma}{2}\right) - \cos^2\alpha}{\cos^2\left(\tfrac{\gamma}{2}\right)-\sin^2\left(\tfrac{\gamma}{2}\right)\cos^2\varepsilon} } \cos\varepsilon\cdot e_2
+ \sqrt{ \frac{\cos^2\left(\tfrac{\gamma}{2}\right) - \cos^2\alpha}{\cos^2\left(\tfrac{\gamma}{2}\right)-\sin^2\left(\tfrac{\gamma}{2}\right)\cos^2\varepsilon} }\sin\varepsilon\cdot e_3 \Bigg] \colon \\
\colon \|e_3\| = 1, \, e_3\perp e_1, \, e_3 \perp e_2,\, \varepsilon\in\left[0,\tfrac{\pi}{2}\right], \, \lambda\in\{-i,i\}\Bigg\}.
\end{gathered}
\end{equation}

Since the function $g_{x,y}(t) = \frac{x-t}{y-t}$ is strictly decreasing on $[0,x]$ if $0<x<y$, and strictly increasing on $[0,y]$ if $0<y<x$, we estimate as follows:
\begin{equation}\label{angle-est_eq}
\begin{gathered}
\sup\Big\{\measuredangle([u],[e_1]) \colon [u] \in [v]^\alpha \cap [w]^\alpha\Big\}
 \\
= \arccos\left( \inf\left\{\sqrt{ \frac{\cos^2\alpha-\sin^2\left(\tfrac{\gamma}{2}\right)\cos^2\varepsilon}{\cos^2\left(\tfrac{\gamma}{2}\right)-\sin^2\left(\tfrac{\gamma}{2}\right)\cos^2\varepsilon} } \colon \varepsilon\in\left[0,\tfrac{\pi}{2}\right] \right\} \right)
 \\
= \arccos\left( \min\left\{\sqrt{ \frac{\cos^2\alpha - t}{\cos^2\left(\tfrac{\gamma}{2}\right) - t} } \colon t\in\left[0,\sin^2\left(\tfrac{\gamma}{2}\right)\right] \right\} \right) 
 \\
= \arccos \sqrt{ \frac{\cos^2\alpha-\sin^2\left(\tfrac{\gamma}{2}\right)}{\cos^2\left(\tfrac{\gamma}{2}\right)-\sin^2\left(\tfrac{\gamma}{2}\right)} } 
= \arccos \sqrt{ \tfrac{1}{2} \cdot \frac{\cos^2\alpha-\sin^2\left(\tfrac{\gamma}{2}\right)}{\tfrac{1}{2}-\sin^2\left(\tfrac{\gamma}{2}\right)} }
 \\
\leq \arccos \sqrt{ \tfrac{1}{2} \cdot \frac{\cos^2\alpha-0}{\tfrac{1}{2}-0} } 
= \alpha < \tfrac{\pi}{4}. 
\end{gathered}
\end{equation}
The left-hand side of \eqref{angle-est_eq} is strictly decreasing in $\gamma$.
Moreover, the supremum is attained exactly when $\varepsilon = 0$, in particular, if and only if we have either
\begin{equation}\label{max-u1_eq}
[u] = \left[ \sqrt{ \frac{\cos^2\alpha-\sin^2\left(\tfrac{\gamma}{2}\right)}{\cos^2\left(\tfrac{\gamma}{2}\right)-\sin^2\left(\tfrac{\gamma}{2}\right)} } \cdot e_1 + i \cdot \sqrt{ \frac{\cos^2\left(\tfrac{\gamma}{2}\right) - \cos^2\alpha}{\cos^2\left(\tfrac{\gamma}{2}\right)-\sin^2\left(\tfrac{\gamma}{2}\right)} }\cdot e_2 \right] \; \in \; \left[e_1,e_2\right],
\end{equation}
or
\begin{equation}\label{max-u2_eq}
[u] = \left[ \sqrt{ \frac{\cos^2\alpha-\sin^2\left(\tfrac{\gamma}{2}\right)}{\cos^2\left(\tfrac{\gamma}{2}\right)-\sin^2\left(\tfrac{\gamma}{2}\right)} } \cdot e_1 - i \cdot \sqrt{ \frac{\cos^2\left(\tfrac{\gamma}{2}\right) - \cos^2\alpha}{\cos^2\left(\tfrac{\gamma}{2}\right)-\sin^2\left(\tfrac{\gamma}{2}\right)} }\cdot e_2 \right] \; \in \; \left[e_1,e_2\right].
\end{equation}

Now, we consider the following bijective linear isometry:
$$
R\colon H\to H, \; R(\lambda\cdot e_1 + h) = \lambda\cdot e_1 - h \quad (\lambda\in\C, \, h\in H, \; h\perp e_1),
$$
which clearly satisfies the equations $R([v]) = [w]$, $R([w]) = [v]$ and $R([v]^\alpha \cap [w]^\alpha) = [v]^\alpha \cap [w]^\alpha$.
Let $[u_1],[u_2] \in [v]^\alpha \cap [w]^\alpha$, then, on one hand, we have $\measuredangle([u_1], [u_2]) \leq \measuredangle([u_1], [e_1]) + \measuredangle([e_1], [u_2])$. 
On the other hand, if $[\lambda \cdot e_1 + h] \in [v]^\alpha \cap [w]^\alpha$ $(h\perp e_1$, $|\lambda|^2+\|h\|^2 = 1)$, then by \eqref{angle-est_eq} we have $|\lambda| > \frac{1}{\sqrt{2}}$.
Thus we calculate
$$
\measuredangle([\lambda \cdot e_1 + h], [R(\lambda \cdot e_1 + h)]) = \measuredangle([\lambda \cdot e_1 + h], [\lambda \cdot e_1 - h]) = \arccos\left| |\lambda|^2 - \|h\|^2 \right| 
$$
$$
= \arccos\left| 2\cdot|\lambda|^2 - 1 \right| = 
2 \cdot \arccos|\lambda|= 
2 \cdot \measuredangle([\lambda \cdot e_1 + h], [e_1])
$$
This and \eqref{angle-est_eq} implies that
$$
\diam_\measuredangle\left([v]^\alpha \cap [w]^\alpha\right) = 2 \cdot \sup\left(\measuredangle([u],[e_1]) \colon [u] \in [v]^\alpha \cap [w]^\alpha\right) 
$$
$$
= 2 \cdot \arccos \sqrt{ \frac{\cos^2\alpha-\sin^2\left(\tfrac{\gamma}{2}\right)}{\cos^2\left(\tfrac{\gamma}{2}\right)-\sin^2\left(\tfrac{\gamma}{2}\right)} },
$$
which verifies \eqref{szogdiam}. 
Moreover, the above observations together imply that $\measuredangle([u_1],[u_2]) = \diam_\measuredangle\left([v]^\alpha \cap [w]^\alpha\right)$ and $[u_1],[u_2] \in [v]^\alpha \cap [w]^\alpha$ hold if and only if the lines $[u_1]$ and $[u_2]$ are exactly those defined in \eqref{max-u1_eq} and \eqref{max-u2_eq}.

Finally, let $0 < \beta < \diam_\measuredangle\left([v]^\alpha \cap [w]^\alpha\right)$.
Let $[u]$ be the line defined in \eqref{max-u1_eq}, and $[\widetilde{u}]$ be another line from $[v]^\alpha \cap [w]^\alpha$ such that $\measuredangle([u],[\widetilde{u}]) = \beta$.
The existence of such a line $[\widetilde{u}]$ is straightforward from the pathwise connectedness of $[v]^\alpha \cap [w]^\alpha$, which can be verified from \eqref{v-cap-w_eq}.
Since $\card\left( [v]^\alpha \cap [w]^\alpha \cap [e_1,e_2] \right) = 2$, we obtain that $[\widetilde u] \notin [e_1,e_2]$.
Let $\widetilde{u} = \hat{u} + \check{u}$ such that $\hat u \in [e_1,e_2]$ and $\check u \perp e_j$ $(j=1,2)$.
Clearly, by \eqref{v-cap-w_eq} the line $[\hat{u} - \check{u}]$ lies in $[v]^\alpha \cap [w]^\alpha\setminus \{[\widetilde{u}]\}$, moreover, we have $\measuredangle([\hat{u} - \check{u}],[u]) = \measuredangle([\widetilde{u}],[u]) = \beta$.
This completes our proof.
\end{proof}

We point out that if $\alpha = \tfrac{\pi}{4}$, then we have 
$$
\diam_\measuredangle\left([v]^\alpha \cap [w]^\alpha\right) = \diam_\measuredangle\left([v]^\alpha \cap [w]^\alpha \cap P_{[v],[w]} \right) = \tfrac{\pi}{2}
$$ 
for any two different lines $[v], [w]\in P(H)$, which can be verified by utilizing Bloch's representation.
Therefore \eqref{szogdiam} does not hold for general angles $\alpha$.

We conclude the paper by proving our result on complex projective spaces when the dimension of $H$ is at least three.

\begin{proof}[Proof of Theorem \ref{Umain-complex}]
\textbf{Case 1:} when $0 < \alpha < \tfrac{\pi}{4}$. 
Let $[v]$ and $[w]$ be two different lines.
By Lemma \ref{proj-0-1-element_lem} we have $\card\left([v]^\alpha \cap [w]^\alpha\right) \leq 1$ if and only if $\measuredangle([v],[w]) \geq 2\alpha$.
Let us examine the following equation:
$$
\alpha = 2 \cdot \arccos \sqrt{ \frac{\cos^2\alpha-\sin^2\left(\tfrac{\gamma}{2}\right)}{\cos^2\left(\tfrac{\gamma}{2}\right)-\sin^2\left(\tfrac{\gamma}{2}\right)} }.
$$
It is clear that as $\gamma \to 0+$, the right-hand side tends to $2\alpha$, and as $\gamma \to 2\alpha-$, the limit of the right-hand side is zero.
It was noted in the proof of Lemma \ref{proj-crucial_lem} that the right-hand side is strictly decreasing.
Therefore the above equation has a unique solution $\gamma_0 \in (0,2\alpha)$.
Now, let us observe the following:
$$
\alpha < 2 \cdot \arccos \sqrt{ \frac{\cos^2\alpha-\sin^2\left(\tfrac{\alpha}{2}\right)}{\cos^2\left(\tfrac{\alpha}{2}\right)-\sin^2\left(\tfrac{\alpha}{2}\right)} } = 2 \cdot \arccos \sqrt{ \frac{\cos^2\alpha-\sin^2\left(\tfrac{\alpha}{2}\right)}{\cos\alpha} }
$$
$$
\;\iff\;
\cos^2\alpha - \sin^2\left(\tfrac{\alpha}{2}\right) < \cos\alpha \cos^2\left(\tfrac{\alpha}{2}\right)
\;\iff\;
0 < \tfrac{1}{2}\sin^2\alpha.
$$
Since the latter inequality is valid, we actually have $\gamma_0 \in (\alpha,2\alpha)$.

Next, by Lemma \ref{proj-crucial_lem} we infer
$$
\measuredangle([v],[w]) = \gamma_0 
\;\iff\; \diam_\measuredangle\left([v]^\alpha \cap [w]^\alpha\right) = \alpha
$$
$$
\;\iff\; \card\left( [u]^\alpha\cap[v]^\alpha\cap[w]^\alpha \right) \leq 1 \; (\forall\; [u] \in [v]^\alpha\cap[w]^\alpha)
$$
$$
\;\iff\; \card\left( [x]^\alpha\cap\phi([v])^\alpha\cap\phi([w])^\alpha \right) \leq 1 \; (\forall\; [x] \in \phi([v])^\alpha\cap\phi([w])^\alpha)
$$
$$
\;\iff\; \diam_\measuredangle\left(\phi([v])^\alpha \cap \phi([w])^\alpha\right) = \alpha
\;\iff\; \measuredangle(\phi([v]),\phi([w])) = \gamma_0,
$$
i.e. $\phi$ preserves the angle $\gamma_0$ in both directions.
Lemma \ref{multiple-measuredangle_lem} implies that the angle $2\alpha$ is also preserved in both directions.
Therefore, by Lemma \ref{proj-diff-sum_lem}, our transformation preserves the angles $\gamma_0 - \alpha$ and $2\alpha - \gamma_0$ in both directions.
Since none of these angles are zero and at least one of them is less than or equal to $\tfrac{\alpha}{2}$, a straightforward induction and Lemma \ref{small-measuredangles_lem} complete the proof of this case.

\smallskip

\textbf{Case 2:} when $\alpha = \tfrac{\pi}{4}$. 
Again, let $[v]$ and $[w]$ be two different lines.
If they are orthogonal, then by Lemma \ref{proj-0-1-element_lem} we have
$$
[v]^{\pi/4}\cap[w]^{\pi/4} = \left\{\left[\sqrt{\tfrac{1}{2}} \cdot v + \lambda \sqrt{\tfrac{1}{2}} \cdot w\right] \colon \lambda\in\irC \right\}.
$$
For arbitrary $\lambda,\mu\in\irC$ we compute the following:
$$
\measuredangle \left( \left[ \sqrt{\tfrac{1}{2}} \cdot v + \lambda \sqrt{\tfrac{1}{2}} \cdot w\right], \left[\sqrt{\tfrac{1}{2}} \cdot v + \mu \sqrt{\tfrac{1}{2}} \cdot w \right] \right) = \arccos\tfrac{|1 + \lambda\overline{\mu}|}{2}.
$$
Therefore we conclude that $\card\left([u]^{\pi/4}\cap[v]^{\pi/4}\cap[w]^{\pi/4}\right) = 2$ is satisfied for every $[u] \in [v]^{\pi/4}\cap[w]^{\pi/4}$.

Now, let us assume that $\gamma := \measuredangle([v], [w]) \in \left(0,\tfrac{\pi}{2}\right)$.
We will use the notations of \eqref{vw}.
A direct calculation gives that the following set is a subset of $[v]^{\pi/4} \cap [w]^{\pi/4}$:
\begin{equation}\label{subset}
\begin{gathered}[]
M_{[v],[w]} := \Bigg\{ \Bigg[ \sqrt{ \frac{\tfrac{1}{2}-\sin^2\left(\tfrac{\gamma}{2}\right)\cos^2\varepsilon}{\cos^2\left(\tfrac{\gamma}{2}\right)-\sin^2\left(\tfrac{\gamma}{2}\right)\cos^2\varepsilon} } \cdot e_1 + \\
+ \lambda \sqrt{ \frac{\cos^2\left(\tfrac{\gamma}{2}\right) - \tfrac{1}{2}}{\cos^2\left(\tfrac{\gamma}{2}\right)-\sin^2\left(\tfrac{\gamma}{2}\right)\cos^2\varepsilon} } \cos\varepsilon\cdot e_2 
+ \sqrt{ \frac{\cos^2\left(\tfrac{\gamma}{2}\right) - \tfrac{1}{2}}{\cos^2\left(\tfrac{\gamma}{2}\right)-\sin^2\left(\tfrac{\gamma}{2}\right)\cos^2\varepsilon} }\sin\varepsilon\cdot e_3 \Bigg] \\
\colon \|e_3\| = 1, \, e_3\perp e_1, \, e_3 \perp e_2, \varepsilon\in\left[0,\tfrac{\pi}{2}\right] \, \lambda\in\{-i,i\}\Bigg\}.
\end{gathered}
\end{equation}
In particular, we have
$$
[u] := \left[ \sqrt{ \frac{\tfrac{1}{2}-\sin^2\left(\tfrac{\gamma}{2}\right)}{\cos^2\left(\tfrac{\gamma}{2}\right)-\sin^2\left(\tfrac{\gamma}{2}\right)} } \cdot e_1 + i \sqrt{ \frac{\cos^2\left(\tfrac{\gamma}{2}\right) - \tfrac{1}{2}}{\cos^2\left(\tfrac{\gamma}{2}\right)-\sin^2\left(\tfrac{\gamma}{2}\right)} } \cdot e_2 \right] \; \in \; M_{[v],[w]}
$$
and
$$
[u'] := \left[ \sqrt{ \frac{\tfrac{1}{2}-\sin^2\left(\tfrac{\gamma}{2}\right)}{\cos^2\left(\tfrac{\gamma}{2}\right)-\sin^2\left(\tfrac{\gamma}{2}\right)} } \cdot e_1 - i \sqrt{ \frac{\cos^2\left(\tfrac{\gamma}{2}\right) - \tfrac{1}{2}}{\cos^2\left(\tfrac{\gamma}{2}\right)-\sin^2\left(\tfrac{\gamma}{2}\right)} } \cdot e_2 \right] \; \in \; M_{[v],[w]}.
$$
A simple computation shows that $[u] \perp [u']$.
Since $M_{[v],[w]}$ is pathwise connected, we have a line $[\widetilde{u}] \in M_{[v],[w]}$ with $\measuredangle([\widetilde{u}],[u]) = \tfrac{\pi}{4}$.
As in the last paragraph of the proof of Lemma \ref{proj-crucial_lem}, we can see that $[\widetilde{u}] \notin P_{[e_1],[e_2]}$, and thus we can easily construct infinitely many other lines which are in $[v]^{\pi/4}\cap[w]^{\pi/4}\cap[u]^{\pi/4}$.
Therefore if two different lines $[v]$ and $[w]$ are not orthogonal, then there exists a line $[u] \in [v]^{\pi/4}\cap[w]^{\pi/4}$ such that $\card\left([u]^{\pi/4}\cap[v]^{\pi/4}\cap[w]^{\pi/4}\right) = \infty$.

By the above observations we have the following equivalence-chain:
$$
[v]\perp [w] \;\iff\; \card\left([u]^{\pi/4}\cap[v]^{\pi/4}\cap[w]^{\pi/4}\right) = 2 \;\; (\forall\; [u] \in [v]^{\pi/4}\cap[w]^{\pi/4})
$$
$$
\iff\; \card\left([x]^{\pi/4}\cap\phi([v])^{\pi/4}\cap\phi([w])^{\pi/4}\right) = 2 \;\; (\forall\; [x] \in \phi([v])^{\pi/4}\cap\phi([w])^{\pi/4})
$$
$$
\iff\; \phi([v])\perp \phi([w]).
$$
Therefore, by Uhlhorn's theorem, $\phi$ is a Wigner symmetry which complpetes the proof.
\end{proof}


\section*{Acknowledgment}
The author was also supported by the "Lend\" ulet" Program (LP2012-46/2012) of the Hungarian Academy of Sciences and by the Hungarian National Foundation for Scientific Research (OTKA), Grant No. K115383.


\end{document}